\pgfplotsset{compat=1.18}
\definecolor{acmDarkBlue}{RGB}{0,114,178}
\definecolor{acmGreen}{RGB}{0,158,115}
\definecolor{acmPink}{RGB}{204,121,167}
\definecolor{acmOrange}{RGB}{213,94,0}
\definecolor{acmYellow}{RGB}{240,228,66}
\definecolor{acmLightBlue}{RGB}{86,180,233}
\pgfplotsset{
    cycle list={
        {acmDarkBlue, mark=*},
        {acmGreen, mark=square*},
        {acmPink, mark=triangle*},
        {acmOrange, mark=diamond*},
        {acmYellow, mark=o},
        {acmLightBlue, mark=star}
    }
}
\newcommand*\circled[1]{\tikz[baseline=(char.base)]{
            \node[shape=circle,draw,inner sep=1.5pt] (char) {#1};}}
\newcommand*\dottedcircled[1]{\tikz[baseline=(char.base)]{
            \node[shape=circle,draw,dotted,inner sep=1.5pt] (char) {#1};}}
\newcommand\doubleplus{+\kern-1.3ex+\kern0.8ex}
\theoremstyle{definition}
\newtheorem{definition}{Definition}[section]
\newtheorem{theorem}[definition]{Theorem}
\newtheorem{corollary}[definition]{Corollary}
\newtheorem{lemma}[definition]{Lemma}
\setlist[itemize]{noitemsep}
\setlist[enumerate]{noitemsep}
\title{Structural and Algorithmic Results for Stable Cycles \\and Partitions in the Roommates Problem\thanks{Frederik Glitzner is supported by a Minerva Scholarship from the School of Computing Science, University of Glasgow. David Manlove is supported by the Engineering and Physical Sciences Research Council, grant number EP/X013618/1. A preliminary version of this paper appeared in the Proceedings of SAGT 2024\cite{glitzner24sagt}}}
\author[ ]{Frederik Glitzner \orcidlink{0009-0002-2815-6368} and David Manlove \orcidlink{0000-0001-6754-7308}}
\affil[ ]{School of Computing Science, University of Glasgow, Glasgow G12 8QQ, UK}
\affil[ ]{\normalfont \texttt{f.glitzner.1@research.gla.ac.uk, david.manlove@glasgow.ac.uk}}
\date{}
\begin{document}

\maketitle

\begin{abstract}
In the {\sc Stable Roommates} problem, we seek a \emph{stable matching} of the agents into pairs, in which no two agents have an incentive to deviate from their assignment. It is well known that a stable matching is unlikely to exist, but a \emph{stable partition} always does and provides a succinct certificate for the \emph{unsolvability} of an instance. Furthermore, apart from being a useful structural tool to study the problem, every stable partition corresponds to a \emph{stable half-matching}, which has applications, for example, in sports scheduling and time-sharing. 

We establish new structural results for stable partitions and show how to enumerate all stable partitions and the cycles included in such structures efficiently. We also adapt optimality criteria from stable matchings to stable partitions and give complexity and approximability results for the problems of computing such ``fair'' and ``optimal'' stable partitions. 

Through this research, we contribute to a deeper understanding of stable partitions from a combinatorial point of view, as well as the computational complexity of computing ``fair" or ``optimal" stable half-matchings in practice, closing the gap between integral and fractional stable matchings and paving the way for further applications of stable partitions to unsolvable instances and computationally hard stable matching problems.
\end{abstract}

\section{Introduction}
\label{section:introduction}

\subsection{Background}
\label{section:background}

The {\sc Stable Roommates} problem ({\sc sr}) is a classical combinatorial problem with applications to computational social choice. Consider a group of friends that want to play one hour of tennis, where everyone has preferences over who to play with. Can we match them into pairs such that no two friends prefer to play with each other rather than their assigned partners? If the problem instance $I$ admits such a \emph{stable matching} $M$ which does not admit a \emph{blocking pair} of agents which would rather be matched to each other than their partners in $M$, then we call $I$ \emph{solvable}. Otherwise, we call $I$ \emph{unsolvable}. Even an instance with as few as 4 agents may be unsolvable as \textcite{gale_shapley} showed. A solvable instance is shown in Example \ref{table:solvable}, where $a_1$ ranks the agents $a_2$, $a_5$, $a_3$, etc., in linear order. The problem instance admits the matching $M=\{\{a_1, a_2\}, \{a_3, a_4\}, \{a_5,a_6\} \}$ indicated in circles, the stability of which can be easily verified by hand by looking at all agents that are not matched to their first choice and seeing that no agent preferred to their partner would rather be matched to them than their partner. On the other hand, Example \ref{table:unsolvable} shows an example instance with six agents that does not admit any stable matching (for which we will see a certificate later).

\begin{table}[!htb]
\centering
\begin{minipage}{.45\linewidth}
\centering
    \begin{tabular}{ c | c c c c c }
    $a_1$ & \circled{$a_2$} & $a_5$ &  $a_3$ & $a_4$ & $a_6$ \\
    $a_2$ & $a_5$ & $a_3$ & \circled{$a_1$} & $a_6$ & $a_4$ \\
    $a_3$ & \circled{$a_4$} & $a_2$ & $a_6$ & $a_1$ & $a_5$ \\
    $a_4$ & $a_1$ & $a_2$ & $a_5$ & \circled{$a_3$} & $a_6$ \\
    $a_5$ & \circled{$a_6$} & $a_2$ & $a_1$ & $a_4$ & $a_3$ \\
    $a_6$ & \circled{$a_5$} & $a_3$ & $a_2$ & $a_4$ & $a_1$
    \end{tabular}
\caption{A solvable {\sc sr} instance}
\label{table:solvable}
\end{minipage}%
\begin{minipage}{.45\linewidth}
\centering
    \begin{tabular}{ c | c c c c c }
    $a_1$ & \circled{$a_2$} & $a_5$ & \dottedcircled{$a_3$} & $a_4$ & $a_6$ \\
    $a_2$ & $a_4$ & \circled{$a_3$} & \dottedcircled{$a_1$} & $a_6$ & $a_5$ \\
    $a_3$ & $a_5$ & $a_4$ & \circled{$a_1$} & \dottedcircled{$a_2$} & $a_6$ \\
    $a_4$ & $a_1$ & \circled{$a_5$} & \dottedcircled{$a_6$} & $a_2$ & $a_3$ \\
    $a_5$ & \circled{$a_6$} & $a_2$ & \dottedcircled{$a_4$} & $a_1$ & $a_3$ \\
    $a_6$ & $a_1$ & $a_2$ & $a_3$ & \circled{$a_4$} & \dottedcircled{$a_5$} 
    \end{tabular}
\caption{An unsolvable {\sc sr} instance}
\label{table:unsolvable}
\end{minipage} 
\end{table}

There are many practical applications of the {\sc sr} model. As the name suggests, it can model campus housing allocation where two students either share a room or a flat \cite{PPR08}. Furthermore, {\sc sr} can model pairwise kidney exchange markets \cite{MO12}, with centralised matching schemes existing, for example, in the US \cite{apd}, Netherlands \cite{Kleetal05ajt}, and UK \cite{dodtnhsbt}. {\sc sr} can also model peer-to-peer networks, such as file-sharing networks \cite{GMMR07}, and pair formation in chess tournaments \cite{KLM99}.

First proposed by \textcite{gale_shapley}, the problem is, historically, a non-bipartite extension well-studied in its own right of the classical {\sc Stable Marriage} problem ({\sc sm}), where the set of agents can be partitioned into two sets such that every agent from one set has complete preferences over all agents from the other set, but does not accept to be matched with anyone from their own set. The authors showed that any {\sc sm} problem instance admits at least one stable matching, and that such a matching can be found in linear time.

In {\sc sr}, a problem instance $I$ and a matching $M$ can be represented in the form of preference lists and a set of pairs or as a complete graph and a set of edges, respectively. These representations are equivalent. The {\sc sr} problem is well-studied \cite{gusfield89, matchup}, and \textcite{irving_sr} presented an algorithm to find a stable matching or decide that none exists in linear time, which is referred to as \emph{Irving's algorithm}. Furthermore, there is a wide range of algorithms and structural results for special kinds of stable matchings, including algorithms for finding stable matchings that satisfy further optimality constraints on the structure of the matching. For example, the \emph{profile} is often used in fairness measures of a matching, which is a vector $(p_1 \dots p_n)$ where $p_i$ captures the number of agents that are matched to their $i$th choice, and $n$ is the number of agents. However, naturally, these optimality properties cannot be realised if the instance is unsolvable.

\textcite{pittelirving94} presented some empirical evidence relating to solvability probability, denoted $P_n$, which is the probability that a uniformly random {\sc sr} instance admits at least one stable matching. Their results suggest that for instances with 100 to 2000 agents, $P_n$ appears to decline from around 64\% down to around 33\%. \textcite{mertens05, mertens15random} extended the results through extensive simulations and conjectured that with an even number of agents $n$, $P_n \simeq e\sqrt{2\pi^{-1}} n^{-\frac{1}{4}}$. Thus, it is clear that as $n$ grows large, stable matchings are unlikely to be a good solution concept in practice as they are unlikely to exist. In the past, many alternative solution concepts have been proposed, such as \emph{stable partitions} \cite{tan91_1}, \emph{maximum stable matchings} \cite{tan91_2}, \emph{almost-stable matchings} \cite{abraham06}, \emph{popular matchings} \cite{gupta_popsr_21}, and more, some of which leave many agents unassigned or are NP-hard to compute.

One of the fundamental {\sc sr}-related questions posed by \textcite{gusfield89} asked about the existence of a succinct certificate for the unsolvability of an instance. The question was answered positively by \textcite{tan91_1}, who generalised the notion of a stable matching to a new structure called a \emph{stable partition}. A stable partition is a cyclic permutation $\Pi$ of the agents, where every agent prefers their successor in $\Pi$ over their predecessor in $\Pi$, and no two agents strictly prefer each other over their predecessors (for example, $\Pi=(a_1 \; a_2\; a_3)(a_4\; a_5\; a_6)$ is the unique stable partition of Example \ref{table:unsolvable}, indicated in dotted and unbroken circles). Tan proved that cycles of odd length are invariant for all stable partitions of an instance and that an instance is unsolvable if and only if any stable partition contains a cycle of odd length. Over 20 years after their initial publication, \textcite{matchup} described the work on stable partitions in the 1990s as a key landmark in the progress made on the {\sc sr} problem after 1989. 

Although proposed and studied as a succinct certificate for the unsolvability of an {\sc sr} instance, the notion of a stable partition was adopted as a solution concept in itself. Remaining in the tennis analogy, a stable partition can be interpreted as an assignment of the friends into half-hour sessions, or a weekly-alternating schedule. The stability definition leads to a guaranteed solution in which every friend plays either for exactly one hour with some other friend or two half-hour sessions with two different friends, at most one agent does not play at all, and no two friends who would like to play more time with each other (in Example \ref{table:unsolvable}, agents $a_1, a_2, a_3$ would all play half-hour matches with each other and $a_4, a_5, a_6$ would do too). Note that for any solvable instance, any stable partition can be transformed into a stable matching in linear time, and for unsolvable instances, some agents are guaranteed to be in half-time partnerships.

Some have highlighted the advantages of considering truly fractional stable matchings, whereas stable matchings are inherently integral, and stable partitions can be interpreted as stable half-integral matchings \cite{Chen2021}. While stable fractional matchings might be advantageous in applications such as time-sharing \cite{timesharing}, it is unlikely to be a suitable solution concept for other applications. For example, a tennis schedule for a set of friends where everyone plays many times for varying durations, which could be very short, is unlikely to be satisfactory.

Overall, stable partitions are an important structure and tool in the study of {\sc sr} and in the field of stable matchings. They can be used to show properties about instances, algorithms, and structures and pose an efficient solution concept to solvable and unsolvable instances at the intersection of stable integral and fractional matchings. Surprisingly, though, stable partitions themselves have not received nearly as much attention as stable matchings. This motivates our study of stable partition structures in more detail, specifically the investigation of different cycle types, the algorithmics of stable cycle and partition enumeration, how profile-optimality in the case of stable matching and maximum utility in the case of stable fractional matching adapts to stable partitions, and which of these variants are efficiently solvable or approximable.

\subsection{Our Contributions} 
\label{section:contributions}

In this paper, we prove new structural results for stable partitions, allowing us to construct algorithms to enumerate stable partitions and to compute or approximate various types of optimal stable partitions.

Specifically, we characterise the structure of \emph{reduced stable partitions} (i.e., stable partitions with no cycles of even length larger than 2) in a given {\sc sr} instance $I$ by proving a bijective correspondence between the set of reduced stable partitions and the set of stable matchings of a solvable sub-instance of $I$. With this, we show how the set of all reduced stable partitions can be enumerated efficiently. 

We build on these results by considering all stable partitions, not necessarily reduced. We show that any non-reduced stable partition can be constructed from two reduced stable partitions and leverage our previous result to enumerate all stable partitions. However, a deeper structural investigation yields the non-trivial result that any predecessor-successor pair can be part of at most one cycle of length larger than 2. This proves that the union of all stable partitions of any instance with $n$ agents contains at most $O(n^2)$ cycles, and we show how these can be enumerated in $O(n^4)$ time. We then use these structural insights to improve on the previous enumeration algorithm for all stable partitions, denoted $P(I)$, and present a more efficient algorithm that runs in $O(\vert P(I)\vert n^3 + n^4)$ time.

Finally, we adapt six natural fairness measures and optimality criteria from stable matchings to stable partitions. We prove that finding a stable partition that minimises the maximum predecessor rank (called \emph{minimum-regret stable partition}) is the only tractable problem variant of these, while the remaining five are NP-hard to compute. Furthermore, although two of the NP-hard optimisation problem variants are polynomial-time approximable within a factor of 2, one of them is not approximable within any constant factor. 

We put these results into perspective through experimental results collected by applying these novel algorithms to instances with preferences sampled uniformly at random. Altogether, the results outlined above are consistent with those previously known about stable matchings, strengthening the tight correspondence between stable matchings and stable partitions observed from the beginning of this work. Overall, our results shed more light on the complexity-theoretic and structural properties of stable partitions.

\subsection{Related Work}
\label{section:relatedwork}

\paragraph{Stable Matchings} 
\textcite{irving_sr_structure, gusfield_sr_structure} published an extensive collection of early structural results, leading to a book discussing the structure and algorithms of the {\sc sm} and {\sc sr} problems by \textcite{gusfield89}. They proved, for example, that the stable matchings of a given instance form a semi-lattice structure, that the same set of agents are unassigned in every stable matching of the same instance and that, of a given {\sc sr} instance $I$ with $n$ agents, all stable pairs (union of all pairs part of some stable matching) and all stable matchings, denoted $S(I)$, can be found in $O(n^3\log n)$ and $O(\vert S(I)\vert n^2 + n^3 \log n)$ time, respectively. Later, \textcite{federegalapprox} used a {\sc 2-sat} reduction to obtain improved time complexities for the last two problems, solving them in $O(n^3)$ and $O(\vert S(I)\vert n + n^2)$ time, respectively.

With regards to the complexity of profile-optimal stable matching problems, \emph{minimum-regret} is one of few efficiently solvable ones, with a linear-time algorithm presented by \textcite{gusfield89}. On the other hand, \textcite{federegal} proved that finding an egalitarian (minimum sum of ranks of agents assigned) stable matching for {\sc sr} is NP-hard. Later, \textcite{federegalapprox}, and \textcite{gusfieldegalapprox} gave 2-approximation algorithms for the problem using a reduction to {\sc 2-sat}. The hardness result is refined by \textcite{csehshort} who established NP-hardness even when preference lists are of length at most 3. On the other hand, \textcite{chen17} showed that the problem is fixed-parameter tractable with respect to egalitarian cost. \textcite{CooperPhD} showed that for an instance with regret $r$, the problems of finding \emph{rank-maximal} (lexicographically maximal profile), \emph{generous} (lexicographically minimal inverse profile), \emph{first-choice maximal} (maximum number of first choices achieved), and  \emph{regret-minimal} (minimum number of agents assigned to $r$) stable matchings, if they exist, are all NP-hard. \textcite{simola2021profilebased} extended these results to short preference lists and presented some approximability results.

As a natural way to deal with unsolvable instances, \textcite{abraham06} introduced the problem of finding \emph{almost-stable} matchings, which are matchings with a minimum number of blocking pairs. They proved that the problem is NP-hard and not approximable within $n^{\frac{1}{2}-\varepsilon}$ for any $\varepsilon>0$, unless P$=$NP. Later, \textcite{biro12} extended the study of this problem to bounded preference lists, showing that the problem remains inapproximable within any constant factor even for preference lists of lengths at most 3.

\paragraph{Stable Partitions}
In his original paper, \textcite{tan91_1} provided a linear-time algorithm similar to Irving's algorithm to compute a stable partition, referred to as \emph{Tan's algorithm}, and showed that every {\sc sr} instance admits at least one stable partition. An alternative proof for the existence of stable half-matchings in more general settings was provided by \textcite{aharonifleiner03}, and extended by \textcite{birofleiner15} using Scarf's Lemma. \textcite{tanhsueh} considered the online version of the problem of finding a stable partition, in which a new agent $a_i$ arrives and the preference lists are modified. They showed how to update, in linear time, an existing stable partition, for the instance prior to arrival of $a_i$, to a new stable partition for the instance after $a_i$ arrives.  The entire process, in which a stable partition is built from scratch for all $n$ arriving agents, takes $O(n^3)$ time and is known as the Tan-Hsueh algorithm.

\textcite{pittel93instance} derived a range of probabilistic results about the original algorithm by Tan. Specifically, he showed, for example, that every stable partition is likely to be almost a stable matching, in the sense that at most $O(\sqrt{(n\log n)})$ members are likely to be involved in the cycles of odd length three or more. He also showed that the expected number of stable partitions is $O(\sqrt{n})$. This work was extended by \textcite{pittel19}, who showed that the expected total number of reduced stable partitions grows proportionally to $O(n^{\frac{1}{4}})$. Pittel also showed that the expected total number of cycles of odd length grows proportionally to $O(n^{\frac{1}{4}}\log n)$ and that, with high probability, the total length of all its cycles of odd length, denoted $n_{odd}$, is below $\sqrt{n}\log n$.

With regards to experimental results related to components of stable partitions, \textcite{mertens05} analysed $n_{odd}$ for unsolvable instances sampled from complete uniform random graphs and conjectured that the expected total number is $\Theta(\sqrt{n (\log n)^{-1}})$. Furthermore, \textcite{mertens15small} presented the exact probabilities for specific component types for small instances. 

Stable partitions are also used to show various other results with regards to the {\sc sr} problem. For example, \textcite{biro08} proved some useful properties of the algorithm by \textcite{tanhsueh}. A key result is that in the algorithm, the agent that arrives last gets their best-possible partner in any stable matching. Similarly, in their study of almost-stable matchings, \textcite{abraham06} used stable partitions to prove bounds on the number of blocking pairs. Finally, stable partitions are also used in exact and approximation algorithms for finding almost-stable matchings by \textcite{biro12}.

\paragraph{Similar Problems and Extensions} 
Over time, many variations of {\sc sr} have been studied, such as {\sc srt} (where ties are allowed in the preferences) \cite{irving_srt, ronn_srt, ScottPhD, kunysz_srt_strongstable}, {\sc sri} (where incomplete preferences are permitted) \cite{csehshort}, or both, denoted {\sc srti} \cite{irving_srt}. Another extension of {\sc sr} is the problem of finding stable fractional matchings. One advantage of fractional matchings is the natural integration of cardinal utilities rather than purely ordinal preferences, as highlighted by \textcite{Anshelevich2009}. \textcite{Caragiannis2019} were the first to study a combined model of stable fractional matchings with cardinal preferences in the {\sc sm} context from a computational complexity point of view. The authors established the NP-hardness of computing an optimal solution in terms of a stable fractional matching with highest overall utility achieved. As a special case of {\sc sr}, these hardness results carry over to the {\sc sri} problem. Recently, \textcite{Chen2021} expanded on these results and completed the complexity classification of optimising for overall utility or the number of fully matched agents in both the {\sc sm} and {\sc sr} settings with and without ties. They established that a stable fractional matching always exists, but that both optimisation problems are NP-hard when considering cardinal utilities. Furthermore, they showed that stable partitions are a 2-approximation of the optimisation problem of finding stable fractional matchings with the maximum number of agents fully matched in the presence of ties.

Building on theory for bipartite problem restrictions by \textcite{crawford81, kelso82}, \textcite{Fleiner2010} showed how, under some assumptions, the strict linearly-ordered preference lists can be replaced with choice functions, which are a generalisation of ordinal rankings and can capture more complex preference behaviours, while keeping the problem polynomial-time solvable. \textcite{Irving2007} introduced a many-to-many extension of {\sc sr} called {\sc Stable Fixtures} and have a linear-time algorithm to find a stable matching or report that none exists. This was later extended by \textcite{biro18} to include payments. A similar extension many-to-many extensions where parallel edges are allowed, i.e., agents can be matched to each other multiple times, was studied by \textcite{cechlarova05} and extended by \textcite{bmatchingrotations}. \textcite{deanmunshi10} investigated the even more general {\sc Stable Allocation} problem and used transformations between non-bipartite and bipartite instances to derive efficient algorithms.

As an alternative to stability, a matching $M$ is \emph{popular} if there does not exist some matching $M'$ which is preferred by more agents than those that prefer $M$. \textcite{gupta_popsr_21, faenza_pop} simultaneously proved that the problem of deciding whether a matching is popular is NP-complete. \textcite{schlotter22} extended this result to maximize the utility of a matching that is popular but admits only a few blocking pairs, proving that finding a popular matching with at most one blocking pair is already NP-hard.

\subsection{Structure of the Paper} 
\label{section:paperstructure}

In Section \ref{section:formaldefinitions}, we present a mixture of existing and new formal definitions and known results relevant to this paper. In Section \ref{section:cyclesandpartitions}, we characterise and exploit the structure of reduced stable partitions, longer cycles, and all stable partitions. In Section \ref{section:complexity}, we adapt common profile-based optimality criteria from stable matchings to stable partitions and investigate the complexity and approximability of these problems. Lastly, in Section \ref{section:experiments}, we present experimental results generated with the novel algorithms described here. We finish with a discussion of the results and some related open problems in Section \ref{section:discussion}.

\section{Formal Definitions and Preliminary Results}
\label{section:formaldefinitions}

In Section \ref{section:problemsolutiondefinitions}, we will outline some crucial concepts and their notation with regards to stable matchings, stable partitions, and their different components. Following, Section \ref{section:fairnessmeasures} will introduce some natural fairness and optimality measures previously studied in the literature for stable matchings and adapt them for our study of stable partitions.

\subsection{Stable Matchings, Partitions, and Cycles}
\label{section:problemsolutiondefinitions}

We begin this section by defining {\sc sr} instances formally as follows.

\begin{definition}[{\sc sr} Instance]
    Let $I=(A,\succ)$ be an {\sc sr} \emph{instance} where $A=\{ a_1, a_2, \dots, a_n \}$, also denoted $A(I)$, is a set of $n\in2\mathbb{N}$ agents and every agent $a_i\in A$ has a strict preference ranking or \emph{preference relation} $\succ_i$ over all other agents $a_j\in A\setminus\{a_i\}$. For stable partitions, we will assume that $\succ$ is extended such that every agent ranks themselves last.
\end{definition}

Note that in this work, we assume an even number of agents. However, it has been shown \cite{gusfield89} that most concepts and results transfer over to the case where the number of agents is odd and we accept an unmatched (but non-blocking) agent. Furthermore, there will be cases where we transform an {\sc sr} instance with \emph{complete preferences} (i.e. a ranking $\succ$ in which all agents rank all other agents but themselves) into an instance with \emph{incomplete preferences} (technically denoted by {\sc sri}), but all existing techniques that we use still apply in the {\sc sri} case. 

A solution to an {\sc sr} instance is now defined.

\begin{definition}[Stable Matchings]
    Let $I=(A,\succ)$ be an {\sc sr} instance. $M$ is a \emph{matching} of $I$ if it is an assignment of some agents in $A$ into pairs such that no agent is contained in more than one pair of $M$. A \emph{blocking pair} of a matching $M$ is a pair of two agents $a_i, a_j\in A$ such that either $a_j$ is unassigned in $M$ or $a_i \succ_j M(a_j)$, and either $a_i$ is unassigned or $a_j \succ_i M(a_i)$, where $M(a_i)$ is the partner of $a_i$ in $M$. If $M$ does not admit any blocking pair, then it is called \emph{stable}. A stable matching is \emph{complete} if it contains all $\vert A\vert$ agents.
\end{definition}

Clearly any stable matching of an instance with complete preferences and even number of agents will be complete, as there can only be an even number of pairs, and any two unmatched agents would rather be matched to each other than to be unmatched. 

\begin{definition}[Solvability and Sub-Instances]
    Let $I=(A,\succ)$ be an {\sc sr} instance. $I$ is \emph{solvable} if it admits at least one stable matching, otherwise it is \emph{unsolvable}. $I'=(A', \succ')$ is a \emph{sub-instance} or a \emph{restriction} of $I$ if $A'\subseteq A$ and $a_i \succ'_k a_j$ for all agents $a_i, a_j, a_k\in A'$ where $a_i \succ_k a_j$.
\end{definition}

As previously mentioned, stable partitions $\Pi$ are permutations on the subset of agents. Therefore, they can be written in cyclic notation where every agent is part of exactly one cycle. A formal definition follows. 

\begin{definition}[Stable Partition]
    Let $I=(A,\succ)$ be an {\sc sr} instance. Then a partition $\Pi$ is \emph{stable} if it is a permutation of $A$ and  
    \begin{enumerate}
        \item[(T1)] $\forall a_i \in A$ we have $\Pi(a_i) \succeq_i \Pi^{-1}(a_i)$, and
        \item[(T2)] $\nexists.a_i, a_j \in A, \; a_i\neq a_j,$ such that $a_j \succ_i \Pi^{-1}(a_i)$ and $a_i \succ_j \Pi^{-1}(a_j)$,
    \end{enumerate}
    where $\Pi(a_i) \succeq_i \Pi^{-1}(a_i)$ means that either $a_i$'s successor in $\Pi$ is equal to its predecessor, or the successor has a better rank than the predecessor in the preference list of $a_i$.
\end{definition}

Each stable partition $\Pi$ corresponds to a stable half-matching in the sense that each successor-predecessor pair in $\Pi$ is assigned a half-integral match (such that two agents are fully matched precisely if they are in a transposition). In the study of stable partitions, it is important to differentiate between different kinds of cycles that may or may not be part of a stable partition.

\begin{definition}[Cycles]
    Let $I$ be an {\sc sr} instance and let $C=(a_{i_1} \; a_{i_2} \;\dots\; a_{i_k})$ be an ordered collection of one or more agents in $I$. Then $C$ is a \emph{cycle} and we can \emph{apply} cycle $C$ to some agent $a_{i_j}\in C$ to get its successor in $C$, denoted by $C(a_{i_j})$. Similarly, we can apply the inverse $C^{-1}=(a_{i_1} \; a_{i_k} \; \dots \; a_{i_2})$ to $a_{i_j}$ to get its predecessor in $C$, denoted by $C^{-1}(a_{i_j})$. The same holds for sets of disjoint cycles. 
\end{definition}

We will implicitly assume two disjoint cycles $C_1, C_2$ can be added under concatenation $C_1\doubleplus C_2$, which will simply be denoted by $C_1C_2$ when obvious. A cycle $C_1$ can be removed from a collection of cycles $C$ using rules of set difference, simply denoted by $C\setminus C_1$.

\begin{definition}[Stable Cycles]
    Let $I$ be an {\sc sr} instance and let $C$ be a cycle. If there exists some stable partition $\Pi$ of $I$ such that $C\in \Pi$, then $C$ is a \emph{stable cycle} of $I$. A stable cycle of odd length is called an \emph{odd cycle}, of length 2 a \emph{transposition}, and of even length larger than 2 an \emph{even cycle} of $I$. Clearly, for every stable partition $\Pi$ of $I$, $\Pi=\mathcal{T} \mathcal{E} \mathcal{O}_I$, where $\mathcal T$ denotes the transpositions, $\mathcal E$ the even cycles, and $\mathcal{O}_I$ the odd cycles of $\Pi$. Let $A(C)$ denote the set of agents contained in a single cycle or a collection of cycles $C$. Unless specified otherwise, let $n_1=\vert A(\mathcal T \doubleplus \mathcal E)\vert$ and $n_2=\vert A(\mathcal{O}_I)\vert$. 
\end{definition}

\begin{definition}[Reduced Stable Partition]
    A cycle is called \emph{reduced} if its length is either 2 or odd, otherwise it is called \emph{non-reduced}. Similarly, a stable partition $\Pi$ is reduced if it consists only of reduced cycles, and non-reduced if not.
\end{definition}

We will also need a definition for sub-sequences of cycles.

\begin{definition}[Partial Cycles]
    Let $I$ be an {\sc sr} instance and let $C=(a_{i_1} \; a_{i_2}  \dots a_{i_k} \dots)$ be a cycle where only some agents $a_{i_1} \dots a_{i_k}$ and their position in the cycle are known. Then we call $C$ a \emph{partial cycle} and if there exist agents $a_{i_{k+1}}\dots a_{i_r}$ such that $(a_{i_1} \; a_{i_2} \dots a_{i_k} \; a_{i_{k+1}} \dots a_{i_r})$ is a stable cycle, then $C$ is a \emph{partial stable cycle} and the sequence of agents $a_{i_{k+1}} \dots a_{i_r}$ is a \emph{completion} of $C$.
\end{definition}

Now we can introduce the notion of blocking agents and pairs for stable partitions. Note that the definition is very close to that of blocking pairs of stable matchings, and it should be clear from the context which definition applies.

\begin{definition}[Blocking Pairs and Agents]
    Let $I=(A,\succ)$ be an {\sc sr} instance and let $\Pi$ be a partition of $A$. Then an agent $a_i\in A$ \emph{blocks} $\Pi$ (is a \emph{blocking agent}) if $\Pi^{-1}(a_i)\succ_i \Pi(a_i)$. Two agents $a_i, a_j$ \emph{block} (form a \emph{blocking pair) of} $\Pi$ if $a_j\succ_i \Pi^{-1}(a_i)$ and $a_i\succ_j \Pi^{-1}(a_j)$. In the second case, we can also say that (wlog) $a_i$ \emph{blocks} $\Pi$ \emph{with} $a_j$.
\end{definition}

The following has been shown and will also be assumed throughout this work.

\begin{theorem}[\cite{tan91_1, tan91_2}]
    The following properties hold for any {\sc sr} instance $I$.
    \begin{itemize}
        \item $I$ admits at least one stable partition and any non-reduced stable partition can be transformed into a reduced one by breaking down its longer even length cycles into collections of transpositions.
        \item Any two stable partitions $\Pi_a, \Pi_b$ of $I$ contain exactly the same odd cycles, not only having the same agents involved but also in the same order within the cycles.
        \item $I$ admits a complete stable matching if and only if no stable partition of $I$ contains an odd cycle.
    \end{itemize}
\end{theorem}

Note that if an {\sc sr} instance $I$ admits a stable matching $M=\{\{a_{i_1}, a_{i_2}\}, \dots, \{a_{i_{2k-1}}, a_{i_{2k}}\}\}$ consisting of $k$ pairs, then it can be denoted interchangeably with its induced collection of transpositions $\Pi=(a_{i_1}$ $a_{i_2})\dots (a_{i_{2k-1}}$ $a_{i_{2k}})$, which is a stable partition of $I$.

\subsection{Optimality and Fairness of Matchings and Partitions}
\label{section:fairnessmeasures}

As previously noted, there are different measures of fairness and optimality for stable matchings considered in the literature. The main measure is the \emph{profile} $p(M)$ of a matching $M$, a vector $(p_1 \dots p_{n-1})$, where $p_i$ counts the number of agents assigned to their $i$th choice in $M$. We define a similar measure for stable partitions, accounting for all half-assignments in the associated half-matching.

\begin{definition}
    Let $I$ be an {\sc sr} instance with $n$ agents and $\Pi$ a stable partition of $I$. An agent $a_i$ has \emph{rank} $i$ for an agent $a_j$ if it appears at position $i-1$ in $a_j$'s preference list. The \emph{successor profile}, denoted $p^s(\Pi) = (p_1^s \; \dots \; p_n^s)$, captures the number of agents $p_i^s$ whose successor in $\Pi$ has rank $i$. Analogously, let $p^p(\Pi)$ denote the \emph{predecessor profile} of $\Pi$. Finally, let $p(\Pi)= (p_1 \; \dots \; p_n)$ denote the combined \emph{profile} of $\Pi$, where $p_i=p_i^s+p_i^p$. Similarly, for a stable cycle $C$, let $p^s(C), p^p(C),$ and $p(C)$ denote the profiles containing the number and positions of predecessors, successors, or both achieved in $C$.
\end{definition}

For some stable partition $\Pi$ of $n$ agents, the sum of all entries in $p^s(\Pi)$ and $p^p(\Pi)$ add up to $n$ each, such that those in $p(\Pi)$ add up to $2n$, one for each half-assignment and counting a full assignment as two half-assignments. 

One simple measure of optimality for a stable matching $M$ of a solvable instance $I$ is its \emph{regret}, which is the rank of the worst assigned agent in $M$. Similarly, the regret of $I$ is the minimum regret over all its stable matchings. We will consider the same for stable partitions.

\begin{definition}
    Let $\Pi$ be a stable partition and let $p(\Pi)$ be its profile. Then the \emph{regret} of $\Pi$, denoted $r(\Pi)$, is the rank of the worst-ranked agent assigned in $\Pi$, which is the index of the last positive entry in $p(\Pi)$. Similarly, for $C$ some cycle of agents, let $r(C)$ be the rank of the worst-ranked agent assigned in $C$.
\end{definition} 

Finally, another important optimality measure of a stable matching $M$ is its \emph{cost}, which is a weighted sum of its profile. Again, the same can be defined for stable partitions. 

\begin{definition}
    Let $\Pi$ be a stable partition and let $p(\Pi)$ be its profile. Then the \emph{cost} of $\Pi$, denoted $c(\Pi)$, is the sum of ranks achieved in $\Pi$, averaged over the predecessor and successor assignments. Specifically, we define $c(\Pi) = \frac{1}{2} \sum_{1\leq i < n} p_i(\Pi)*i$.
\end{definition}

We include the factor of half for consistency with stable matchings, as the profile entries sum up to $2n$. This constant factor does not affect our optimisation problems. For the same reason, we also do not sum up to $n$, as an agent assigned to someone of rank $n$ is assigned to itself, which can never happen in stable matchings. As any two self-assigned agents would block in an instance with complete preference lists, there can be at most one self-assigned agent in a stable partition, and this is an odd cycle and thus invariant, so this again does not affect our optimisation problems.

Some associated problem variants for stable matchings are the following. 

\begin{definition}
    Let $I$ be a solvable {\sc sr} instance. A \emph{minimum-regret} stable matching is a stable matching of $I$ with regret $r$ such that no other stable matching has lower regret. A \emph{regret-minimal} stable matching is a minimum-regret stable matching such that no other minimum-regret stable matching satisfies fewer choices of rank $r$. A \emph{first-choice maximal} stable matching is a stable matching of $I$ such that no other stable matching satisfies more first choices. Finally, a (egalitarian) \emph{cost-optimal} stable matching of $I$ is a stable matching such that no other stable matching has a lower cost.
\end{definition}

With our definitions of profile, regret, and cost for stable partitions, we can define analogous optimality measures for stable partitions (and thus stable half-matchings). For some, we need an order on profiles as defined below.

\begin{definition}
    Let $p=(p_1 \dots p_n),p'=(p_1' \dots p_n')$ be two profile vectors. Then $p=p'$ if $p_i=p_i'$ for all $1\leq i\leq n$. If $p\neq p'$, let $k$ be the first position in which they differ. We define $p\succ p'$ if $p_k>p_k'$. Furthermore, we define $p\succeq p'$ if either $p\succ p'$ or $p=p'$. Finally, we will call $p^{rev}=(p_n \dots p_1)$ the \emph{reverse profile} of $p$.
\end{definition}

The optimality criteria are defined as follows.

\begin{definition}
    Let $\Pi$ be a stable partition of some {\sc sr} instance $I$ with $n$ agents and regret $r$ and let $p(\Pi)$ be its profile. Then $\Pi$ is a 
    \begin{itemize}
        \item \emph{minimum-regret} stable partition of $I$ if, for all stable partitions $\Pi'$ of $I$, we have $r(\Pi)\leq r(\Pi')$,
        \item \emph{first-choice maximal} stable partition of $I$ if, for all stable partitions $\Pi'$ of $I$ with profile $p(\Pi')$, we have $p_1(\Pi) \geq p_1(\Pi')$,
        \item \emph{rank-maximal} stable partition of $I$ if, for all stable partitions $\Pi'$ of $I$ with profile $p(\Pi')$, we have $p(\Pi) \succeq p(\Pi')$,
        \item \emph{regret-minimal} stable partition of $I$ if $\Pi$ is a minimum-regret stable partition and, for all minimum regret stable partitions $\Pi'$ of $I$ with profile $p(\Pi')$, we have $p_r(\Pi) \leq p_r(\Pi')$,
        \item \emph{generous} stable partition of $I$ if, for all stable partitions $\Pi'$ of $I$ with profile $p(\Pi')$, we have $p^{rev}(\Pi')\succeq p^{rev}(\Pi)$, and
        \item \emph{egalitarian} stable partition of $I$ if, for all stable partitions $\Pi'$ of $I$, we have $c(\Pi)\leq c(\Pi')$.
    \end{itemize}
\end{definition}

\section{Structure of Stable Cycles, Transpositions, and Partitions}
\label{section:cyclesandpartitions}

First, we will formalise and characterise the structure of reduced stable partitions in Section \ref{section:reducedpartitionsstructure}. Then, naturally, we will extend the study to all stable partitions. This requires the study of stable cycles. After showing various results about the structure of even cycles in Section \ref{section:longercycles}, we focus on algorithmic questions related to the enumeration of all reduced and non-reduced stable cycles in Section \ref{section:allcycles}, finishing this section with an enumeration algorithm for all stable partitions in Section \ref{section:enumeratingpartitions} which ties together most of our results.

\subsection{Structure of Reduced Stable Partitions}
\label{section:reducedpartitionsstructure}

Let $I=(A,\succ)$ be an instance of {\sc sr} and let $\Pi_i$ be any reduced stable partition of $I$. We know that its odd cycles $\mathcal O_I$ are invariant under all stable partitions of $I$, while the transpositions might not be. Let $I_E=(A', \succ')$, where $A'=A \setminus A(\mathcal O_I)$ and $\succ'$ is the restriction of $\succ$ to the agents in $A'$, be the instance $I$ restricted to agents in transpositions. 

\begin{restatable}{lemma}{reducedstable}
\label{lemma:reducedstable}
    Let $\Pi_i$ be any reduced stable partition of an {\sc sr} instance $I$. Then $\Pi_i$ is the union $M_i\mathcal{O}_I$ of its odd cycles $\mathcal{O}_I$ and a perfect stable matching $M_i$ of the sub-instance $I_E$ of $I$ constructed above as a cyclic permutation of transpositions.
\end{restatable}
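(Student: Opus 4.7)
The plan is to unpack the statement into three easily separable claims and verify each in turn. First, I would observe that since $\Pi_i$ is reduced, every one of its cycles is either an odd cycle or a transposition, so $\Pi_i = \mathcal T_i \cup \mathcal O_i$ where $\mathcal T_i$ is the set of transpositions and $\mathcal O_i$ the set of odd cycles. By the previously cited theorem of Tan, the odd cycles of any two stable partitions of $I$ coincide (in content and internal order), so $\mathcal O_i = \mathcal O_I$. Setting $M_i := \mathcal T_i$, it is then immediate that $A(M_i) = A \setminus A(\mathcal O_I) = A'$, so $M_i$ covers every agent of $I_E$ and hence is a perfect matching of $I_E$ (note $|A'|$ is even since it equals twice the number of transpositions).

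Second, I would verify that $M_i$ is stable in $I_E$. This is the only non-bookkeeping step, and I would prove it by contrapositive: assume towards a contradiction that $\{a_p, a_q\}$ is a blocking pair of $M_i$ in $I_E$. Then $a_p, a_q \in A'$, both are matched in $M_i$ (since $M_i$ is perfect on $A'$), and $a_q \succ'_p M_i(a_p)$, $a_p \succ'_q M_i(a_q)$. Because $\succ'$ is just the restriction of $\succ$ to $A'$ and all four agents involved lie in $A'$, the same strict comparisons hold under $\succ$. Crucially, each agent of $A'$ lies in a transposition of $\Pi_i$, so $\Pi_i^{-1}(a_p) = \Pi_i(a_p) = M_i(a_p)$ and likewise for $a_q$. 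Substituting, we obtain $a_q \succ_p \Pi_i^{-1}(a_p)$ and $a_p \succ_q \Pi_i^{-1}(a_q)$, contradicting condition (T2) of the stability of $\Pi_i$.

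Finally, the "cyclic permutation of transpositions" phrase is simply the observation that a perfect matching can be written interchangeably as a set of pairs $\{\{x,y\},\dots\}$ or as the involution $(x\ y)\dots$ in cycle notation, so nothing further is needed beyond the naming convention already fixed in Section~\ref{section:problemsolutiondefinitions}. Conversely, the union $M_i \cup \mathcal O_I$ recovers $\Pi_i$ because the two sets of cycles partition $A$ and together list every cycle of $\Pi_i$.

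The only conceptual subtlety, and hence the main obstacle, is the stability step: one must be careful that a candidate blocking pair in $I_E$ genuinely lifts to a violation of (T2) in $I$. This works precisely because every agent of $A'$ sits in a transposition of the reduced partition, so predecessor and successor coincide with the matching partner. Were $\Pi_i$ allowed to be non-reduced, an agent in an even cycle would have distinct predecessor and successor and the lift would fail — which is exactly why the hypothesis of reducedness is essential and why the converse direction (recovering reduced stable partitions from stable matchings of $I_E$) should be expected to hold as well, presumably in the companion result that establishes the announced bijection.
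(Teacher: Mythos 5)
Your proposal is correct and follows essentially the same route as the paper: the core step in both is lifting a putative blocking pair of $M_i$ in $I_E$ to a violation of (T2) for $\Pi_i$, using that every agent of $A'$ lies in a transposition so that predecessor, successor, and matching partner coincide. The only cosmetic difference is that you derive perfection directly from the fact that the transpositions of a reduced partition cover all of $A' = A\setminus A(\mathcal{O}_I)$, whereas the paper deduces it from stability plus completeness of the preference lists in $I_E$; both are valid.
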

\begin{proof}  
    By definition of reduced stable partitions, $\Pi_i$ is the union $M_i\mathcal O_I$ of transpositions $M_i$ and its invariant odd cycles $\mathcal O_I$. Note first that if $M_i=\varnothing$, then $A(\mathcal{O}_I)=A$, so $A'=\varnothing$ and $I_E=\varnothing$. Therefore, $M_i$ is trivially perfect and stable in $I_E$.

    Otherwise, suppose that $M_i\neq \varnothing$ with $\vert A'\vert \in 2\mathbb{N}$, where $M_i$ is not a stable matching in $I_E$. Note first that $M_i$ is a matching in $I_E$ as the transpositions are disjoint due to $\Pi_i$ being a partition, and only include agents in $A'$ by construction. Then either some partners $a_r, a_s$ in $M_i$ would not find each other acceptable anymore in $\succ'$, or there would be at least one blocking pair consisting of two agents $a_r, a_s$ in $A'$ that prefer each other over their partners in $M_i$ in the preference relation $\succ'$.

    Consider first the case where some $\{a_r, a_s\} \in M_i$ would not find each other acceptable anymore in $\succ'$. This cannot happen, as $A(M_i)\cap A(\mathcal{O}_I)=\varnothing$, so the pair would not be removed in the construction of $\succ'$, where $\succ'$ is restricted to agents not in $\mathcal{O}_I$.

    Furthermore, $M_i$ cannot admit a blocking pair in $I_E$. Suppose that $\{a_r, a_s\} \subseteq A'$ was a blocking pair in $I_E$, then $a_r\succ'_s M_i(a_s)$ and $a_s\succ'_r M_i(a_r)$. However, because all agents in $A'$ are contained in transpositions in $\Pi_i$, we know that $M_i(a_r)=\Pi_i(a_r)=\Pi_i^{-1}(a_r)$ and $M_i(a_s)=\Pi_i(a_s)=\Pi_i^{-1}(a_s)$. Because $\succ'$ is contained in $\succ$, this would let us conclude that $a_r\succ_s \Pi_i^{-1}(a_s)$ and $a_s\succ_r \Pi_i^{-1}(a_r)$, a contradiction of stability condition T2 of $\Pi_i$.

    Finally, every agent has complete preference lists in $I_E$, so if $M_i$ is stable, it must be a perfect matching. Otherwise, there would be two unmatched agents in $A'$ that would rather be matched to each other than to be unmatched, contradicting the stability of $M_i$. 
\end{proof}

On the other hand, we are not guaranteed that there is a bijection between the set $P(I)$ of all reduced stable partitions of $I$ and the set $S(I_E)$ of all stable matchings of $I_E$. Consider the instance $I$ in Example \ref{table:bijectionfail}. It admits the reduced stable partition $\Pi_0=(a_1$ $a_2)(a_3$ $a_4)(a_5$ $a_6$ $a_7)(a_8$ $a_9$ $a_{10})$. Now clearly $\mathcal O_I = (a_5$ $a_6$ $a_7)(a_8$ $a_9$ $a_{10})$ and $I_E=(\{a_1, a_2, a_3, a_4\}, \succ')$. However, $I_E$ admits the stable matchings $S(I_E) = \{ M_1, M_2 \}$ where $M_1 = \{ \{a_1, a_2\}, \{a_3, a_4\} \}$ and $M_2= \{ \{a_1,a_3\}, \{a_2,a_4\} \}$. While $M_1 \mathcal O_I$ is the original stable partition $\Pi_0$, $M_2\mathcal O_I$ does not correspond to a stable partition of $I$ as agents $\{a_1, a_5\}$ would violate condition T2. Specifically, $a_5 \succ_1 a_3$ and $a_1 \succ_5 a_7$. In Example \ref{table:bijectionfail}, this is indicated by $M_2$ in dotted circles and the violating pair in unbroken circles. 

\begin{table}[!htb]
\centering
    \begin{tabular}{ c | c c c c }
    $a_1$ & \underline{$\mathbf{a_2}$} & \circled{$a_5$} & \dottedcircled{$a_3$} & \dots \\
    $a_2$ & \dottedcircled{$a_4$} & \underline{$\mathbf{a_1}$} & \dots &  \\
    $a_3$ & \dottedcircled{$a_1$} & \underline{$\mathbf{a_4}$} & \dots &  \\
    $a_4$ & \underline{$\mathbf{a_3}$} & \dottedcircled{$a_2$} & \dots &  \\
    $a_5$ & \circled{$a_1$} & $\mathbf{a_6}$ & \underline{$a_7$} & \dots \\
    $a_6$ & $\mathbf{a_7}$ & \underline{$a_5$} & \dots &  \\
    $a_7$ & $\mathbf{a_5}$ & \underline{$a_6$} & \dots &  \\
    $a_8$ & $\mathbf{a_9}$ & \underline{$a_{10}$} & \dots &  \\
    $a_9$ & $\mathbf{a_{10}}$ & \underline{$a_8$} & \dots &  \\
    $a_{10}$ & $\mathbf{a_8}$ & \underline{$a_9$} & \dots &  \\
    \end{tabular}
\caption{{\sc sr} instance with stable $\Pi_0$ as \textbf{successors} and \underline{predecessors}}
\label{table:bijectionfail}
\end{table}

To deal with the case where enumerating the stable matchings of $I_E$ is not enough, we can reduce $I_E$ further. Specifically, all pairs of agents $\{a_i, a_j\} \subseteq A'$ need to be deleted (made mutually unacceptable) from the preference relation $\succ'$ whenever there exists an agent $a_r\in A(\mathcal O_I)$ such that $a_i \succ_r \mathcal O_I^{-1}(a_r)$ and $a_r \succ_i a_j$. This is equivalent to truncating the preference list of $a_i$ at the first agent $a_s\in A'$ such that $a_s \succ_i a_r$, for the best-ranked such $a_r\in A(\mathcal O_I)$ of $a_i$, and removing $a_i$ from the preference list of every agent in $A'$ that is less preferred by $a_i$. The deletion operations do not reduce the set of agents, just the preference relation, so we let the resulting sub-instance of $I_E$ be denoted by $I_T=(A', \succ'')$ (Algorithm \ref{alg:itconstruction} captures all steps to compute $I_T$ from $I$). We assume that preference comparison and partner lookup can be done in constant time, such that the procedure takes at most linear time overall.

\begin{algorithm}[!htb]
\renewcommand{\algorithmicrequire}{\textbf{Input:}}
\renewcommand{\algorithmicensure}{\textbf{Output:}}

    \begin{algorithmic}[1]

    \Require{$I$ : an {\sc sr} instance}
    \Ensure{$I_T$ : a solvable sub-instance of $I$}

    \State $I_T \gets I$
    \State $M_0\mathcal{O}_I \gets$ \texttt{ReducedStablePartition$(I)$}

    \For{$a_r$ in $A(\mathcal{O_I})$}
        \State {remove $a_r$ from $I_T$}
    \EndFor

    \For{$a_i$ in $A(M_0)$}
        \State delete $\gets$ False
        \For{$a_j$ in preference list of $a_i$ in $I$ (in order or preference)}
            \If{$a_j$ in $A(\mathcal{O}_I)$ \textbf{and} $a_i\succ_j \mathcal{O}^{-1}(a_j)$}
                \State delete $\gets$ True
            \EndIf
            \If{delete \textbf{and} $\{a_i, a_j\}$ acceptable in $I_T$}
                \State make $\{a_i, a_j\}$ mutually unacceptable in $I_T$
            \EndIf
        \EndFor
    \EndFor

    \State\Return{$I_T$}

    \end{algorithmic}
    \caption{\texttt{ConstructI}$_T(I)$, constructs $I_T$ from {\sc sr} instance $I$}
    \label{alg:itconstruction}
\end{algorithm}

Although these deletions might destroy some stable matchings admitted by $I_E$, we must ensure that $I_T$ still admits at least one.

\begin{restatable}{lemma}{mzeroisstable}
    \label{lemma:m0isstable}
    Let $\Pi_i=M_i\mathcal{O}_I$ be any reduced stable partition of an {\sc sr} instance $I$ and let $I_T$ be its sub-instance as constructed above. Then $M_i$ is a perfect stable matching of $I_T$.
\end{restatable}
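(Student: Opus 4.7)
The plan is to leverage Lemma \ref{lemma:reducedstable}, which already establishes that $M_i$ is a perfect stable matching of $I_E$, and then argue that the additional deletions used to construct $I_T$ from $I_E$ (i) do not remove any pair of $M_i$, and (ii) cannot introduce a blocking pair. Since $I_T$ has the same agent set $A'$ as $I_E$, perfection transfers automatically once we know $M_i$ is still a matching of $I_T$.

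First, I would show no edge of $M_i$ is deleted in the construction of $I_T$. Suppose for a contradiction that $\{a_i, a_j\} \in M_i$ is made mutually unacceptable; then by the construction rule there exists an odd-cycle agent $a_r \in A(\mathcal{O}_I)$ with $a_i \succ_r \mathcal{O}_I^{-1}(a_r)$ and $a_r \succ_i a_j$ (up to swapping $a_i$ and $a_j$). Since $\{a_i,a_j\}$ is a transposition of $\Pi_i$, we have $\Pi_i^{-1}(a_i) = a_j$, so $a_r \succ_i \Pi_i^{-1}(a_i)$. Likewise $\mathcal{O}_I^{-1}(a_r) = \Pi_i^{-1}(a_r)$ since $a_r$ lies in an invariant odd cycle of $\Pi_i$, so $a_i \succ_r \Pi_i^{-1}(a_r)$. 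Hence $\{a_i, a_r\}$ violates condition T2 for $\Pi_i$, contradicting that $\Pi_i$ is a stable partition of $I$.

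Next, I would verify stability of $M_i$ in $I_T$. Any blocking pair $\{a_s, a_t\}$ of $M_i$ in $I_T$ must in particular be mutually acceptable in $I_T$, and the preference relation $\succ''$ of $I_T$ is a restriction of the preference relation $\succ'$ of $I_E$; therefore $\{a_s, a_t\}$ would also be acceptable in $I_E$ and the same rankings would show it blocks $M_i$ in $I_E$, contradicting Lemma \ref{lemma:reducedstable}. Finally, because $I_T$ has agent set $A' = A(M_i)$ and $M_i$ matches every agent of $A'$, the matching $M_i$ is perfect in $I_T$.

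The main obstacle is the first step, which needs the careful identification of the would-be deleted pair as a T2-violator using the invariance of the odd cycles; once that is in place, the second and third steps are essentially direct consequences of the fact that passing from $I_E$ to $I_T$ only restricts preferences without changing the agent set.
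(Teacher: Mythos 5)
Your proposal is correct and follows essentially the same route as the paper's proof: invoke Lemma \ref{lemma:reducedstable} for stability and perfection in $I_E$, observe that restricting $\succ'$ to $\succ''$ cannot create new blocking pairs, and show that deleting an edge of $M_i$ would force a T2 violation between $a_i$ and the odd-cycle agent $a_r$. Your write-up merely makes the T2 contradiction slightly more explicit (identifying $\Pi_i^{-1}(a_i)=a_j$ and $\Pi_i^{-1}(a_r)=\mathcal{O}_I^{-1}(a_r)$) than the paper does.
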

\begin{proof}
    By Lemma \ref{lemma:reducedstable}, $M_i$ is a perfect stable matching of $I_E$. As $\succ''$ is contained in $\succ'$, $M_i$ cannot admit more blocking pairs in $I_T$ than in $I_E$. Thus, if $M_i$ were not a stable matching in $I_T$, then this would have to be because $M_i$ is not a valid matching. This would only occur if two agents in $M_i$ were made mutually unacceptable in the construction of $I_T$ from $I_E$.

    Suppose that $\{a_i, a_j\} \in M_i$ find each other acceptable in $\succ'$, but do not find each other acceptable anymore in $\succ''$. Then their respective preference list entries would have been removed in the deletion operations. Without loss of generality (wlog), suppose this was because there exists some agent $a_r\in \mathcal{O}_I$ such that $a_i \succ_r \mathcal{O}_I^{-1}(a_r)$ and $a_r \succ_i a_j$. Then $\{a_i, a_j\}$ could not belong to $M_i$ as $a_i$ and $a_r$ would contradict stability condition T2 with respect to $\Pi_i$. Therefore, matched agents in $M_i$ still consider each other mutually acceptable in $\succ''$.

    Thus, $M_i$ is also a perfect stable matching of $I_T$.
\end{proof}

We can now show the correspondence between the stable matchings of $I_T$ and the reduced stable partitions of $I$.

\begin{restatable}{theorem}{correspondence}
\label{theorem:correspondence}
    $M_i$ is a perfect stable matching of $I_T$ if and only if $M_i\mathcal{O}_I$ is a reduced stable partition of $I$, where $I_T$ is constructed as above and $\mathcal{O}_I$ are the invariant odd cycles of $I$.
\end{restatable}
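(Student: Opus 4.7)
The biconditional splits naturally into two directions. The forward direction $(\Leftarrow)$ is exactly the content of Lemma \ref{lemma:m0isstable} and needs no further argument. For the converse $(\Rightarrow)$, I would fix any perfect stable matching $M_i$ of $I_T$, set $\Pi = M_i \mathcal{O}_I$, and verify that $\Pi$ is a reduced stable partition of $I$.

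Since $I_T$ has agent set $A' = A \setminus A(\mathcal{O}_I)$ and $M_i$ is perfect on $A'$, the cycles of $\Pi$ partition $A$; they consist of transpositions coming from $M_i$ together with the odd cycles in $\mathcal{O}_I$, so $\Pi$ is a permutation of $A$ that is reduced by construction. Condition T1 is immediate: transpositions satisfy $\Pi(a_i) = \Pi^{-1}(a_i)$, and for $a_r \in A(\mathcal{O}_I)$ the inequality $\mathcal{O}_I(a_r) \succeq_r \mathcal{O}_I^{-1}(a_r)$ is inherited from any reduced stable partition of $I$ from which $\mathcal{O}_I$ was extracted (which exists by assumption and whose odd-cycle part is invariant).

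The heart of the argument is verifying condition T2, which I would carry out by contradiction, splitting on where a putative blocking pair $\{a_i, a_j\}$ of $\Pi$ lies. If both agents are in $A(\mathcal{O}_I)$, the pair would already violate T2 for the original reduced stable partition whose odd cycles are $\mathcal{O}_I$, which is impossible. In the mixed case $a_i \in A(M_i)$, $a_r \in A(\mathcal{O}_I)$, a blocking pair would give $a_r \succ_i M_i(a_i) = \Pi^{-1}(a_i)$ and $a_i \succ_r \mathcal{O}_I^{-1}(a_r) = \Pi^{-1}(a_r)$; but then Algorithm \ref{alg:itconstruction} would have deleted the pair $\{a_i, M_i(a_i)\}$ from $I_T$, contradicting the fact that $M_i$ is a matching in $I_T$. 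If both agents lie in $A(M_i)$ and are still mutually acceptable in $I_T$, they block $M_i$ in $I_T$, contradicting its stability; if instead the pair $\{a_i, a_j\}$ was deleted during the $I_E \to I_T$ construction, then (wlog) there is some $a_r \in A(\mathcal{O}_I)$ with $a_i \succ_r \mathcal{O}_I^{-1}(a_r)$ and $a_r \succ_i a_j \succ_i M_i(a_i)$, which by the same deletion rule forces $\{a_i, M_i(a_i)\}$ to have been deleted too, again contradicting $M_i$ being a matching of $I_T$.

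The part I expect to require the most care is the mixed case together with the ``deleted pair'' subcase of the same-side case: both hinge on reading the deletion trigger of Algorithm \ref{alg:itconstruction} as exactly the contrapositive of T2 across the boundary $A(M_i)\mid A(\mathcal{O}_I)$. Once that correspondence is stated precisely, the case analysis collapses into a short chain of preference comparisons, and together with the T1 check above, it establishes that $\Pi = M_i\mathcal{O}_I$ is a reduced stable partition of $I$.
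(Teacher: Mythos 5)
Your proof is correct and follows essentially the same route as the paper's: the direction from reduced stable partition to perfect stable matching of $I_T$ is delegated to Lemma \ref{lemma:m0isstable}, and the converse is established by verifying that $M_i\mathcal{O}_I$ is a partition, reduced, and satisfies T1 and T2 via the same three-way case split on where a putative blocking pair could lie. Your explicit treatment of the subcase in which a blocking pair inside $A(M_i)$ was made mutually unacceptable during the $I_E \to I_T$ deletions is a detail the paper's proof glosses over (it appeals directly to stability of $M_i$ in $I_T$), and your observation that such a deletion would force $\{a_i, M_i(a_i)\}$ to have been deleted as well closes that small gap correctly.
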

\begin{proof}
    First, suppose that $M_i$ is a perfect stable matching of $I_T$. We want to show that $M_i\mathcal{O}_I$ is a reduced stable partition of $I$. Suppose not, then either $M_i\mathcal{O}_I$ is not a partition, not reduced, or it violates one of the stability conditions T1 or T2. 
    
    $M_i\mathcal{O}_I$ would not be a partition of $I$ if either there was an agent $a_r\in A$ such that $a_r\notin A(M_i\mathcal{O}_I)$ or if there was an agent $a_r\in A(M_i\mathcal{O}_I)$ that is in multiple different cycles. The first case cannot happen because $A = A' \uplus A(\mathcal{O}_I)$ and by Lemma \ref{lemma:m0isstable}, $I_T$ admits a perfect stable matching. All stable matchings admitted by an {\sc sr} instance have the same size, so every agent in $A'$ is matched in $M_i$. Thus, every agent $a_r\in A$ is either in $M_i$ or in $\mathcal{O}_I$ and so $a_r \in A(M_i\mathcal{O}_I)$. The second case cannot happen by construction. If $a_r \in\mathcal{O}_I$, then by construction of $\mathcal{O}_I$ and $I_E$, $a_r$ only appears once in $\mathcal{O}_I$ and $a_r\notin A'$, so $a_r\notin M_i$. If $a_r\in A'$, then it was not deleted during the construction of $I_E$ and so $a_r\notin \mathcal{O}_I$. Because $M_i$ is a matching, $a_r$ will only be matched to one other agent and thus appear only once in the induced collection of transpositions. This also confirms that the partition is reduced because all cycles are either of odd length or of length two.

    Condition T1 cannot be violated by $M_i\mathcal{O}_I$ as $\mathcal{O}_I$ was found in such a way that T1 holds and because $M_i$ is a collection of transpositions which means that for all agents $a_r \in A'$, $a_r$'s predecessor is also their successor. 
    
    Finally, suppose that $M_i\mathcal{O}_I$ violates T2 with a pair of agents $a_r,a_s \in A = A' \uplus A(\mathcal{O}_I)$, then there are three cases. First, suppose that $a_r, a_s\in A'$. This means that the agents are not partners in $M_i$ but would rather be matched to each other. This cannot happen due to stability of $M_i$ in $I_T$. Now suppose that $a_r, a_s \in A(\mathcal{O}_I)$. This cannot happen, as $\mathcal{O}_I$ was chosen in such a way that it satisfies T2. Lastly, suppose that, wlog, $a_r\in A'$ and $a_s\in A(\mathcal{O}_I)$. Then $a_s \succ_r ( M_i\mathcal{O}_I)^{-1}(a_r)=M_i(a_r)$ and $a_r \succ_s (M_i\mathcal{O}_I)^{-1}(a_s) = \mathcal{O}_I^{-1}(a_s)$. However, in this case, the pair $\{a_r, M_i(a_r)\}$ would have been deleted from $\succ'$ during the construction of $I_T$ from $I_E$ and thus $\{a_r, M_i(a_r)\}$ could not be matched in $M_i$, a contradiction. 
    
    Now for the reverse implication, suppose that $M_i\mathcal{O}_I$ is a reduced stable partition of $I$, where $M_i$ is the collection of transpositions, if any, and $\mathcal{O}_I$ is the collection of odd cycles. By Lemma \ref{lemma:m0isstable}, $M_i$ is a perfect stable matching of $I_T$.
\end{proof}

\begin{corollary}
    The number of reduced stable partitions that an {\sc sr} instance $I$ with $n\in 2\mathbb{N}$ agents admits is equal to the number of stable matchings that its sub-instance $I_T$ with $n - \vert A(\mathcal{O}_I) \vert$ agents admits.
\end{corollary}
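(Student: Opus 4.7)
The plan is to exhibit a bijection between the set $S(I_T)$ of stable matchings of $I_T$ and the set of reduced stable partitions of $I$, essentially as a repackaging of Theorem \ref{theorem:correspondence}. Define the map $\phi$ sending each $M \in S(I_T)$ to the cyclic permutation $M\mathcal{O}_I$.

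First I would argue that every stable matching of $I_T$ is perfect, so that Theorem \ref{theorem:correspondence} can be invoked directly. By Tan's existence theorem, $I$ admits at least one reduced stable partition; by Lemma \ref{lemma:m0isstable} this induces a perfect stable matching of $I_T$. Since all stable matchings of an {\sc sr} instance match the same set of agents, every element of $S(I_T)$ is then perfect, and hence $\phi(M) = M\mathcal{O}_I$ is a reduced stable partition of $I$ by the forward direction of Theorem \ref{theorem:correspondence}. Thus $\phi$ is well-defined.

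Next I would verify that $\phi$ is bijective. Injectivity holds because $A(\mathcal{O}_I)$ and $A' = A(M)$ partition $A$, so $M$ is recoverable as the collection of transpositions of $\phi(M)$, and distinct matchings therefore yield distinct images. Surjectivity follows from invariance of odd cycles across all stable partitions of $I$: any reduced stable partition $\Pi$ of $I$ decomposes as $\Pi = M\mathcal{O}_I$ for some collection $M$ of transpositions on $A'$, and the reverse direction of Theorem \ref{theorem:correspondence} then gives $M \in S(I_T)$ with $\phi(M) = \Pi$. Counting agents, $I_T$ has vertex set $A' = A \setminus A(\mathcal{O}_I)$, of size $n - \vert A(\mathcal{O}_I) \vert$.

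There is no substantial obstacle here; the only subtle point is passing from \emph{perfect} stable matchings of $I_T$, as phrased in Theorem \ref{theorem:correspondence}, to the entire set $S(I_T)$, which is handled by the standard fact that all stable matchings of an {\sc sr} instance cover the same set of agents.
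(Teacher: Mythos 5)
Your proof is correct and follows exactly the route the paper intends: the corollary is stated as an immediate consequence of Theorem \ref{theorem:correspondence}, and your bijection $\phi(M)=M\mathcal{O}_I$, together with the observation that all stable matchings of $I_T$ match the same agent set (so all are perfect), is precisely the argument the paper leaves implicit. Nothing is missing.
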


With these results, we now have a procedure to enumerate all reduced stable permutations.

\begin{restatable}{theorem}{findallrp}
\label{theorem:findallrp}
    We can enumerate all reduced stable partitions of an {\sc sr} instance $I=(A,\succ)$ in $O(\vert S(I_T) \vert n + n^2)$ time, where $I_T$ is the sub-instance of $I$ as constructed above, $S(I_T)$ is the set of all stable matchings of $I_T$, and $n=\vert A\vert$.
\end{restatable}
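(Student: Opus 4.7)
The plan is to combine the bijective correspondence from Theorem \ref{theorem:correspondence} with an existing fast enumeration algorithm for stable matchings. Concretely, I would proceed in three stages: first compute the invariant odd cycles $\mathcal{O}_I$, then build the sub-instance $I_T$, then enumerate $S(I_T)$ and emit one reduced stable partition per matching produced.

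For the first stage, I would run Tan's algorithm on $I$ to obtain an arbitrary stable partition $\Pi_0 = M_0\mathcal{O}_I$ in linear time in the size of the preference structure, i.e.\ $O(n^2)$. From $\Pi_0$ we directly read off $\mathcal{O}_I$ and the set $A' = A \setminus A(\mathcal{O}_I)$. For the second stage, I would invoke Algorithm \ref{alg:itconstruction}, which walks through each agent $a_i \in A(M_0)$ once, scans its preference list in order, and uses constant-time rank comparisons and partner lookups to identify the positions at which acceptability edges need to be removed; the total work is dominated by the scans and is $O(n^2)$.

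For the third stage, I would apply Feder's algorithm for enumerating all stable matchings of an {\sc sr} instance, which by the result cited in Section \ref{section:relatedwork} runs in $O(\vert S(I_T)\vert n + n^2)$ time on $I_T$. By Lemma \ref{lemma:m0isstable}, $I_T$ admits at least one perfect stable matching (namely $M_0$), so Feder's enumeration applies and returns each $M_i \in S(I_T)$ exactly once. For each such $M_i$, I would output the partition $M_i \mathcal{O}_I$ obtained by concatenating the transpositions of $M_i$ with the fixed odd cycles $\mathcal{O}_I$; writing out this partition takes $O(n)$ time per matching if $\mathcal{O}_I$ is stored once and each $M_i$ is emitted as a delta from the previous enumeration output (or, even naively, $O(n)$ per matching suffices).

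Correctness follows immediately from Theorem \ref{theorem:correspondence}: the map $M_i \mapsto M_i \mathcal{O}_I$ is a bijection between $S(I_T)$ and the set of reduced stable partitions of $I$, so the enumeration produces every reduced stable partition exactly once and nothing else. Summing the three stages gives $O(n^2) + O(n^2) + O(\vert S(I_T)\vert n + n^2) = O(\vert S(I_T)\vert n + n^2)$ as claimed. The only subtle point — and the main thing to verify carefully in the writeup — is that the $O(n)$ per-matching emission cost truly is charged against the $\vert S(I_T)\vert n$ term of Feder's bound rather than inflating it; this is immediate because Feder's enumeration already uses at least $\Omega(n)$ time per matching to report it, so appending the fixed cycles $\mathcal{O}_I$ (precomputed once) does not change the asymptotic per-matching cost.
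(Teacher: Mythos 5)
Your proposal is correct and follows essentially the same route as the paper's proof: compute a stable partition via Tan's algorithm, build $I_T$ with Algorithm \ref{alg:itconstruction}, enumerate $S(I_T)$ with the algorithm of Feder, and output $M_i\mathcal{O}_I$ for each matching, with correctness resting on Theorem \ref{theorem:correspondence}. The only cosmetic difference is that the paper explicitly notes the (linear-time) step of breaking any even cycles of Tan's output into transpositions to obtain a reduced stable partition, which your write-up implicitly assumes.
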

\begin{proof}
    The procedure is given as follows (steps 1-3 can be performed together using Algorithm \ref{alg:itconstruction}): 
    \begin{enumerate}
        \item Compute any reduced stable permutation $\Pi=M_0\mathcal{O}_I$ of $I$, for example using the algorithm presented by \textcite{tan91_1}.
        \item Construct the instance $I_E=(A', \succ')$ from $I$ by restricting $A$ and $\succ$ to all agents $a_i\notin A(\mathcal{O}_I)$.
        \item Construct the instance $I_T=(A', \succ'')$ from $I_E$ by removing all agent pairs $\{a_i, a_j\} \subseteq A'$ from $\succ'$ whenever there exists an agent $a_r\in A(\mathcal{O}_I)$ such that $a_i \succ_r \mathcal O_I^{-1}(a_r)$ and $a_r \succ_i a_j$.
        \item Enumerate all stable matchings $S(I_T)$ of instance $I_T$, for example using the algorithm by \textcite{federegalapprox}.
        \item For every matching $M_i \in S(I_T)$, construct and output the corresponding reduced stable partition $M_i\mathcal{O}_I$. 
    \end{enumerate}
    Steps 1-3 can be carried out in $O(n^2)$ steps using Algorithm \ref{alg:itconstruction} as a stable partition can be found in linear time using the algorithm by \textcite{tan91_1} and converted into a reduced stable partition in linear time by breaking down even length cycles into transpositions. Furthermore, the first loop takes $O(n)$ time and the second one takes $O(n^2)$ time by the number of agents, assuming constant time lookup for preference comparisons and partners in $\mathcal{O}_I$. Step 4 can be carried out in $O(\vert S(I_T) \vert n_1 + n_1^2)$ steps using the algorithm presented in \textcite{federegalapprox}. Depending on the presentation of the output, step 5 can be carried out in $O(\vert S(I_T) \vert n)$ steps under the assumption that we want to print all cycles for every reduced stable partition. As all steps can be carried out sequentially, we arrive at the stated worst-case complexity. 
\end{proof}

\subsection{Structure of Longer Even Cycles}
\label{section:longercycles}

Reduced stable partitions are canonical in the sense that they cannot be broken down into smaller cycles. However, while \textcite{tan91_2} shows that any stable cycle of even length greater than two can be decomposed into a stable collection of transpositions, it may be of theoretical interest to look at how longer cycles can be constructed from a collection of transpositions. Naively, we could try any combination of transpositions to construct longer even cycles. However, this would quickly lead to a combinatorial explosion due to the potential number of transpositions involved.

First, we look at the different ways to decompose an even cycle into transpositions.

\begin{restatable}{lemma}{atleasttwo}
\label{lemma:atleasttwo}
    Any even cycle $C=(a_{i_1} \; a_{i_2} \; \dots \; a_{i_{2k}})$ of some stable partition $\Pi$ where $k\geq 2$ can be broken into two distinct collections of transpositions $ C_1 = (a_{i_{1}} \; a_{i_{2}})(a_{i_{3}} \; a_{i_{4}}) \dots (a_{i_{2k-1}} \; a_{i_{2k}})$ and $C_2 = (a_{i_{1}} \; a_{i_{2k}})(a_{i_{2}} \; a_{i_{3}}) \dots (a_{i_{2k-2}} \; a_{i_{2k-1}})$ such that both partitions $\Pi_1 = (\Pi\setminus C)\doubleplus C_1$ and $\Pi_2 = (\Pi\setminus C)\doubleplus C_2$ are stable.
\end{restatable}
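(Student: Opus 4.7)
The plan is to verify both stability conditions T1 and T2 for each of $\Pi_1$ and $\Pi_2$, with the stability of $\Pi$ serving as the baseline. Condition T1 is immediate: every agent of $\Pi_1$ (resp.\ $\Pi_2$) lies either in a transposition of $C_1$ (resp.\ $C_2$), where successor and predecessor coincide, or in an unchanged cycle of $\Pi \setminus C$, where T1 already held. The real content lies in checking T2.

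The central observation I will use is a monotonicity statement: for every agent in $A(C)$, the predecessor in $\Pi_1$ is weakly preferred to its predecessor in $\Pi$. Indeed, the transpositions of $C_1$ are $(a_{i_{2j-1}} \; a_{i_{2j}})$ for $j = 1, \ldots, k$, so an agent $a_{i_{2j}}$ keeps $a_{i_{2j-1}}$ as predecessor (unchanged from $\Pi$), whereas $a_{i_{2j-1}}$ swaps its old predecessor $a_{i_{2j-2}}$ (with indices modulo $2k$) for $a_{i_{2j}}$. Condition T1 applied to $a_{i_{2j-1}}$ in $\Pi$ says precisely $a_{i_{2j}} = \Pi(a_{i_{2j-1}}) \succeq_{i_{2j-1}} \Pi^{-1}(a_{i_{2j-1}}) = a_{i_{2j-2}}$, so the new predecessor is weakly preferred. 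The analogous statement for $\Pi_2$ is obtained with the roles of the two parities swapped, with the transposition $(a_{i_1} \; a_{i_{2k}})$ absorbing the wrap-around at the even-indexed agent $a_{i_{2k}}$.

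With this monotonicity, T2 follows by contradiction. Suppose $a_r, a_s$ blocked $\Pi_1$, i.e., $a_s \succ_r \Pi_1^{-1}(a_r)$ and $a_r \succ_s \Pi_1^{-1}(a_s)$. For any agent $a_t \notin A(C)$ the predecessor is literally unchanged, while for $a_t \in A(C)$ the observation above gives $\Pi_1^{-1}(a_t) \succeq_t \Pi^{-1}(a_t)$. Combining, $a_s \succ_r \Pi^{-1}(a_r)$ and $a_r \succ_s \Pi^{-1}(a_s)$, so $\{a_r, a_s\}$ would already block $\Pi$, contradicting its stability. The same argument gives stability of $\Pi_2$. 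Distinctness $\Pi_1 \neq \Pi_2$ is then guaranteed by the hypothesis $k \geq 2$: for example, the transposition $(a_{i_1} \; a_{i_2})$ belongs to $C_1$ but not to $C_2$.

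The main obstacle I expect is keeping the cyclic index arithmetic clean and making sure the monotonicity claim genuinely covers every agent in $A(C)$ under both decompositions; once that bookkeeping is settled, the reduction of T2 of $\Pi_1$ (and $\Pi_2$) to T2 of $\Pi$ is essentially mechanical, and T1 is trivial by construction.
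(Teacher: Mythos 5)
Your proposal is correct and follows essentially the same route as the paper: T1 holds trivially for transpositions, and T2 is reduced to T2 of $\Pi$ via the key monotonicity observation that every agent's predecessor in $C_1$ (resp.\ $C_2$) is weakly preferred to its predecessor in $C$, which itself follows from T1 of $\Pi$. Your version is in fact marginally more careful than the paper's in explicitly covering blocking pairs with one agent outside $A(C)$ and in noting why $C_1 \neq C_2$, but these are refinements of the same argument rather than a different approach.
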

\begin{proof}
    By definition of even cycles, we know that $C=(a_{i_1} \; a_{i_2} \; \dots \; a_{i_{2k}})$ is of length $2k\geq 4$. Suppose that $\Pi_1$ or $\Pi_2$ are not stable. Then either condition T1 or T2 must be violated. However, if $\Pi$ satisfies T1, then both $\Pi_1$ and $\Pi_2$ must satisfy T1 because the repartitioned agents are now in transpositions, such that their predecessors equal their successors, trivially satisfying the condition. Thus, suppose (wlog) that although $\Pi$ satisfies T2, $\Pi_1$ does not. Then there exist agents $a_{i_{p}}$, $a_{i_{q}} \in C$ such that we have $a_{i_{p}}\neq a_{i_{q}}$, $a_{i_{q}} \succ_{i_p} \Pi_1^{-1}(a_{i_{p}}) = C_1(a_{i_{p}})$, and $a_{i_{p}} \succ_{i_q} \Pi_1^{-1}(a_{i_{q}}) = C_1(a_{i_{q}})$. However, we know that for all agents $a_{i_{r}}\in C$, we have $C_1(a_{i_{r}})\succeq_{i_r} C^{-1}(a_{i_{r}}) = \Pi^{-1}(a_{i_{r}})$, contradicting stability of $C$. Thus, both $\Pi_1$ and $\Pi_2$ are stable partitions satisfying T1 and T2.
\end{proof}

However, there might be more ways to rearrange the agents of an even cycle into a stable collection of disjoint transpositions. Consider the instance $I$ shown in Example \ref{table:moretranspositions} which admits the stable partition (indicated in the figure with successors in bold and predecessors underlined) $\Pi = (a_1 \; a_2 \; a_3 \; a_4 \; a_5 \; a_6)$. $I$ also admits the reduced stable partitions $\Pi_1 = (a_1 \; a_2) (a_3 \; a_4)(a_5 \; a_6)$, $\Pi_2 = (a_1 \; a_6) (a_2 \; a_3)(a_4 \; a_5)$, and $\Pi_3 = (a_1 \; a_4) (a_2 \; a_5)(a_3 \; a_6)$, corresponding to three stable matchings of $I$, where $\Pi_3$ is indicated in the figure through circles.

\begin{table}[!htb]
\centering
    \begin{tabular}{ c | c c c c }
    $a_1$ & \circled{$a_4$} & $\mathbf{a_2}$ & \underline{$a_6$} & \dots \\
    $a_2$ & $\mathbf{a_3}$ & \underline{$a_1$} & \circled{$a_5$} & \dots \\
    $a_3$ & \circled{$a_6$} & $\mathbf{a_4}$ & \underline{$a_2$} & \dots \\
    $a_4$ & $\mathbf{a_5}$ & \underline{$a_3$} & \circled{$a_1$} & \dots \\
    $a_5$ & \circled{$a_2$} & $\mathbf{a_6}$ & \underline{$a_4$} & \dots \\
    $a_6$ &  $\mathbf{a_1}$ & \underline{$a_5$} & \circled{$a_3$} & \dots  
    \end{tabular}
\caption{{\sc sr} instance with at least five stable partitions}
\label{table:moretranspositions}
\end{table}

While $\Pi_1$ and $\Pi_2$ can be derived from $\Pi$ by Lemma \ref{lemma:atleasttwo}, $\Pi_3$ is different in the way that it does not preserve the structure of $\Pi$. Note, though, that $I$ also admits the stable partition $\Pi_4 = (a_1 \; a_4 \; a_3 \; a_6 \; a_5 \; a_2)$, which can be broken into $\Pi_1$ and $\Pi_3$ in a similar manner. In fact, when combining two such collections of transpositions $\Pi_a$ and $\Pi_b$ with empty intersection into one large cycle $C$, there are just two options, only one of which can be correct. Start with an arbitrary agent, say $a_1$, and note that either $C(a_1)=\Pi_a(a_1)$ and $C^{-1}(a_1)=\Pi_b(a_1)$, or $C(a_1)=\Pi_b(a_1)$ and $C^{-1}(a_1)=\Pi_a(a_1)$. From there, just alternate between the stable partitions $\Pi_a$ and $\Pi_b$ and continue until the cycle is complete. Note though that although one of the constructions must satisfy stability condition T1, condition T2 might not be satisfied. Consider, for example, the cycle $(a_1 \; a_4 \; a_5 \; a_2 \; a_3 \; a_6)$ produced by joining $\Pi_3$ and $\Pi_2$, which satisfies T1 but not T2, as, for example agents $\{a_1, a_2\}$ prefer each other over their predecessors. 

\begin{restatable}{theorem}{tworeduced}
\label{theorem:tworeduced}
    Any non-reduced stable partition $\Pi$ can be constructed from two reduced stable partitions $\Pi_a$, $\Pi_b$.
\end{restatable}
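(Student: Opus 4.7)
The plan is to construct $\Pi_a$ and $\Pi_b$ by decomposing each even cycle of $\Pi$ in its two canonical ways (as in Lemma~\ref{lemma:atleasttwo}) and then to argue that $\Pi$ can be recovered by alternating their transpositions. Write $\Pi = \mathcal{T} \cup \mathcal{E} \cup \mathcal{O}_I$, where $\mathcal{E} = \{E_1, \ldots, E_m\}$ is the (necessarily nonempty, since $\Pi$ is non-reduced) collection of even cycles. For each $E_j = (a_{j,1} \; a_{j,2} \; \ldots \; a_{j,2k_j})$, Lemma~\ref{lemma:atleasttwo} supplies two transposition decompositions $E_j^{(1)} = (a_{j,1} \; a_{j,2}) \cdots (a_{j,2k_j-1} \; a_{j,2k_j})$ and $E_j^{(2)} = (a_{j,1} \; a_{j,2k_j})(a_{j,2} \; a_{j,3}) \cdots (a_{j,2k_j-2} \; a_{j,2k_j-1})$. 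I would then define $\Pi_a = \mathcal{T} \cup \mathcal{O}_I \cup \bigcup_{j=1}^{m} E_j^{(1)}$ and $\Pi_b = \mathcal{T} \cup \mathcal{O}_I \cup \bigcup_{j=1}^{m} E_j^{(2)}$; by construction both consist only of transpositions and odd cycles and are therefore reduced.

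The core step is establishing stability of $\Pi_a$ and $\Pi_b$. I would proceed by iterated application of Lemma~\ref{lemma:atleasttwo}: starting from the stable partition $\Pi$, replacing $E_1$ by $E_1^{(1)}$ yields a stable partition $\Pi^{(1)}$, and since $A(E_2), \ldots, A(E_m)$ are pairwise disjoint from $A(E_1)$, these even cycles persist unchanged in $\Pi^{(1)}$. Hence the lemma reapplies to $E_2$ in $\Pi^{(1)}$, and so on; after $m$ iterations we obtain $\Pi_a$ stable, and the analogous sequence using the second decomposition at each step gives $\Pi_b$ stable.

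For the reconstruction, the odd cycles $\mathcal{O}_I$ and the transpositions $\mathcal{T}$ appear identically in both $\Pi_a$ and $\Pi_b$, so they are inherited by $\Pi$ directly. For each $E_j$, the pairs of $E_j^{(1)}$ together with those of $E_j^{(2)}$ form a $2$-regular multigraph on $A(E_j)$ whose unique connected structure is a single cycle alternating between $\Pi_a$- and $\Pi_b$-edges; traversing this cycle in the orientation inherited from $\Pi$'s original successor map at $a_{j,1}$ recovers $E_j$ exactly. The principal obstacle is the stability verification of $\Pi_a$ and $\Pi_b$, but it is handled cleanly by the disjointness of agent sets across distinct even cycles, which ensures that the hypotheses of Lemma~\ref{lemma:atleasttwo} remain valid throughout the iterated replacement and that no new blocking pair is introduced (the newly created transpositions satisfy condition~T1 trivially, since each agent's predecessor equals its successor).
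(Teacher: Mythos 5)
Your proposal is correct and follows essentially the same route as the paper: decompose each even cycle of $\Pi$ into its two canonical transposition collections via Lemma~\ref{lemma:atleasttwo}, assemble these into the reduced partitions $\Pi_a$ and $\Pi_b$, and recover $\mathcal{E}$ from the alternating cycle structure of their symmetric difference. Your iterated, one-cycle-at-a-time justification of the stability of $\Pi_a$ and $\Pi_b$ is a slightly more explicit version of what the paper leaves implicit, and is a welcome clarification.
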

\begin{proof}
    Consider an arbitrary stable partition $\Pi=\mathcal T\doubleplus \mathcal E\doubleplus\mathcal{O}_I$, where $\mathcal T$ are the stable transpositions, $\mathcal E$ are the even cycles of length larger than 2, and $\mathcal{O}_I$ are the odd cycles. Then we have shown in Lemma \ref{lemma:atleasttwo} that we can break down each cycle $c\in \mathcal E$ into two different collections of transpositions. Now we can consider the reduced stable partitions $\Pi_a$ and $\Pi_b$ where in $\Pi_a$ every even cycle in $\mathcal E$ is broken down in one way and in $\Pi_b$ the other way. Then precisely, $\Pi_a \cap \Pi_b=T\doubleplus\mathcal{O}_I$, so the symmetric difference $D = \Pi_a \triangle \Pi_b$ contains precisely the transpositions formed by breaking down each cycle $c\in\mathcal E$ in two different ways.

    Now in combination with Lemma \ref{lemma:atleasttwo}, we know that the symmetric difference is precisely the set $D=\{(a_{i_j} \; a_{i_{j+1}}) \; \vert \; (a_{i_0} \; a_{i_1} \; a_{i_2} \; \dots \; a_{i_{2k-1}})\in \mathcal E \; \wedge \; 0 \leq j\leq 2k-1 \}$, where addition is taken modulo $2k$ to account for the cyclic structure. Thus, knowing $D$, we have overlapping transpositions such that every agent $a_{i_j}$ is contained in exactly two transpositions $(a_{i_j} \; a_{i_{j+1}})$ and $(a_{i_{j-1}} \; a_{i_j})$. We can now reconstruct $\mathcal E$ cycle by cycle by simply aligning the transpositions as informally shown below Example \ref{table:moretranspositions}, where $a_{i_j}$'s preferences will uniquely determine the alignment by determining whether $a_{i_{j-1}}$ or $a_{i_{j+1}}$ must be the successor to satisfy stability condition T1.

    The result follows as $\Pi = (\Pi_a \cap \Pi_b) \doubleplus \mathcal E$, where $\mathcal E$ is constructed from $\Pi_a \triangle \Pi_b$.
\end{proof}

This gives a naive way to construct all stable partitions from the set of reduced stable partitions.

\begin{restatable}{corollary}{allpartitionsfromreduced}
\label{corollary:allpartitionsfromreduced}
    We can construct all stable partitions, $P(I)$, of an {\sc sr} instance $I$ from its set of reduced stable partitions, $RP(I)$, in $O(\vert RP(I)\vert^2 n^2)$ time without repetitions.
\end{restatable}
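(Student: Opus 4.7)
The plan is to use Theorem \ref{theorem:tworeduced} directly: since every non-reduced stable partition arises as a (symmetric) combination of two reduced stable partitions, I would enumerate $P(I)$ by examining all unordered pairs from $RP(I)$.

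First, output every $\Pi \in RP(I)$, since reduced stable partitions trivially belong to $P(I)$. Then, for each unordered pair $\{\Pi_a, \Pi_b\} \subseteq RP(I)$ with $\Pi_a \neq \Pi_b$, I would proceed as follows. Compute the symmetric difference $D = \Pi_a \triangle \Pi_b$ restricted to transpositions; since $\Pi_a \cap \Pi_b$ contains all invariant odd cycles, $D$ is a collection of transpositions on a subset of agents. View $D$ as an undirected multigraph whose edges are the transpositions: every agent has degree $0$ or $2$ in $D$, since an agent is either matched identically in $\Pi_a$ and $\Pi_b$ or matched to two distinct partners. Hence each connected component of $D$ with at least one edge is a cycle, necessarily of even length. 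Orient each such component into an even cycle by applying stability condition T1: for each agent $a_i$, pick as successor the more preferred of its two neighbours in $D$. This reverses the decomposition in Lemma \ref{lemma:atleasttwo}, producing a candidate partition $\Pi_{ab} = (\Pi_a \cap \Pi_b) \cup \mathcal{E}_{ab}$, where $\mathcal{E}_{ab}$ is the set of oriented even cycles.

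Once $\Pi_{ab}$ is assembled, I would verify stability by checking T2 over all pairs of agents in $O(n^2)$ time (T1 is built in by construction). If stable, compute a canonical representation of $\Pi_{ab}$ (for instance, rotate each cycle to start at its smallest-indexed agent, then sort cycles by this minimum) in $O(n)$ time, and insert it into a hash set of already-reported partitions to suppress duplicates. Completeness follows from Theorem \ref{theorem:tworeduced}: every non-reduced stable partition is generated by at least one such pair and thus eventually inserted, while reduced ones are emitted in the initial pass.

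The dominant per-pair cost is the $O(n^2)$ stability verification, and there are $\binom{|RP(I)|}{2} = O(|RP(I)|^2)$ pairs, giving total running time $O(|RP(I)|^2 n^2)$. The main obstacle will be arguing that deduplication does not silently worsen the asymptotics: a non-reduced stable partition with $k$ even cycles is produced by $2^{k-1}$ distinct unordered pairs, so repetitions genuinely occur, but the canonical-form hash-set strategy absorbs them within the existing $O(n)$ budget per pair. A secondary subtlety is confirming that the T1-based orientation correctly recovers each even cycle even when the associated component of $D$ is long: this is immediate because each agent has exactly two candidate neighbours in $D$, and T1 selects one of them uniquely as its successor, yielding an unambiguous cyclic orientation of the entire component.
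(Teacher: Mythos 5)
Your construction is essentially the paper's: invoke Theorem \ref{theorem:tworeduced}, loop over all $\binom{|RP(I)|}{2}$ pairs, rebuild the even cycles from the symmetric difference using T1 to orient each component, and spend $O(n^2)$ per pair verifying T2, for $O(|RP(I)|^2 n^2)$ overall. The one place you genuinely diverge is the treatment of repetitions, and there your version is the more careful one. The paper asserts that two distinct unordered pairs can never produce the same non-reduced candidate, arguing via a witness agent whose partner sets in the two pairs are disjoint; but as your own counting shows, a stable partition with $k\ge 2$ even cycles arises from $2^{k-1}$ distinct pairs (choose independently, for each even cycle, which of its two transposition decompositions goes into $\Pi_a$), and for such pairs every agent's two candidate partners coincide across the pairs, so no such witness exists. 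Your canonical-form-plus-hash-set deduplication absorbs these collisions within the $O(n)$ per-pair budget and is what actually delivers ``without repetitions.'' One small point to tighten: your claim that ``T1 is built in by construction'' is slightly optimistic, since the per-agent T1 choice of successor need not extend to a consistent cyclic orientation of a component for an arbitrary pair (two adjacent agents may each select the other); this is detectable in $O(n)$ during the orientation pass, and such pairs are simply discarded, so it does not affect correctness or the running time.
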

\begin{proof}
    Knowing that any stable partition can be constructed from two reduced stable partitions due to Theorem \ref{theorem:tworeduced}, we can conclude that in addition to all reduced stable partitions $RP(I)$, we can find all other stable partitions admitted by $I$ by considering every one of the $\binom{\vert RP(I)\vert}{2} = O(\vert RP(I)\vert^2)$ pairs of reduced stable partitions, constructing each candidate partition and verifying in $O(n^2)$ time that it satisfies T1 and T2.

    This process will not generate repetitions, as two distinct pairs of reduced stable partitions will never generate the same non-reduced stable partition candidate. To see this, let $\{\Pi_a, \Pi_b\}, \{\Pi_c, \Pi_d\}$ be two distinct pairs of reduced stable partitions where agent $a_{i_1}$ is in a different transposition in each of them. Then, when merging $\Pi_a$ and $\Pi_b$ to a non-reduced stable partition candidate $\Pi$ using the method above, we have either $\Pi(a_{i_1})=\Pi_a(a_{i_1})$ or  $\Pi(a_{i_1})=\Pi_b(a_{i_1})$ and similar when merging $\Pi_c$ and $\Pi_d$ to candidate $\Pi'$. However, we know $\{\Pi_a(a_{i_1}), \Pi_b(a_{i_1})\}\cap\{\Pi_c(a_{i_1}), \Pi_d(a_{i_1})\}=\varnothing$, so clearly $\Pi\neq \Pi'$.
\end{proof}

However, as $\vert RP(I)\vert$ may be exponential in $n$ and we have seen that many combinations of reduced stable partitions do not to lead to a non-reduced stable partition, we focus the study on stable cycles themselves and how they could be useful in enumerating all stable partitions directly.

First, to refine the search for longer even cycles, we can show restrictions on the combinatorics of agents. We want to show that it suffices to know a predecessor-successor pair of agents in a partial even cycle  $C=(a_{i_1} \; a_{i_2} \; \dots)$ to find a completion if one exists, and that we can find such a completion in polynomial time. For this, we show that the correspondence between $C$ and $C\setminus \{a_{i_2}\}$, where the latter is a potential odd cycle of some related instance. Specifically, we propose the method outlined in Algorithm \ref{alg:isconstruction} to transform the instance $I$ appropriately to an instance $I_S$ such that it admits the desired invariant odd cycle if it exists. Specifically, the procedure takes as input the instance and a predecessor-successor pair $a_{i_1}, a_{i_2}$ of the partial (not necessarily stable) cycle in question and creates an instance $I_S$ related to $I$ without agent $a_{i_2}$ in which the preference lists of all agents preferred by $a_{i_2}$ over $a_{i_1}$ in $I$ are modified in $I_S$ such that $a_{i_1}$ takes the spot of $a_{i_2}$ and its previous entry is deleted, and these agents, if not preferred by $a_{i_1}$ over $a_{i_2}$, are promoted in the preference list of $a_{i_1}$ in $I_S$ to before the rank position of $a_{i_2}$ in $a_{i_1}$'s preference list in $I$ while maintaining the relative preference order given by $a_{i_2}$ in $I$.

\begin{algorithm}[!htb]
\renewcommand{\algorithmicrequire}{\textbf{Input:}}
\renewcommand{\algorithmicensure}{\textbf{Output:}}

    \begin{algorithmic}[1]

    \Require{$I$: an {\sc sr} instance; $a_{i_1}$: an agent of some partial (not-necessarily stable) cycle of $I$; $a_{i_2}$: the successor of $a_{i_1}$ in the partial cycle in question}
    \Ensure{$I_S$ : an instance in which $a_{i_1}, a_{i_2}$ are in an odd cycle if the partial cycle is stable}

    \State $I_S \gets I$ (make a copy)

    \For{{agent $a_{r}$ preferred by $a_{i_2}$ over $a_{i_1}$}}
        \State {delete $a_{i_1}$ from preference list of $a_{r}$ in $I_S$}
        \State {replace $a_{i_2}$ by $a_{i_1}$ in preference list of $a_{r}$ in $I_S$}
    \EndFor

    \State $k \gets$ {rank}$_{i_1}(a_{i_2})$ 

    \For{{agent $a_{r}$ preferred by $a_{i_2}$ (in order) over $a_{i_1}$ but not by $a_{i_1}$ over $a_{i_2}$}}
        \State {move $a_r$ up in the preference list of $a_{i_1}$ to rank-position $k$ in $I_S$}
        \State $k \gets k+1$
    \EndFor

    \State {delete $a_{i_2}$ from $I_S$ altogether}
    
    \State\Return{$I_S$}

    \end{algorithmic}
    \caption{\texttt{ConstructI}$_S (I$, $a_{i_1}$, $a_{i_2})$, constructs $I_S$ from {\sc sr} instance $I$}
    \label{alg:isconstruction}
\end{algorithm}

\begin{restatable}{lemma}{oddimplied}
\label{lemma:oddimplied}
    Let $I$ be an {\sc sr} instance, let $(a_{i_1} \; a_{i_2} \dots)$ be a partial cycle of $I$, and let $I_S$ be the instance constructed from $I$ using Algorithm \ref{alg:isconstruction}. If $(a_{i_1} \; a_{i_2} \dots)$ has a completion to an even stable cycle in $I$, then $a_{i_1}$ belongs to an odd stable cycle in $I_S$.
\end{restatable}
\begin{proof}
    Suppose that $C=(a_{i_1} \; a_{i_2} \; a_{i_3} \dots \; a_{i_{2k}})$ is an even cycle in a stable partition $\Pi$ of $I$ for some agents $a_{i_3} \dots \; a_{i_{2k}}$. We claim that $C'=(a_{i_1} \; a_{i_3} \dots \; a_{i_{2k}})$ is an odd cycle of any stable partition $\Pi'$ in $I_S$. For this, we need to prove that stability conditions T1 and T2 are satisfied. 

    For T1, by stability of $\Pi$, in $I$, we have that $a_{i_1}$ prefers $a_{i_2}$ to $a_{i_{2k}}$ and $a_{i_2}$ prefers $a_{i_3}$ to $a_{i_1}$. Hence, by construction, $a_{i_1}$ prefers $a_{i_3}$ to $a_{i_{2k}}$ in $I_S$. Also, $a_{i_3}$ prefers $a_{i_4}$ to $a_{i_2}$ in $I$ by stability of $\Pi$, so $a_{i_3}$ prefers $a_{i_4}$ to $a_{i_1}$ in $I_S$ since $a_{i_2}$ prefers $a_{i_3}$ to $a_{i_1}$ in $I$. 

    For T2, suppose that $\{a_r, a_s\}$ block $\Pi'$ in $I_S$. First, suppose that $a_r=a_{i_1}$. Then $a_s\neq a_{i_3}$ as $(\Pi')^{-1}(a_{i_3})=a_{i_1}$ and $a_{i_1}$ prefers $a_s$ to $a_{i_{2k}}$ in $I_S$. Suppose that $a_{i_2}$ prefers $a_s$ to $a_{i_1}$ in $I$. Then $a_s$ replaced $a_{i_2}$ with $a_{i_1}$ in $I_S$. This means that $\{a_{i_2}, a_{i_s}\}$ blocks $\Pi$ in $I$, a contradiction. Therefore, $a_{i_2}$ does not prefer $a_s$ to $a_{i_1}$ in $I$, so $a_s$ remains in the same position in the preference list of $a_{i_1}$ in $I_S$ as it was in $I$, so $\{a_{i_1}, a_s\}$ blocks $\Pi$ in $I$, a contradiction. 
    
    Now suppose that $a_r=a_{i_3}$. Then $a_{i_3}$ prefers $a_s$ to $a_{i_1}$ in $I_S$. We know that $a_{i_2}$ prefers $a_{i_3}$ to $a_{i_1}$ in $I$. Hence, by construction of $I_S$, $a_{i_3}$ prefers $a_s$ to $a_{i_2}$. Also since $a_s\neq a_{i_1}$ and $a_s\neq a_{i_3}$, it must be the case that $(\Pi')^{-1}(a_s)=\Pi^{-1}(a_s)$, so $\{a_{i_3}, a_s\}$ block $\Pi$ in $I$, a contradiction. 
    
    Lastly, if $\{a_r, a_s \} \cap \{a_{i_1}, a_{i_3}\} = \varnothing$, then clearly $\{a_r, a_s\}$ block $\Pi$ in $I$ as $(\Pi')^{-1}(a_r)=\Pi^{-1}(a_r)$, $(\Pi')^{-1}(a_s)=\Pi^{-1}(a_s)$, $a_r$ prefers $a_s$ to $\Pi^{-1}(a_r)$ and $a_s$ prefers $a_r$ to $\Pi^{-1}(a_s)$.
\end{proof}

Due to odd cycles being invariant, we can use the statement above to reason about the structural restrictions of longer even cycles implied by a predecessor-successor pair.

\begin{restatable}{lemma}{distinctfixed}
\label{lemma:distinctfixed}
    If $C=(a_{i_1} \; a_{i_2} \; a_{i_3} \; \dots \; a_{i_{2k}})$ and $C'=(a_{i_1} \; a_{i_2} \; a_{i_3}' \; \dots \; a_{i_{2l}}')$ are both even cycles of $I$, then $k=l$ and $a_{i_j} = a_{i_j}'$ for $3\leq j\leq 2k$.
\end{restatable}
\begin{proof}
    Consider even cycles $C$, $C'$ of $I$. Then by Lemma \ref{lemma:oddimplied}, we must have stable odd cycles $(a_{i_1} \; a_{i_3} \; \dots \; a_{i_{2k}})$ and $(a_{i_1} \; a_{i_3}' \; \dots \; a_{i_{2l}}')$ in the instance $I_S$ using the transformation from $I$ in Algorithm \ref{alg:isconstruction}. However, because $a_1$ is in both cycles and odd cycles are invariant, they are the same cycles and the result follows.
\end{proof}

Now note that this also implies that no instance can admit, for example, the following two stable partitions: $\Pi = (a_1 \; a_2 \; a_3 \; a_4 \; a_5 \; a_6)$, $\Pi' = (a_1 \; a_2 \; a_3 \; a_4) (a_5 \; a_6)$. 

\begin{restatable}{lemma}{containment}
    If some stable partition $\Pi$ of {\sc sr} instance $I$ contains an even cycle $C=(a_{i_1}\;\dots\; a_{i_k})$, then $C$ cannot be contained in a longer even cycle $C'=(a_{i_1}\;\dots\; a_{i_k} \; a_{j_1}\;\dots\; a_{j_l})$ of some other stable partition $\Pi'$ of $I$.
\end{restatable}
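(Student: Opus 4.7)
The plan is to obtain this as an almost immediate consequence of Corollary \ref{corollary:distinctfixed}, which already establishes that any two even stable cycles of $I$ sharing a common predecessor-successor pair $(a_{i_1}, a_{i_2})$ must coincide entirely (same length, same agents in the same positions). Since the hypothesis of the corollary only depends on the first two agents of each cycle, it places a strong rigidity on how even cycles can overlap, and the statement to be proved is exactly the specialisation of this rigidity to the case of one cycle being a contiguous extension of another.

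Concretely, I would argue by contradiction. Suppose $C = (a_{i_1} \; a_{i_2} \; \dots \; a_{i_k})$ is an even stable cycle of $\Pi$, and $C' = (a_{i_1} \; a_{i_2} \; \dots \; a_{i_k} \; a_{j_1} \; \dots \; a_{j_l})$ with $l \geq 1$ is an even stable cycle of some $\Pi'$. Then both $C$ and $C'$ are even cycles of $I$ that begin with the same predecessor-successor pair $(a_{i_1}, a_{i_2})$. Applying Corollary \ref{corollary:distinctfixed} directly yields that $C$ and $C'$ must have the same length, contradicting $l \geq 1$.

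I do not anticipate a genuine obstacle, since this is essentially a restatement of Corollary \ref{corollary:distinctfixed}. The only subtle point worth explicitly checking is the use of the word \emph{contained}: cycles are inherently cyclic objects, so a priori $C$ could sit inside $C'$ at any rotational offset. However, by rotating the representation of $C'$ we may always assume that $C$ appears as a prefix, so that $C$ and $C'$ share the same initial two agents $a_{i_1}, a_{i_2}$ and the hypothesis of the corollary is met. With this normalisation in place, the proof reduces to a single invocation of the corollary, and the statement follows.
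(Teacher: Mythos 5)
Your proof is correct and follows essentially the same route as the paper, which likewise dismisses $C'$ by picking a predecessor-successor pair of $C$ (necessarily shared with $C'$) and invoking Corollary \ref{corollary:distinctfixed}. Your extra remark about normalising the rotational offset is a harmless clarification; since the statement already displays $C$ as a prefix of $C'$, the shared pair $(a_{i_1}, a_{i_2})$ is immediate and the contradiction on lengths follows at once.
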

\begin{proof}
    We can consider any predecessor-successor pair in $C$ to see that Lemma \ref{lemma:distinctfixed} implies that there cannot exist such a $C'$.
\end{proof}

The results also greatly restrict the number of even cycles that an instance can admit.

\begin{restatable}{lemma}{countevencycles}
    \label{lemma:countevencycles}
    Any {\sc sr} instance with $n$ agents admits at most $O(n_1^2 + n_2)=O(n^2)$ cycles of length not equal to two, where $n_2 = \vert A(\mathcal{O}_I)\vert$ and $n_1=n-n_2$.
\end{restatable}
\begin{proof}
    As odd cycles are invariant and any agent can be in at most one odd cycle, there are at most $n_2$ odd cycles.

    Due to Lemma \ref{lemma:distinctfixed}, any candidate predecessor-successor pair is part of at most one even stable cycle, and hence any pair of agents can be consecutively ordered in at most two even stable cycles. Thus, we have at most $O(n_1^2)$ even stable cycles of length not equal to two.
\end{proof}

Finally, Lemma \ref{lemma:distinctfixed} also establishes the following powerful result.

\begin{theorem}
\label{theorem:completetwo}
    Let $(a_{i_{1}}\; a_{i_{2}} \dots)$ be a partial cycle in some {\sc sr} instance $I$. In linear time, we can determine whether this partial cycle has a completion $C=(a_{i_{1}} \; a_{i_{2}} \; a_{i_{3}} \dots a_{i_{2k}})$ to an even cycle of some stable partition of $I$, or report that no such completion exists. If a completion $C$ to a stable even cycle does exist, then $C$ must be unique.
\end{theorem}
\begin{proof}
    We can construct $I_S$ from $I$ using the Algorithm \ref{alg:isconstruction} and find some stable partition $\Pi'$ of $I_S$, all in linear time. Now, if $a_{i_{1}}$ does not belong to an odd stable cycle $C'$ of $\Pi'$, then by Lemma \ref{lemma:oddimplied}, $(a_{i_{1}} \; a_{i_{2}} \dots)$ has no completion to a stable even cycle of $I$.

    Otherwise, suppose that $C'=(a_{i_{1}} \; a_{i_{3}} \dots a_{i_{2k}})$ is an odd stable cycle of $\Pi'$. Then the cycle $C=(a_{i_{1}} \; a_{i_{2}} \; a_{i_{3}} \; \dots \; a_{i_{2k}})$ is an even stable cycle of $I$ unless $\{a_{i_{1}}, a_{i_{3}}\}$ block in $(\Pi'\setminus C')\doubleplus C$.     
    
    Thus, we can either output $C$, which is unique by Lemma \ref{lemma:distinctfixed}, or it must be the case that ``no" is the correct answer, as by the same Lemma, if $(a_{i_1} \; a_{i_2} \dots)$ has a completion to an even stable cycle in $I$, then $C=(a_{i_1} \; a_{i_2} \; a_{i_3} \dots a_{i_{2k}})$ is the only candidate.
\end{proof}

We have shown above that if $C$ is a partial even stable cycle of the considered instance $I$, then we must have a corresponding invariant odd cycle $C'$ of some related instance of $I$, and that therefore, $C$ has a unique completion. Furthermore, Theorem \ref{theorem:completetwo} establishes that we only need to consider a potential successor-predecessor pair to see whether that pair is indeed part of some longer unique stable even cycle, and we can do so in linear time.

Furthermore, given an instance $I$ and a (non-partial) candidate cycle $C$, we can also efficiently verify whether it is in some stable partition as it is, by Theorem \ref{theorem:completetwo}, sufficient to consider any successor-precessor pair in it. If so, we can also explicitly construct a stable partition $\Pi$ containing $C$ using the method described in the proof, where $\Pi = (\Pi'\setminus C')\doubleplus C$.

\begin{corollary}
\label{corollary:verify}
    Given an {\sc sr} instance $I$ and a candidate (not necessarily stable) cycle $C$ of even length, we can verify whether $C$ is stable and, if so, construct a stable partition $\Pi$ containing $C$, all in linear time.
\end{corollary}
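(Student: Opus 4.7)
The plan is to reduce the verification problem directly to the completion procedure of Corollary \ref{corollary:completetwo}, exploiting its uniqueness guarantee to turn a ``find-one'' routine into a ``test-this-one'' routine. Write $C = (a_{i_1}\; a_{i_2}\; a_{i_3}\; \ldots\; a_{i_{2k}})$ with $k \geq 2$.

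First I would pick the leading predecessor-successor pair $(a_{i_1}, a_{i_2})$ of $C$ and invoke the procedure from Corollary \ref{corollary:completetwo} on the partial cycle $(a_{i_1}\; a_{i_2}\; \ldots)$: build $I_S$ from $I$ via Algorithm \ref{alg:isconstruction}, compute any stable partition $\Pi'$ of $I_S$ in linear time using Tan's algorithm, and inspect the cycle $C'$ of $\Pi'$ that contains $a_{i_1}$. If $C'$ is not an odd cycle of $\Pi'$, or if the single $\{a_{i_1}, a_{i_3}'\}$-blocking check performed when splicing $a_{i_2}$ back in fails, then by Theorem \ref{theorem:oddimplied} no even stable cycle of $I$ starts with $(a_{i_1}, a_{i_2})$, so $C$ itself cannot be stable and I output ``not stable''.

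Otherwise, the procedure returns a unique candidate completion $C^\ast = (a_{i_1}\; a_{i_2}\; a_{i_3}^\ast\; \ldots\; a_{i_{2k^\ast}}^\ast)$ together with the explicit stable partition $\Pi = (\Pi' \setminus C') \cup C^\ast$ of $I$ that contains $C^\ast$. By the uniqueness clause of Corollary \ref{corollary:completetwo} (an instance of Corollary \ref{corollary:distinctfixed}), every stable even cycle of $I$ whose first two agents are $a_{i_1}$ and $a_{i_2}$ must coincide with $C^\ast$. I would therefore compare $C$ with $C^\ast$ agent by agent: if $2k = 2k^\ast$ and $a_{i_j} = a_{i_j}^\ast$ for every $j$, then $C$ is stable and $\Pi$ is a stable partition containing it; otherwise, $C$ is not a stable cycle of $I$.

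Each step runs in time linear in $n$: Algorithm \ref{alg:isconstruction} is linear, Tan's algorithm is linear, locating $a_{i_1}$'s cycle in $\Pi'$ and performing the single blocking-pair check take linear time, and the element-wise comparison between $C$ and $C^\ast$ is $O(k) = O(n)$. There is no serious obstacle beyond a careful reading of Corollary \ref{corollary:completetwo}: namely, ensuring that a negative answer of the completion procedure on $(a_{i_1}, a_{i_2})$ genuinely rules out $C$ (which it does by the uniqueness of the completion), and that a positive answer already produces the stable partition $\Pi$ we are required to output, so that no further computation is needed once $C = C^\ast$ has been confirmed.
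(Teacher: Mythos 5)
Your proposal is correct and follows essentially the same route as the paper: the paper justifies this corollary in the preceding paragraph by observing that, via Corollary \ref{corollary:completetwo}, a single predecessor-successor pair of $C$ determines the unique even stable cycle (if any) extending that pair, so one runs the completion procedure and compares its output with $C$, recovering $\Pi=(\Pi'\setminus C')\cup C$ from the construction when they agree. Your write-up simply makes the comparison step and the uniqueness argument explicit, which is exactly the intended argument.
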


\subsection{All Stable Cycles}
\label{section:allcycles}

With stable matchings, we can consider the problem of finding all stable pairs, which is defined as the union of all pairs contained in some stable matching of an {\sc sr} instance $I$,
$$SP(I) = \{\{a_i,a_j\}\in M \; \vert \; M\in S(I)\}$$
where $S(I)$ is the set of all stable matchings admitted by $I$. Similarly, one can ask about the fixed pairs, which are the pairs contained in all stable matchings of $I$. Analogously, we introduce the corresponding questions for stable partitions: given an {\sc sr} instance $I$, what are all the stable cycles, denoted $SC(I)$, admitted by some stable partition of the instance? Similarly, what are the (fixed) stable cycles, denoted $FC(I)$, admitted by all its stable partitions?

We have seen how to enumerate all reduced stable partitions of an instance and also how to reconstruct longer even cycles from transpositions, allowing us to enumerate all stable partitions, not necessarily reduced. However, the enumeration of reduced stable partitions might naturally run in time exponential in the number of agents, similar to the enumeration of all stable matchings of a solvable instance. In contrast, when only looking at stable pairs rather than matchings, \textcite{federegalapprox} presented an $O(n^3)$ algorithm to find them in a {\sc sr} instance with $n$ agents. Therefore, it is natural to ask whether we can construct a similar efficient algorithm for finding all stable cycles of a stable partition. 

Given an {\sc sr} instance $I$, we are looking to find stable cycles 
$$SC(I) = \{C\in \Pi \; \vert \; \Pi \in P(I) \}$$
where $P(I)$ are all stable partitions admitted by $I$. Now note that the odd cycles $\mathcal{O}_I$ are invariant and can be computed efficiently, so we require to find the union of all stable cycles of even length. Observe that 
$$\mathcal{O}_I \subseteq FC(I) \subseteq RSC(I) \subseteq SC(I),$$
where $RSC(I)$ are the reduced stable cycles (i.e. stable transpositions and odd cycles) of $I$. This is because for any stable cycle $C$ of even length larger than 2, we can break it into two collections of stable transpositions by Lemma \ref{lemma:atleasttwo}, such that $C$ cannot be fixed. Note that if $\vert P(I) \vert = 1$, then the second and third inclusions hold with equality and if, furthermore,  $\vert A(\mathcal{O}_I)\vert=n$, then the first inequality also holds with equality. Thus, we can find all reduced fixed cycles directly in $O(n^2)$ time by constructing the odd cycles $\mathcal{O}_I$, which are fixed, and the sub-instance $I_T$ (computed using Algorithm \ref{alg:itconstruction}). Then, we can compute the reduced phase-1 table of $I_T$, from which we can directly derive the fixed pairs (i.e. fixed stable transpositions) as proven by \textcite{gusfield89}. 

\begin{theorem}
\label{theorem:findallfc}
    Given an {\sc sr} instance $I$ with $n$ agents, we can find all its fixed cycles in linear time. 
\end{theorem}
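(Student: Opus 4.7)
The plan is to exploit the decomposition $FC(I) = \mathcal{O}_I \cup (FC(I) \cap \mathcal{T})$ that is already implicit in the discussion preceding the theorem, where $\mathcal{T}$ denotes the set of stable transpositions. Since any stable even cycle of length greater than $2$ can, by Lemma \ref{lemma:atleasttwo}, be broken into two distinct collections of transpositions, no such cycle can appear in every stable partition; hence the only cycles that can possibly be fixed are the odd cycles and the transpositions. The odd cycles $\mathcal{O}_I$ are invariant and therefore all fixed, so the task reduces to identifying which stable transpositions are fixed.

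First I would run Tan's algorithm on $I$ to obtain a reduced stable partition $\Pi_0 = M_0 \mathcal{O}_I$ in linear time; this simultaneously yields the invariant odd cycles $\mathcal{O}_I$, which are added to the output. Next I would call Algorithm \ref{alg:itconstruction} on $I$ to construct the sub-instance $I_T$ in linear time. By Theorem \ref{theorem:correspondence}, the perfect stable matchings of $I_T$ are in bijection with the reduced stable partitions of $I$, via $M_i \mapsto M_i \mathcal{O}_I$. Under this bijection, a transposition $(a_i \; a_j)$ is contained in every reduced stable partition of $I$ precisely when the pair $\{a_i, a_j\}$ is contained in every stable matching of $I_T$, i.e.\ is a fixed pair of $I_T$.

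To identify the fixed pairs of $I_T$, I would appeal to the classical result of \textcite{gusfield89}: once the phase-1 reduced preference table of a solvable {\sc sr} instance is computed, the fixed pairs are exactly those $\{a_i,a_j\}$ for which $a_j$ is both the first and last entry on $a_i$'s phase-1 reduced list (equivalently, where $a_i$'s reduced list has length one). The phase-1 reduction runs in linear time, and a single pass through the reduced table identifies all such pairs. These pairs are then output as transpositions, together with $\mathcal{O}_I$, giving $FC(I)$.

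Finally, because stable transpositions in $I$ correspond exactly to stable pairs in $I_T$, and a transposition is fixed in $I$ iff the pair is fixed in $I_T$ (no even cycle of length $>2$ can be fixed, so considering only reduced stable partitions of $I$ is sufficient for determining which transpositions are fixed), the union of $\mathcal{O}_I$ with the fixed pairs of $I_T$ is exactly $FC(I)$. Every step -- computing $\Pi_0$, constructing $I_T$, performing the phase-1 reduction, and scanning for fixed pairs -- runs in linear time in the size of $I$, yielding the claimed bound. The only subtle point, and the one I would state carefully, is the justification that fixedness of a transposition in $I$ is equivalent to fixedness of the corresponding pair in $I_T$; this follows directly from Theorem \ref{theorem:correspondence} together with the observation that every non-reduced stable partition can be ``reduced'' via Lemma \ref{lemma:atleasttwo} without losing any transposition that already appears in a reduced stable partition.
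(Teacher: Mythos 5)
Your proposal is correct and follows essentially the same route as the paper: decompose $FC(I)$ into the invariant odd cycles plus the fixed transpositions, rule out even cycles of length greater than $2$ via Lemma \ref{lemma:atleasttwo}, pass to $I_T$ via Theorem \ref{theorem:correspondence}, and read off the fixed pairs from the phase-1 reduced table of \textcite{gusfield89}. Your explicit justification that fixedness among reduced stable partitions implies fixedness among all stable partitions is a point the paper leaves implicit, and is a welcome addition rather than a deviation.
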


Theorem \ref{theorem:correspondence} states that any reduced stable partition $\Pi_i$ is the union of some stable matching $M_i$ of the sub-instance $I_T$ of $I$ (constructed by Algorithm \ref{alg:itconstruction}), and $\mathcal{O}_I$. Therefore, in order to find all stable transpositions, we can apply the known all stable pairs algorithm to $I_T$. 

\begin{lemma}
\label{lemma:reducedcycles}
    Given an {\sc sr} instance $I$ with $n$ agents, we can find all reduced stable cycles, $RSC(I)$, i.e. odd cycles and stable transpositions, in $O(n^2 + n_1^3)$ time, where $n_1=n-\vert A(\mathcal{O}_I)\vert$.
\end{lemma}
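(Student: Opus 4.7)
The plan is to decompose the problem along the already-established structural split: the reduced stable cycles of $I$ consist exactly of the invariant odd cycles $\mathcal{O}_I$ together with the stable transpositions, and these two families can be obtained from two separate computations whose costs add up to the claimed bound.

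First, I would compute any reduced stable partition $\Pi_0 = M_0\mathcal{O}_I$ of $I$ using Tan's algorithm in linear time. This immediately yields $\mathcal{O}_I$, and by the invariance of odd cycles stated in the earlier theorem of Tan, these are precisely the odd cycles in $RSC(I)$. Simultaneously, I would run Algorithm \ref{alg:itconstruction} to build the sub-instance $I_T$; by the analysis accompanying that algorithm this takes $O(n^2)$ time overall. After this preprocessing step, we have $\mathcal{O}_I$ on hand, and $I_T$ is a solvable {\sc sr} instance on the $n_1 = n - |A(\mathcal{O}_I)|$ agents not appearing in an odd cycle.

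Second, I would invoke the known all-stable-pairs algorithm of \textcite{federegalapprox} on $I_T$, which runs in $O(n_1^3)$ time, to enumerate the set $SP(I_T)$ of all stable pairs of $I_T$. Each such pair $\{a_i,a_j\}$ corresponds to a stable transposition $(a_i\; a_j)$ of $I$: by Theorem \ref{theorem:correspondence}, $\{a_i,a_j\}$ lies in some stable matching $M$ of $I_T$ iff $M\mathcal{O}_I$ is a reduced stable partition of $I$ containing the transposition $(a_i\; a_j)$. Conversely, any stable transposition of $I$ must belong to some reduced stable partition (indeed every stable transposition is itself a reduced cycle), and the same correspondence shows it arises from a stable pair of $I_T$. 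Hence the stable transpositions of $I$ are in bijection with $SP(I_T)$.

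Outputting $\mathcal{O}_I$ together with the transposition $(a_i\; a_j)$ for each $\{a_i,a_j\}\in SP(I_T)$ then produces exactly $RSC(I)$. The total running time is the sum of the $O(n^2)$ preprocessing (Tan plus Algorithm \ref{alg:itconstruction}) and the $O(n_1^3)$ call to the all-stable-pairs algorithm, giving $O(n^2 + n_1^3)$ as claimed. There is no real obstacle here: the work is essentially bookkeeping on top of Theorem \ref{theorem:correspondence} and the Feder algorithm, the only care being to verify that the correspondence of Theorem \ref{theorem:correspondence} indeed restricts to a bijection between stable pairs of $I_T$ and stable transpositions of $I$, which follows directly because every reduced stable partition is obtained by attaching $\mathcal{O}_I$ to a stable matching of $I_T$.
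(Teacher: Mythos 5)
Your proposal is correct and follows essentially the same route as the paper: compute $\mathcal{O}_I$ and $I_T$ in $O(n^2)$ time via Tan's algorithm and Algorithm \ref{alg:itconstruction}, run the all-stable-pairs algorithm of \textcite{federegalapprox} on $I_T$ in $O(n_1^3)$ time, and use Theorem \ref{theorem:correspondence} in both directions to identify stable pairs of $I_T$ with stable transpositions of $I$. The only point both you and the paper treat lightly is that a stable transposition might a priori only appear in a non-reduced stable partition, so one should invoke Lemma \ref{lemma:atleasttwo} to pass to a reduced one containing it, but this is immediate.
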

\begin{proof}
    We know that we can find all stable pairs of a solvable {\sc sr} instance with $n_1$ agents in $O(n_1^3)$ time using the algorithm by \textcite{federegalapprox}. 

    Consider any pair $p$ produced by the algorithm. Then $p$ is a stable pair of some stable matching $M$ of $I_T$, so by Theorem \ref{theorem:correspondence}, $M\mathcal{O}_I$ is a reduced stable partition of $I$ and thus $p$ a stable transposition. Similarly, if $t$ is some stable transposition of $I$, then there exists some reduced stable partition $\Pi_i$ containing $t$. By Theorem \ref{theorem:correspondence}, there exists some stable matching $M_i$ of $I_T$ such that $\Pi_i=M_i\mathcal{O}_I$. As $t$ is a transposition, we must have that $t\in M_i$, so the stable pairs algorithm must detect and output $t$.

    The complexity follows directly from the size of $I_T$ and the justification by \textcite{federegalapprox}.
\end{proof}

This lets us extend Lemma \ref{lemma:countevencycles} to include transpositions.

\begin{restatable}{theorem}{numcycles}
    Any {\sc sr} instance with $n$ agents admits at most $O(n_1^2+n_2)=O(n^2)$ stable cycles, where $n_2=\vert A(\mathcal{O}_I)\vert$ and $n_1=n-n_2$.
\end{restatable}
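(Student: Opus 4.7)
The plan is to partition the set of stable cycles of an instance $I$ into three disjoint groups according to their length type, bound each group, and sum the bounds. Specifically, a stable cycle is either a transposition, an even cycle of length greater than 2, or an odd cycle, and these cases are clearly mutually exclusive and cover all possibilities.

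First I would dispense with all cycles whose length is not 2 by directly invoking Corollary \ref{corollary:countevencycles}, which already gives a bound of $O(n_1^2+n_2)$ on the number of stable cycles that are either odd cycles or even cycles of length greater than 2. This step is essentially free, as the work has already been done: odd cycles are invariant and partition $A(\mathcal{O}_I)$ (yielding $O(n_2)$ of them), while Corollary \ref{corollary:distinctfixed} forces each consecutive predecessor-successor pair in $A' = A\setminus A(\mathcal{O}_I)$ to appear in at most a bounded number of even cycles of length $>2$, yielding $O(n_1^2)$ such cycles.

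It remains to bound the number of stable transpositions. Every stable transposition consists of two distinct agents, and by the proof of Lemma \ref{lemma:reducedstable}, every such pair must lie entirely within $A'$, since any agent of $A(\mathcal{O}_I)$ belongs exclusively to its (invariant) odd cycle in every reduced stable partition, and transpositions only appear in reduced stable partitions via the perfect matchings of $I_T$. Hence the number of stable transpositions is bounded above by $\binom{n_1}{2} = O(n_1^2)$.

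Summing the three contributions gives $O(n_2) + O(n_1^2) + O(n_1^2) = O(n_1^2 + n_2)$, and since $n_1 + n_2 = n$, this is $O(n^2)$. I do not anticipate any real obstacle here: the substantive work lies in Corollary \ref{corollary:countevencycles} and in the structural characterization of reduced stable partitions via $I_T$, both of which have already been established; the present theorem is essentially an aggregation of those results together with the trivial $\binom{n_1}{2}$ bound on pairs in $A'$.
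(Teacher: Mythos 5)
Your proposal is correct and follows essentially the same route as the paper: split the stable cycles into transpositions versus the rest, invoke Corollary \ref{corollary:countevencycles} for the odd and longer even cycles, and bound the transpositions by $O(n_1^2)$. The only cosmetic difference is that the paper obtains the transposition bound by citing the $O(k^2)$ bound on stable pairs of the solvable sub-instance $I_T$, whereas you use the trivial $\binom{n_1}{2}$ count of agent pairs in $A'$ (justified, as you note, by the invariance of odd cycles); these are the same bound in substance.
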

\begin{proof}
    By \textcite{gusfield89} we know that any solvable instance with $k$ agents admits at most $O(k^2)$ stable pairs, so we will have at most $O(n_1^2)$ stable transpositions. Due to Lemma  \ref{lemma:countevencycles}, there are also at most $O(n_1^2+n_2)$ stable cycles that are not transpositions, so the result follows.
\end{proof}

Now to compute all stable cycles, not necessarily reduced or fixed, we can use our previous work on reconstructing stable cycles of even length larger than 2 to show the following. 

\begin{theorem}
\label{theorem:findallsc}
    Given an {\sc sr} instance $I$ with $n$ agents, we can find all its stable cycles, $SC(I)$, in $O(n_1^2(n^2+n_1))=O(n^4)$ time, where $n_2=\vert A(\mathcal{O}_I)\vert$ and $n_1=n-n_2$.
\end{theorem}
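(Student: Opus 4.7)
The plan is to combine the algorithm from Lemma \ref{lemma:reducedcycles} for reduced stable cycles with the completion procedure implied by Corollary \ref{corollary:completetwo} to obtain all even stable cycles of length at least 4. Every stable cycle of $I$ is either an odd cycle of $\mathcal{O}_I$, a stable transposition, or an even cycle of length $\geq 4$ whose agents all lie in $A' := A \setminus A(\mathcal{O}_I)$, so it suffices to enumerate the last class and add the reduced stable cycles already computed.

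First I would compute a reduced stable partition of $I$ using Tan's algorithm (extracting $\mathcal{O}_I$) and then run Lemma \ref{lemma:reducedcycles} to obtain $RSC(I)$ in $O(n^2 + n_1^3)$ time. Next, I would iterate over all $O(n_1^2)$ ordered pairs $(a_{i_1}, a_{i_2})$ of distinct agents in $A'$ and for each pair run the procedure from Corollary \ref{corollary:completetwo}: build $I_S$ via Algorithm \ref{alg:isconstruction}, run Tan's algorithm on $I_S$ to obtain a stable partition $\Pi'$, extract the odd cycle $C'$ of $\Pi'$ containing $a_{i_1}$ (skip the pair if no such cycle exists), form the candidate $C = (a_{i_1}\;a_{i_2}\;a_{i_3}\;\dots\;a_{i_{2k}})$ with $C' = (a_{i_1}\;a_{i_3}\;\dots\;a_{i_{2k}})$, and verify via Corollary \ref{corollary:verify} that $C$ is stable in $I$. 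By Corollary \ref{corollary:distinctfixed}, if the partial cycle $(a_{i_1}\;a_{i_2}\;\dots)$ has any completion to an even stable cycle, then $C$ is the unique such completion; and by Theorem \ref{theorem:oddimplied} combined with Corollary \ref{corollary:completetwo}, the procedure will find this $C$ whenever it exists. Ranging over all pairs in $A'$ therefore enumerates every even stable cycle of length $\geq 4$ at least once.

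Each iteration is dominated by Tan's algorithm on $I_S$, which runs in time linear in the input size $O(n^2)$; Algorithm \ref{alg:isconstruction} and the verification via Corollary \ref{corollary:verify} each cost $O(n^2)$ as well. Since there are $O(n_1^2)$ pairs, the loop contributes $O(n_1^2 \cdot n^2)$ time, and adding the cost of the initial reduced-cycle computation yields $O(n^2 + n_1^3 + n_1^2 n^2) = O(n_1^2(n^2 + n_1))$, matching the claimed bound. The total output size is $O(n_1^2 + n_2)$ by the extended Corollary \ref{corollary:countevencycles}, so the running time is within a polynomial factor of the output.

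The main subtlety to handle is avoiding duplicate output: every even stable cycle of length $2k$ is discovered by $2k$ distinct ordered pairs $(a_{i_j}, a_{i_{j+1}})$. I would resolve this by maintaining an $n \times n$ Boolean table of ordered agent pairs that have already been classified as belonging to some discovered even cycle; whenever a new cycle $C$ is found and verified, mark all $2k$ consecutive ordered pairs in $C$ as seen, and skip any pair in the outer loop that is already marked. This bookkeeping adds only $O(n_1)$ time per discovered cycle and $O(1)$ per skipped pair, so it does not affect the stated complexity. The main thing to be careful about is precisely this bookkeeping step together with the verification in Corollary \ref{corollary:verify}, since without it one would both re-run Tan's algorithm on the same cycle $2k$ times and risk enumerating a candidate $C$ where $\{a_{i_1}, a_{i_3}\}$ turns out to block $(\Pi'\setminus C')\cup C$.
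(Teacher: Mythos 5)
Your proposal is correct and follows essentially the same route as the paper: compute the reduced stable cycles via Lemma \ref{lemma:reducedcycles}, then for each of the $O(n_1^2)$ candidate predecessor-successor pairs use Corollary \ref{corollary:completetwo} to find the unique completion and Corollary \ref{corollary:verify} to check it, marking consecutive pairs of discovered cycles to avoid duplicates. The only cosmetic difference is that the paper prunes the candidate pairs to the stable transpositions (via Theorem \ref{theorem:tworeduced}) while you range over all ordered pairs in $A'$, which does not affect the stated asymptotic bound.
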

\begin{proof}
    Due to Lemma \ref{lemma:reducedcycles}, we can find all reduced stable cycles, i.e. all stable transpositions and odd cycles, in $O(n^2 + n_1^3)$ time.

    Now with regards to the stable cycles of even length larger than 2, we can consider all pairs of agents that could be part of a longer even stable cycle. There are $\binom{n}{2}=O(n^2)$ such subsets in general, but we can limit ourselves to the stable transpositions we found in the previous step due to Theorem \ref{theorem:tworeduced}, in which we showed that any even cycle can be broken into at least two collections of stable transpositions, therefore no agent pair that is not a stable transposition will lead to an even stable cycle.
    
    For each pair $\{a_i, a_j\}$, we need to consider each of the two orderings $(a_i \; a_j \; \dots )$ and $(a_j \; a_i \; \dots )$. Clearly, there are $O(n_1^2)$ such pairs. Then, for each ordering, we can try to find its unique completion in $O(n^2)$ time due to Theorem \ref{theorem:completetwo} and finally verify that it is a stable cycle, also in $O(n^2)$ time due to Corollary \ref{corollary:verify}. Altogether, we consider $O(n_1^2)$ candidates and need $O(n^2)$ time for each, therefore requiring $O(n_1^2n^2)$ time. As any consecutively ordered pair of agents will only be part of at most one stable cycle of length larger than 2 by Theorem \ref{theorem:completetwo}, we can eliminate all such pairs from the candidate set once detected. This will ensure that the even cycles are enumerated exactly once.

    Altogether, the steps take $O(n_1^2(n^2+n_1))$, or $O(n^4)$ time.
\end{proof}

\subsection{Enumerating Stable Partitions}
\label{section:enumeratingpartitions}

Now that we know how to find all stable cycles of our {\sc sr} instance $I$ efficiently, we show that this leads to an alternative approach to enumerating all stable partitions, $P(I)$, compared to the naive method from Corollary \ref{corollary:allpartitionsfromreduced}. In this procedure, we will recursively build up all stable partitions, starting with the invariant odd cycles and adding one additional cycle with each recursive call. Simultaneously, at each recursive step, we fix an arbitrary agent not yet in any cycle of the partial partition and branch on all its cycles (as given by $SC(I)$) that, together with the current partial partition, lead to at least one stable partition. This way, the recursive tree consists of paths leading from the origin (a partial stable partition containing only the odd cycles) to all leaf nodes (all stable partitions).

The procedure \texttt{EnumP$_I$} implements the logic described in the previous paragraph to enumerate all stable partitions. The parameter ``cycles" contains a partial stable partition and grows with each recursive call until it is a full stable partition. The parameter ``agents" contains all agents in $I$ not yet in any cycle of ``cycles" and is naturally decreasing in size as ``cycles" grows. Finally, the parameter $SC$ contains all cycles of some stable partition of $I$ containing some agent in ``agents", not necessarily compatible with ``cycles" (in the sense that not every cycle in $SC$ belongs to a stable partition with ``cycles"). Note that by this behaviour, if ``agents" is empty, then ``cycles" is a stable partition and we print it. If not, then the algorithm proceeds to return and remove an arbitrary agent $a_i$ from ``agents". Then, the algorithm checks whether there is a unique cycle $c$ containing $a_i$ such that ``cycles"$\doubleplus c$ is part of some stable partition of $I$ using the procedure \texttt{FC} from Algorithm \ref{alg:fc}. If there is a unique such $c$, then we reduce the set of stable cycles $SC$ to $SC_{red}$ which does not contain any cycles with agents in $c$ and recurse with the increased partial stable partition ``cycles"$\doubleplus c$, the reduced set of stable cycles to choose from $SC_{red}$, and the reduced set of remaining agents ``agents"$\setminus A(c)$. 

If there is no unique choice $c$, then the procedure loops through all candidate cycles $c'\in SC$ that contain $a_i$, at least two of which must correspond to a partial stable partition ``cycles"$\doubleplus c'$ by construction. Whether ``cycles"$\doubleplus c'$ does correspond to a partial stable partition or only a collection of cycles that is not a subset of any stable partition can be determined with the procedure \texttt{verify} which is omitted but uses the result from Corollary \ref{corollary:verify} to check whether the collection of cycles completes to a stable partition. If so, we, again, create the appropriate $SC_{red}$ and recurse as above.

The helper function \texttt{FC} which will speed up the execution of calls where the partial partition (in the sense that the cycles contained in the parameter are part of some stable partition, but may be missing some cycles) only allows one fixed cycle from all candidate cycles that contain the set agent. This must always be a transposition. To see this, consider an {\sc sr} instance $I$. If we have a collection of cycles $C$ including all odd cycles that are part of some stable partition of $I$ and a single agent $a_i\in A(I)$ not in $A(C)$, then by the existence of stable partitions, there must exist a cycle $c\notin C$ such that $C\doubleplus c$ is part of some stable partition of $I$. Now if $c$ is unique, then $c$ must be a transposition. Suppose not, then $c$ must be of even length at least 4 (by $C$ containing all odd cycles), but then we can break it into two collections of stable transpositions $c_1,c_2$ such that also $C\doubleplus c_1$ and $C\doubleplus c_2$ are part of some stable partition. Thus, unique choices must be transpositions. 

Now, the procedure \texttt{FC} described in Algorithm \ref{alg:fc} takes such a collection of cycles $C$ as input ``cycles", as well as a single agent $a_i\in A(I)\setminus A(C)$ and generates a sub-instance $I'$ of $I$ consisting only of the agents not in $C$ with their preference lists truncated such that any stable partition $\Pi$ of $I'$ corresponds to a stable partition $\Pi\doubleplus C$ of $I$. In fact, $I'$ must be solvable as there must exist some such stable partition by construction and if any would contain an odd cycle, ``cycles" would not contain all odd cycles. Thus, by the argument in the previous paragraph, to check whether the agent $a_i$ is in a unique cycle $c$ such that $C\doubleplus c$ is a stable partition of $I$, it suffices to apply the fixed pairs algorithm for stable matchings, for example using the linear time procedure (denoted by \texttt{getFixedPairs}) by \textcite{gusfield89}, and returning (through an $O(n)$ search) the fixed pair containing $a_i$, if it exists, otherwise returning null to report that no such fixed pair exists (to indicate that the choice is not unique).

\begin{algorithm}[!htb]
\renewcommand{\algorithmicrequire}{\textbf{Input:}}
\renewcommand{\algorithmicensure}{\textbf{Output:}}

    \begin{algorithmic}[1]

    \Require{cycles: a collection of cycles which is a subset of some stable partition of $I$ and includes all odd cycles of $I$; $a_i$ : a single agent in $A(I)\setminus A($cycles$)$}
    \Ensure{pair : a pair of two agents in $A(I)\setminus A($cycles$)$, potentially null}

    \State $I' \gets I$ (make a copy of $I$)
    
    \For{cycle in cycles}
        \For{$a_j$ in cycle}
            \For{$a_p$ preferred by $a_j$ over cycle$^{-1}(a_j)$ in $I$}
                \State delete $a_j$ and all successors of $a_j$ from $a_p$'s preference list in $I'$ (if exists)
            \EndFor
            \State delete $a_j$ from the set of agents of $I'$ altogether
        \EndFor
    \EndFor

    \State fixed $\gets$ \texttt{getFixedPairs}($I'$)
    \If{$\{a_i, a_k\}$ in fixed for some $a_k\in A(I')$}
        \State \Return ($a_i$, $a_k$)
    \EndIf
    \State \Return null
    \end{algorithmic}

    \caption{\texttt{FC}(cycles, $a_i$), Fixed Cycle Helper Function for Algorithm \ref{alg:pienum}}
    \label{alg:fc}
\end{algorithm}

Overall, our method to enumerate $P(I)$ first finds all $O(n_1^2+n_2)$ stable cycles, $SC(I)$, in $O(n_1^2(n^2+n_1))$ time as previously shown, and then calling the procedure \texttt{EnumP$_I$} described in Algorithm \ref{alg:pienum} with stable cycles $SC(I)\setminus\mathcal{O}_I$ as parameter $SC$, the invariant odd cycles $\mathcal{O}_I$ as parameter ``cycles", and agents $A(I)\setminus A(\mathcal{O}_I)$ as parameter ``agents".

\begin{algorithm}[!htb]
\renewcommand{\algorithmicrequire}{\textbf{Input:}}
\renewcommand{\algorithmicensure}{\textbf{Output:}}
    \begin{algorithmic}[1]

    \Require{$SC$: all cycles of some stable partition of $I$ containing some agent in ``agents" and no agent in $A($cycles$)$; cycles: a collection of cycles which is a subset of some stable partition of $I$ and includes all odd cycles of $I$; agents: all agents in $A(I)\setminus A($cycles$)$}
    
    \If{agents $=\varnothing$}
        \State {print(cycles)}
        \State \Return
    \EndIf
    
    \State {$a_i \gets $ agents.pop()}

    \State {fixed $\gets$ \texttt{FC}($I$, cycles, $a_i)$}

    \If{fixed $\neq$ null}
        \State $SC_{red} \gets \{(a_{j_1}\;\dots\; a_{j_k})\in SC \;\vert\; \forall\; 1\leq s\leq k, \; a_{j_s} $\texttt{ not in fixed}$\}$
        \State \texttt{EnumP}$_I$($SC_{red}$, cycles $\doubleplus$ fixed, agents$\setminus A$(fixed))
    \Else
        \For{{cycle} in $SC$ containing $a_i$}
            \If{\texttt{verify}(cycles $\doubleplus$ cycle)}
                \State $SC_{red} \gets \{(a_{j_1}\;\dots\; a_{j_k})\in SC \;\vert\; \forall\; 1\leq s\leq k, \; a_{j_s} ${ not in cycle}$\}$
                \State \texttt{EnumP}$_I$($SC_{red}$, cycles $\doubleplus$ cycle, agents$\setminus A$(cycle))
            \EndIf
        \EndFor
    \EndIf
    \end{algorithmic}
    \caption{\texttt{EnumP$_I$}($SC$, cycles, agents), recursively enumerates all stable partitions $P(I)$}
    \label{alg:pienum}
\end{algorithm}

\begin{restatable}{theorem}{enumpi}
\label{theorem:enumpi}
    Let $I$ be an instance of {\sc sr} with $n$ agents. Then Algorithm \ref{alg:pienum} enumerates all stable partitions $P(I)$ of $I$ without repetition in $O(\vert P(I)\vert n^3 + n^4)$ time, or more specifically, $O((\vert P(I)\vert + n_1)n_1n^2+n_1^3)$ time, where $n_2=\vert A(\mathcal{O}_I)\vert$ and $n_1=n-n_2$, when called as described above.
\end{restatable}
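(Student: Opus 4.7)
The plan is to prove three things: soundness (every printed object is a stable partition of $I$), completeness with no repetition (every stable partition is printed exactly once), and the claimed time bound. I would maintain the invariant that at every recursive call \texttt{EnumP}$_I(SC,\text{cycles},\text{agents})$ we have $\mathcal{O}_I\subseteq\text{cycles}$ a disjoint collection of stable cycles that extends to at least one stable partition of $I$, $\text{agents}=A(I)\setminus A(\text{cycles})$, and $SC$ precisely the set of stable cycles of $I$ disjoint from $A(\text{cycles})$. Once $\text{agents}=\varnothing$, the invariant immediately forces ``cycles'' to be a stable partition of $I$, giving soundness of the base case. The invariant is preserved in the branching case directly by Corollary \ref{corollary:verify}, and in the forced-choice case by the correctness of \texttt{FC}, which I would justify through a generalisation of Theorem \ref{theorem:correspondence}: after the preference-list truncation carried out in Algorithm \ref{alg:fc}, the stable partitions of the sub-instance $I'$ correspond bijectively to the stable partitions of $I$ extending ``cycles'', so every fixed pair of $I'$ is a transposition contained in every such extension.

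For completeness and uniqueness I would fix an arbitrary $\Pi\in P(I)$ and trace the recursion path ending at $\Pi$. The popped agent $a_i$ at each call is determined by the state, hence the cycle $c_\Pi$ of $\Pi$ through $a_i$ is uniquely determined. If $\Pi$ is the only extension of ``cycles'' on $a_i$, then $c_\Pi$ must be a transposition (if it had length at least 4, Lemma \ref{lemma:atleasttwo} would produce a second extension) and \texttt{FC} returns it by the argument above; otherwise \texttt{FC} returns null and the loop iterates over every $c'\in SC$ containing $a_i$, in particular $c_\Pi$, which passes \texttt{verify}. Distinct partitions thus correspond to distinct root-to-leaf paths, and each is printed exactly once, while the \texttt{verify}/\texttt{FC} checks guarantee no spurious partitions are produced.

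For the runtime, Theorem \ref{theorem:findallsc} computes the initial $SC(I)$ in $O(n_1^2(n^2+n_1))$ time. Each recursive call consumes at least two agents of $A(I)\setminus A(\mathcal{O}_I)$, so the recursion tree has depth at most $n_1/2$, and since every node lies on some root-to-leaf path the total node count is at most $O(\vert P(I)\vert n_1)$. A chain node runs \texttt{FC} in $O(n^2)$; a branching node additionally loops over at most $O(n_1)$ cycles of $SC$ containing $a_i$ (bounded by the $n_1-1$ transpositions through $a_i$ plus, by Corollary \ref{corollary:distinctfixed}, the $O(n_1)$ even cycles through $a_i$, since a successor-predecessor pair uniquely determines such a cycle), each iteration costing $O(n^2)$ for \texttt{verify} and the filtering of $SC_{red}$. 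The number of branching nodes is at most $\vert P(I)\vert-1$, contributing $O(\vert P(I)\vert n_1 n^2)$, while the $O(\vert P(I)\vert n_1)$ chain nodes contribute the same order. Summing with the setup cost yields $O((\vert P(I)\vert+n_1)n_1 n^2+n_1^3)=O(\vert P(I)\vert n^3+n^4)$. The main technical obstacle is the generalisation of Theorem \ref{theorem:correspondence} alluded to above: I would need to argue carefully that the truncation in Algorithm \ref{alg:fc} exactly captures the constraints that ``cycles'' imposes on the remaining agents, so that \texttt{getFixedPairs} on $I'$ really detects exactly the forced transpositions of $I$ extending ``cycles''.
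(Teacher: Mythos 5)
Your proposal is correct and follows essentially the same route as the paper's proof: an inductive invariant on the recursion tree (partial partition always extends to a stable partition, $SC$ restricted to cycles disjoint from it), uniqueness via distinct cycle sets along distinct root-to-leaf paths, and the identical node-counting analysis — $O(\vert P(I)\vert)$ branching nodes at $O(n_1 n^2)$ each (branching factor bounded by Corollary \ref{corollary:distinctfixed}), $O(\vert P(I)\vert n_1)$ chain nodes at $O(n^2)$ each, plus the $O(n_1^2(n^2+n_1))$ setup from Theorem \ref{theorem:findallsc}. The one step you flag as the main technical obstacle — that the truncation in Algorithm \ref{alg:fc} yields a sub-instance whose stable partitions correspond exactly to the extensions of ``cycles'' — is also left at the level of prose justification in the paper, so your proposal is at parity with the published argument.
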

\begin{proof}
    To establish correctness, note that by definition of stable cycles, each stable cycle is contained in at least one stable partition and all stable partitions are made up only of (some, not necessarily all) stable cycles. At every node in the recursive tree, if all previous cycles on the recursive path are compatible, i.e., belong to some stable partition together, then if $a_i$ is null, all agents are part of some cycle and thus the path represents a stable partition and we can stop. Otherwise, if $a_i$ is not null, there must be some cycle or cycles $C$ in their $SC$ parameter that \texttt{cycles}$\doubleplus c$ is compatible for all $c\in C$. If $\vert C\vert = 1$, then we will detect it using the procedure \texttt{FC} and can immediately filter out all cycles in $SC$ from further consideration that contain any agent in $c$. Similarly, if $\vert C\vert >1$, then we will need to consider all cycles in $SC$ that contain $a_i$, check whether it is compatible with \texttt{cycles} and, if so, recurse with a subset of $SC$ not containing any cycles with overlapping agents. This procedure detects every stable partition exactly once, as by construction, every path contains distinct sets of cycles such that every agent is contained exactly once. As the first call passes only the odd cycles to the procedure, it is trivially compatible as the odd cycles are invariant, and correctness follows by induction.
    
    Now with regards to the computational complexity, note that in the recursive tree, there will be exactly $\vert P(I)\vert$ leaf nodes, $O(\vert P(I)\vert n_1)$ internal nodes with one child because each stable partition will contain $O(n)$ cycles and the odd cycles are given, and $O(\vert P(I)\vert)$ internal nodes with at least two children by properties of binary trees. Every leaf and internal node with one child takes $O(n^2)$ time due to Algorithm \ref{alg:fc} and building $SCred$, and every internal node with more children takes $O(n_1n^2)$ time because we consider at most $O(n_1)$ different stable cycles. This is because each agent can be contained in $O(n_1)$ transpositions and can have at most $O(n_1)$ successors in even cycles, and by Lemma \ref{lemma:distinctfixed} in at most one even cycle per predecessor-successor pair. We noted in the description of Algorithm \ref{alg:fc} that testing whether a candidate set of cycles is part of a stable partition takes $O(n^2)$ time, and building $SC_{red}$ also takes $O(n_1^2)$ time. Thus, the recursive procedure takes, overall, $O(\vert P(I) \vert n_1n^2)$ time, with an additional $O(n_1^2(n^2+n_1))$ overhead to compute all stable cycles once.
\end{proof}

\section{Complexity of Profile-Optimal Stable Partitions}
\label{section:complexity}

We have shown how to efficiently enumerate all reduced stable partitions admitted by an {\sc sr} instance $I$. However, as the size of the set $S(I_T)$ of all stable matchings of the sub-instance $I_T$ of $I$ (constructed using Algorithm \ref{alg:itconstruction}) can, in general, be exponential in the number of agents $n$, this might not be a good solution to finding a reduced stable partition with some special property. Therefore, we might be interested in computing some ``fair" or ``optimal" stable partitions directly. As the odd cycles of {\sc sr} instances are invariant, there is no point in seeking odd cycles with special properties for a given instance. However, we can consider the set of stable partitions of a given instance and seek, for example, a reduced stable partition with minimum regret. In this case, it is not enough to simply leverage the result from Theorem \ref{theorem:correspondence}, as we are deleting entries in the preference lists during the construction of $I_T$, which makes the resulting profile entries in the larger instance inconsistent. For this, we use a \emph{padding method} by replacing deleted agents in the preference lists by dummy agents which are guaranteed to be matched to each other by construction. Therefore, a solvable instance similar to $I_T$ is created in which the ranks are maintained and known algorithms can be applied. This method and its correctness is outlined in Section \ref{section:paddingargument}. Section \ref{section:complexityofproblems} establishes the computational complexity of computing various types of stable partitions directly, and \ref{section:approxofproblems} gives some details about the approximability of the problems that we show to be NP-hard in general.

\subsection{Padding Method}
\label{section:paddingargument}
Whenever, during the construction of $I_E$ and $I_T$, an agent is supposed to be deleted from some preference list, we can instead replace the corresponding preference list entry to be deleted with a dummy agent entry that is guaranteed to be matched to another dummy agent in any stable matching or partition by making them rank each other as their respective first choices. Agents in odd cycles are automatically replaced in all preference lists of agents that are not in odd cycles and can be deleted altogether at the end of the transformation. Clearly, for an instance with $n$ agents, we need at most $n$ dummy agents in the transformation as they can be reused across but not within preference lists. If the number of required dummy agents is odd, we can simply add one more dummy agent to guarantee that all dummy agents are in fixed pairs of any stable matching of the transformed instance. As we are replacing all agents in odd cycles with dummy agents in fixed pairs, we transform the instance $I$ to a solvable instance $I_P$ with at most $2n$ agents, allowing us to apply any of the known algorithms now to find stable matchings satisfying additional optimality criteria with no asymptotic increase in complexity. We denote these $k$ fixed pairs by $D=(d_1 \; d_2)\dots (d_{k-1} \; d_k)$ and can map the solution $M$ in $I_P$ back to a solution $\Pi$ for the original instance $I$ by excluding the dummy agents and including the original odd cycles, i.e., $\Pi=(M\setminus D)\doubleplus\mathcal{O}_I$.

The transformation from $I$ to $I_P$ is illustrated in Example \ref{table:dummies} on the instance previously shown in Example \ref{table:bijectionfail}. Two dummy agents, $d_1$ and $d_2$, are introduced, replacing the preference list entries that would otherwise be deleted in the construction of $I_T$. The changes due to the transformation are indicated using dotted circles and the stable matching corresponding to the one in Example \ref{table:bijectionfail} is indicated using unbroken circles.

\begin{table}[!htb]
\centering
    \begin{tabular}{ c | c c c c }
    $a_1$ & \circled{$a_2$} & \dottedcircled{$d_1$} & \dottedcircled{$d_2$} & \dots \\
    $a_2$ & $a_4$ & \circled{$a_1$} & \dots &  \\
    $a_3$ & \dottedcircled{$d_1$} & \circled{$a_4$} & \dots &  \\
    $a_4$ & \circled{$a_3$} & $a_2$ & \dots &  \\
    \dottedcircled{$d_1$} & \circled{$d_2$} & \dots & & \\
    \dottedcircled{$d_2$} & \circled{$d_1$} & \dots & &  
    \end{tabular}
\caption{An instance $I_P$ transformed from $I$ of Example \ref{table:bijectionfail}}
\label{table:dummies}
\end{table}

The correctness of the correspondence is established by the following Lemma.

\begin{restatable}{lemma}{dummycorrespondence}
    \label{lemma:dummycorrespondence}
    Let $I=(A,\succ)$ be any {\sc sr} instance and let $I_P$ be constructed from $I$ as described above. Then $M$ is a stable matching of $I_P$ if and only if $\Pi=(M\setminus D)\doubleplus\mathcal{O}_I$ is a reduced stable partition of $I$, where $D$ is the collection of dummy agent fixed pairs and $\mathcal{O}_I$ are the invariant odd cycles of $I$.
\end{restatable}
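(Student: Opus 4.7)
The plan is to bootstrap on Theorem \ref{theorem:correspondence}, which already establishes the bijection between stable matchings of the truncated instance $I_T$ and reduced stable partitions of $I$ via $M_i \mapsto M_i \mathcal{O}_I$. The task reduces to showing that a stable matching of $I_P$ is, up to the forced dummy pairs $D$, a stable matching of $I_T$, and conversely.

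First I would establish a preliminary fact: \textbf{$D \subseteq M$ for every stable matching $M$ of $I_P$.} Each dummy pair $\{d_{2k-1}, d_{2k}\}$ mutually ranks each other in first position. Since $I_P$ has complete preferences over an even number of agents (we added an extra dummy if necessary, precisely to ensure this), every stable matching of $I_P$ is perfect, so no agent is unmatched. If any dummy were matched to a non-partner, both partners of the pair would prefer each other to their current partners, producing a blocking pair. Hence $M' := M \setminus D$ is a matching on $A'$ alone.

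For the forward direction, let $M$ be stable in $I_P$ and set $M' = M \setminus D$. Since $M$ is perfect and dummies match only dummies, $M'$ is a perfect matching on $A'$. I would then show $M'$ is stable in $I_T$. Suppose $\{a_i, a_j\} \subseteq A'$ blocks $M'$ in $I_T$; then $\{a_i, a_j\}$ is acceptable in $I_T$, which by construction of $I_P$ means that $a_j$ was \emph{not} replaced by a dummy in $a_i$'s list in $I_P$ (and vice versa), with its original rank preserved. Moreover $M'(a_i), M'(a_j) \in A'$ are in the respective lists in $I_T$ and hence in $I_P$ at unchanged positions. The comparisons $a_j \succ_i^{I_T} M'(a_i)$ and $a_i \succ_j^{I_T} M'(a_j)$ therefore lift to the same comparisons in $I_P$, so $\{a_i, a_j\}$ would block $M$, a contradiction. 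Invoking Theorem \ref{theorem:correspondence} gives that $\Pi = M' \mathcal{O}_I$ is a reduced stable partition of $I$.

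For the backward direction, assume $\Pi = M' \mathcal{O}_I$ is a reduced stable partition of $I$; by Theorem \ref{theorem:correspondence}, $M'$ is a perfect stable matching of $I_T$. Set $M = M' \cup D$, a perfect matching of $I_P$. A blocking pair of $M$ in $I_P$ cannot involve a dummy (each dummy has its top-choice partner), so it must consist of two agents in $A'$ that are mutually acceptable in $I_P$. By construction of $I_P$, mutual acceptability in $I_P$ for two agents of $A'$ is exactly mutual acceptability in $I_T$; combined with rank preservation (as in the forward direction), this blocking pair would also block $M'$ in $I_T$, contradicting its stability.

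The only delicate point — and the one I would be most careful about in writing the full proof — is justifying rank preservation between $I_P$ and $I_T$ for the relevant comparisons. The subtlety is that in $I_P$ the list of $a_i$ contains dummies where agents in $A(\mathcal{O}_I)$, or agents deleted during the $I_T$-construction, used to be. The argument works because every agent that actually appears in a candidate blocking pair, together with its $M$-partner, is by hypothesis mutually acceptable in $I_T$, hence not replaced by a dummy in either direction; and the relative order among surviving $A'$-agents is inherited from $I$ in both $I_P$ and $I_T$. Once that invariant is clearly stated, the rest is bookkeeping and a direct appeal to Theorem \ref{theorem:correspondence}.
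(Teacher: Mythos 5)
Your proof is correct and follows essentially the same route as the paper's: both reduce the claim to the equivalence ``$M$ is stable in $I_P$ iff $M\setminus D$ is stable in $I_T$'' and then invoke Theorem \ref{theorem:correspondence}. Your version is slightly more careful in two places the paper leaves implicit --- explicitly verifying that $D\subseteq M$ for every stable matching $M$ of $I_P$, and spelling out why acceptability and relative ranks among $A'$-agents coincide in $I_P$ and $I_T$ --- but these are refinements of the same argument, not a different one.
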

\begin{proof}
    If suffices to show that $M$ is a stable matching of $I_P$ if and only if $M\setminus D$ is a stable matching of $I_T$ constructed from $I$ using Algorithm \ref{alg:itconstruction}. Then, by Theorem \ref{theorem:correspondence}, $M\setminus D$ is a stable matching of $I_T$ if and only if $\Pi=(M\setminus D)\doubleplus\mathcal{O}_I$ is a reduced stable partition of $I$ and the result follows.
    
    Indeed, if $M$ is a stable matching of $I_P$, then $M\setminus D$ is a  matching of $I_T$. Suppose $M\setminus D$ is not stable in $I_T$, then, by the construction of $I_P$, there exist two agents $a_i, a_j\in A\setminus A(\mathcal{O}_I)$ that prefer each other over their partners in $M\setminus D$. However the preference relations of all agents in $A\setminus A(\mathcal{O}_I)$ are the same for $I_P$ and $I_T$, so this would contradict the stability of $M$ in $I_P$.

    On the other hand, if $M'$ is a stable matching of $I_T$, then $M'\cup D$ is a matching in $I_P$ as we have $A(D)\cap A(M')=\varnothing$. Suppose that $M'\cup D$ is not stable in $I_P$. As the preference relations in $I_T$ for all agents matched in $M'$ remain the same in $I_P$, no two agents matched in $M'$ can block $M'\cup D$ in $I_P$. Also, no dummy agent can block with another dummy agent or an agent matched in $M'$, because all dummy agents are matched to their first choice.
\end{proof}

\subsection{Complexity of Problems}
\label{section:complexityofproblems}

The padding method allows us to establish further results on the correspondence between stable partitions and stable matchings. Specifically, we show that the profile-based optimality criteria for stable matchings carry over naturally to stable partitions and we can show that the computational complexity results are consistent. 

First, we look at stable partitions with minimum regret. Using the padding method, we can show that the problem of finding a minimum-regret stable partition is tractable. 

\begin{restatable}{lemma}{minregret}
    \label{lemma:minregret}
    Let $I$ be an {\sc sr} instance, let $\Pi=M_0\mathcal{O}_I$ be a reduced stable partition of $I$, where $M_0$ is a collection of transpositions and $\mathcal{O}_I$ are the odd cycles in $\Pi$, and $I_P$ be the transformed instance using the padding method. If $M_0$ is a stable matching of $I_P$ with minimum regret, then $\Pi$ is a reduced stable partition with minimum regret.
\end{restatable}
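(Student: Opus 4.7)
The plan is to leverage the bijection established in Lemma~\ref{lemma:dummycorrespondence} between stable matchings of $I_P$ and reduced stable partitions of $I$, and to show that this bijection essentially preserves regret up to an additive invariant coming from the odd cycles $\mathcal{O}_I$. Concretely, for any reduced stable partition $\Pi = M_0 \mathcal{O}_I$ of $I$, I would split the regret contribution into (i) $r_O := r(\mathcal{O}_I)$, the worst rank attained by a predecessor or successor assignment among agents in $A(\mathcal{O}_I)$, and (ii) $r(M_0)$, the worst rank of a partner's position among agents in $A\setminus A(\mathcal{O}_I)$, giving $r(\Pi) = \max(r_O, r(M_0))$. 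Because $\mathcal{O}_I$ is invariant across all stable partitions of $I$, the quantity $r_O$ depends only on $I$ and not on the choice of $M_0$.

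Next, I would examine the regret of the corresponding stable matching $M = M_0 \cup D$ in $I_P$. The padding construction replaces each preference-list entry that would be deleted during the construction of $I_T$ by a dummy agent while preserving the absolute rank positions of all remaining non-dummy entries. Thus, for any pair $\{a_i, a_j\} \in M_0$ with $a_i, a_j \in A \setminus A(\mathcal{O}_I)$, the rank of $a_j$ in $a_i$'s preference list in $I_P$ equals its rank in $I$. The dummy pairs in $D$ contribute rank $1$, since each dummy ranks its partner first. Hence $r(M) = \max(1, r(M_0)) = r(M_0)$ whenever $M_0 \neq \varnothing$ (and the claim is immediate when $M_0 = \varnothing$, since then $r(\Pi) = r_O$ is forced regardless of the choice of matching).

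Putting these together, minimising $r(M)$ over stable matchings of $I_P$ is equivalent to minimising $r(M_0)$, and hence, since $r_O$ is fixed, to minimising $\max(r_O, r(M_0)) = r(\Pi)$ over reduced stable partitions of $I$. To extend the conclusion from reduced stable partitions to \emph{all} stable partitions, I would argue that the two minima coincide: by Lemma~\ref{lemma:atleasttwo}, any non-reduced stable partition can be broken into a reduced one; furthermore, since $\Pi(a_i) \succeq_i \Pi^{-1}(a_i)$ holds for every agent, splitting an even cycle into transpositions replaces each agent's predecessor-successor pair by either its successor or its predecessor in the cycle, so the worst rank attained over the agents of that cycle can only decrease or stay the same. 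Hence restricting attention to reduced stable partitions is without loss of generality.

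The main obstacle I anticipate is the careful bookkeeping around which rank positions are preserved by the padding transformation versus which are overwritten by dummies, and in particular the need to verify that no agent of $A \setminus A(\mathcal{O}_I)$ is ever matched in $M_0$ to a partner whose rank has been shifted. This follows from the fact that matched partners in $M_0$ are not among the entries deleted during the construction of $I_T$ (as shown in Lemma~\ref{lemma:m0isstable}), so their rank positions in $I_P$ genuinely coincide with those in $I$. Once this is in place, together with the degenerate-case checks ($M_0 = \varnothing$, or $r_O$ dominating $r(M_0)$), the result follows cleanly from Lemma~\ref{lemma:dummycorrespondence}.
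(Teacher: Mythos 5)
Your proof is correct and follows essentially the same route as the paper: invoke Lemma~\ref{lemma:dummycorrespondence}, observe that the dummy pairs are matched to first choices and the odd-cycle regret $r(\mathcal{O}_I)$ is invariant, and conclude that a minimum-regret $M_0$ in $I_P$ yields a minimum-regret reduced stable partition (the paper phrases this as a contradiction rather than via the explicit decomposition $r(\Pi)=\max(r_O,r(M_0))$, but the substance is identical). Your final paragraph extending the claim to non-reduced stable partitions is not needed for this lemma --- the paper proves that separately as Lemma~\ref{lemma:reducedregret} --- though your argument for it is sound.
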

\begin{proof}
    Lemma \ref{lemma:dummycorrespondence} states that if $M_0$ is a minimum-regret stable matching of $I_P$, then $\Pi=(M_0\setminus D)\doubleplus\mathcal{O}_I$ is a reduced stable partition of $I$, where $D$ are the fixed pairs of dummy agents. 
    
    Now suppose that there exists a reduced stable partition $\Pi' \neq \Pi$ such that $r(\Pi')<r(\Pi)$. Then again, by Lemma \ref{lemma:dummycorrespondence}, we must have $\Pi'= M_1\mathcal{O}_I$, where $M_1\neq M_0$ is some stable matching of $I_P$. Note that the dummy agents do not affect the regret of the partition, as all of them are matched to their first choice by construction. Now, if $r(\Pi')<r(\Pi)$, then we must have that $r(M_1)<r(M_0)$ by invariance of $r(\mathcal{O}_I)$. This contradicts that $M_0$ is a minimum-regret stable matching of $I_P$, proving the statement.
\end{proof}

Note that the converse does not necessarily hold, because a reduced stable partition $\Pi=M_0\mathcal{O}_I$ as above could have $r(M_0)<r(\Pi)=r(\mathcal{O}_I)$.

We can furthermore show that there is no gap in minimum regret between the set of reduced stable partitions and the set of stable partitions.

\begin{restatable}{lemma}{reducedregret}
\label{lemma:reducedregret}
    Let $\Pi$ be a minimum-regret reduced stable partition of some {\sc sr} instance $I$ and let $r(\Pi)$ be its regret. Then for any stable partition $\Pi'$ of $I$, we have $r(\Pi)\leq r(\Pi')$.
\end{restatable}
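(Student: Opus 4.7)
The plan is to reduce the general statement to the reduced case by showing that any non-reduced stable partition can be converted into a reduced stable partition whose regret is no larger. Once that is established, the result follows immediately from the minimum-regret assumption on $\Pi$ among reduced stable partitions.

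Let $\Pi'$ be an arbitrary stable partition of $I$ and write $\Pi' = \mathcal{T}'\cup\mathcal{E}'\cup\mathcal{O}_I$ where $\mathcal{T}'$ are transpositions, $\mathcal{E}'$ are even cycles of length at least $4$, and $\mathcal{O}_I$ are the invariant odd cycles. If $\mathcal{E}'=\varnothing$ then $\Pi'$ is already reduced and there is nothing to prove. Otherwise, I would apply Lemma \ref{lemma:atleasttwo} iteratively: for each $C=(a_{i_1}\;a_{i_2}\;\dots\;a_{i_{2k}})\in\mathcal{E}'$, replace $C$ by the transpositions $C_1=(a_{i_1}\;a_{i_2})(a_{i_3}\;a_{i_4})\cdots(a_{i_{2k-1}}\;a_{i_{2k}})$. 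Each replacement yields a stable partition by Lemma \ref{lemma:atleasttwo}, and since the lemma acts locally on a single cycle without touching the others, the iteration terminates in a reduced stable partition $\Pi''$ of $I$.

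The key step is to argue that $r(\Pi'')\le r(\Pi')$. For a fixed $C\in\mathcal{E}'$, the regret contribution of $C$ to $\Pi'$ is the worst rank attained by any $a_{i_j}\in A(C)$ for either its predecessor $a_{i_{j-1}}$ or its successor $a_{i_{j+1}}$ in $C$; by stability condition T1, this maximum is attained on the predecessor side, i.e.\ it equals $\max_j \mathrm{rank}_{a_{i_j}}(a_{i_{j-1}})$. After the replacement by $C_1$, the regret contribution from the agents of $A(C)$ becomes $\max\bigl(\max_j \mathrm{rank}_{a_{i_{2j-1}}}(a_{i_{2j}}),\;\max_j \mathrm{rank}_{a_{i_{2j}}}(a_{i_{2j-1}})\bigr)$. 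The first term is bounded above by $\max_j \mathrm{rank}_{a_{i_{2j-1}}}(a_{i_{2j-2}})$ by T1 applied to each $a_{i_{2j-1}}$, and the second term is a sub-maximum of $\max_j \mathrm{rank}_{a_{i_j}}(a_{i_{j-1}})$; both are therefore bounded by the regret contribution of $C$ to $\Pi'$. Since the transpositions in $\mathcal{T}'$ and the odd cycles in $\mathcal{O}_I$ are unchanged, the regret of $\Pi''$ is at most the regret of $\Pi'$.

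Finally, since $\Pi''$ is a reduced stable partition and $\Pi$ is a minimum-regret reduced stable partition, we obtain $r(\Pi)\le r(\Pi'')\le r(\Pi')$, which is exactly the claim. The only delicate point is the rank bookkeeping in the middle paragraph: one has to carefully observe that breaking an even cycle into transpositions cannot introduce ranks worse than the predecessor ranks already present in the cycle, which is precisely what T1 guarantees. No further machinery beyond Lemma \ref{lemma:atleasttwo} and the definitions of profile and regret is required.
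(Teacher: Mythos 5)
Your proof is correct and follows essentially the same route as the paper's: break each even cycle into transpositions via Lemma \ref{lemma:atleasttwo} and use stability condition T1 to argue that the regret cannot increase, then invoke minimality of $\Pi$ among reduced stable partitions. Your rank bookkeeping is in fact slightly more careful than the paper's, which asserts the stronger (and unneeded) equality $r(C)=r(C')$ where only the inequality $r(C_1)\le r(C)$ is justified and required.
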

\begin{proof}
    Suppose not, then there exists some non-reduced minimum regret stable partition $\Pi'$ of instance $I$ which has smaller regret $r_1 = r(\Pi')$ than the reduced stable partition $\Pi$ with minimum regret $r_2 = r(\Pi)$ over all reduced stable partitions.

    The odd cycles are invariant for all stable partitions, so $r_2>r_1$ must come from choosing transpositions over longer even cycles. However, we can show that breaking down an even cycle $C=(a_{i_{1}} \; a_{i_{2}} \; \dots \; a_{i_{2k}})$ preserves its regret, i.e. that $r(C)=r(C')$, where $C'=(a_{i_{1}} \; a_{i_{2}})\dots (a_{i_{2k-1}} \; a_{i_{2k}}))$. This holds because any agent's predecessor in $C'$ is at least as good as their predecessor in $C$ due to stability condition T1. As this transformation does not impact stability due to Lemma \ref{lemma:atleasttwo} and because we can break down every even cycle larger than 2 like this, some reduced stable partition can achieve the same regret as any non-reduced stable partition, therefore proving the statement.
\end{proof}

\begin{theorem}
\label{theorem:minregret}
    A stable partition with minimum regret always exists and can be computed in linear time.
\end{theorem}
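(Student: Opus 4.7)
The plan is to combine Tan's algorithm, the padding construction of Section~\ref{section:paddingargument}, and the known linear-time minimum-regret stable matching algorithm on solvable instances. Existence of a minimum-regret stable partition is immediate: every {\sc sr} instance admits at least one stable partition, so the regret function attains its minimum on this finite nonempty set. For the algorithmic claim, I would first run Tan's linear-time algorithm on $I$ to obtain a reduced stable partition $\Pi_0 = M_0 \mathcal{O}_I$, thereby extracting the invariant odd cycles $\mathcal{O}_I$. Next, I would construct the padded instance $I_P$ in linear time by scanning each preference list once and substituting any entry that would be deleted in the passage from $I$ to $I_T$ with a fresh dummy agent (appending an extra parity dummy if required), which keeps $|A(I_P)| \leq 2n$. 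Finally, I would apply Gusfield's linear-time minimum-regret stable matching algorithm to the solvable instance $I_P$ to obtain a stable matching $M^*$, and return $\Pi^{*} = (M^{*} \setminus D) \cup \mathcal{O}_I$, where $D$ is the collection of dummy fixed pairs.

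Correctness follows by stitching together lemmas already in hand. Lemma~\ref{lemma:dummycorrespondence} certifies that $\Pi^{*}$ is a reduced stable partition of $I$, Lemma~\ref{lemma:minregret} promotes the minimum-regret property from $M^{*}$ in $I_P$ to $\Pi^{*}$ among reduced stable partitions of $I$, and Lemma~\ref{lemma:reducedregret} closes the gap between reduced and arbitrary stable partitions with respect to regret, so that $\Pi^{*}$ is minimum-regret over the whole of $P(I)$. Summing the linear costs of Tan's algorithm, the padding construction, and Gusfield's algorithm gives an overall linear-time procedure.

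The main subtlety to verify is that Gusfield's objective on $I_P$ genuinely coincides with the regret of $\Pi^{*}$ in $I$. One has to check that the padding procedure preserves the rank positions of all surviving preference list entries, so the rank of any non-dummy agent's partner under $M^{*}$ in $I_P$ equals its rank in $I$; since each dummy is matched to its first choice and thus contributes rank~$1$, the regret of $M^{*}$ in $I_P$ equals $r(M^{*} \setminus D)$ as measured in $I$. Because $r(\mathcal{O}_I)$ is invariant across all reduced stable partitions, minimising $r(M^{*} \setminus D)$ via Gusfield's algorithm is exactly what is needed to minimise $r(\Pi^{*}) = \max(r(M^{*} \setminus D),\, r(\mathcal{O}_I))$, completing the argument.
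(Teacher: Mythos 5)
Your proposal is correct and follows essentially the same route as the paper's proof: compute a reduced stable partition to extract the invariant odd cycles, build the padded solvable instance $I_P$, run the linear-time minimum-regret stable matching algorithm there, and map back, with correctness supplied by Lemma~\ref{lemma:dummycorrespondence}, Lemma~\ref{lemma:minregret}, and Lemma~\ref{lemma:reducedregret}. The extra care you take about rank preservation under padding and the invariance of $r(\mathcal{O}_I)$ is exactly the content the paper delegates to Lemma~\ref{lemma:minregret}.
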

\begin{proof}
    \textcite{gusfield89} already showed that we can compute a stable matching with minimum regret in linear time for solvable instances. Thus, we can compute any reduced stable partition, derive the solvable instance $I_P$ from any {\sc sr} instance $I$, find a stable matching with minimum regret, and map it back to the instance $I$, which produces the correct result due to Lemma \ref{lemma:minregret}. Lemma \ref{lemma:reducedregret} states that the minimum regret reduced stable partition is also a minimum regret stable partition. Existence follows from the existence of stable partitions and all computation and transformation steps can be performed in linear time.
\end{proof}

Consistent with stable matchings, minimum regret remains an exception concerning the tractability of profile-based optimal stable partitions. We now also consider the following decision problems.

\begin{definition}
    Given an {\sc sr} instance $I$ with $n$ agents and regret $r$, an integer $k$, and a vector $\sigma\in\{0,1,\dots, n\}^n$, let 
    \begin{itemize}
        \item {\sc FC-Dec-SR-SP} denote the problem of deciding whether $I$ admits a stable partition $\Pi$ where $p_1(\Pi)\geq k$,
        \item {\sc Rank-Dec-SR-SP} denote the problem of deciding whether $I$ admits a stable partition $\Pi$ where $p(\Pi)\succeq \sigma$,
        \item {\sc RM-Dec-SR-SP} denote the problem of deciding whether $I$ admits a minimum regret stable partition with at most $k$ $r$th choices,
        \item {\sc Gen-Dec-SR-SP} denote the problem of deciding whether $I$ admits a stable partition $\Pi$ where $\sigma^{rev} \succeq p^{rev}(\Pi)$, and
        \item {\sc Egal-Dec-SR-SP} denote the problem of deciding whether $I$ admits a stable partition with cost at most $k$.
    \end{itemize}
\end{definition}

For all of these problems, we show that no non-reduced stable partition is ``better" with regards to our optimality criteria than the best reduced stable partition, which allows us to only consider reduced stable partitions. Knowing the NP-completeness of the associated problems above for stable matchings, we can show that all of these problems are NP-complete for stable partitions, even if the instance is solvable. One might ask whether the same result holds for unsolvable instances, which have some different structural properties. Using a reduction from solvable instances, we can show that this is indeed the case.

First, to show the next statements, the following Lemma establishes that no non-reduced stable partition satisfies more first-choices than the maximum number of first-choices satisfied by any reduced stable partition. 

\begin{restatable}{lemma}{firstchoicereduced}
\label{lemma:firstchoicereduced}
    Let $C=(a_{i_1} \; a_{i_2} \;\dots\; a_{i_k})$ be an even cycle with $k\geq 4$ and profile $p(C)$. Then both collections of transpositions $C_1 = (a_{i_1}\; a_{i_2}) \dots (a_{i_{k-1}} \; a_{i_{k}})$ and $C_2 = (a_{i_1}\; a_{i_{k}}) \dots (a_{i_{k-2}} \; a_{i_{k-1}})$ are stable and either $p_1(C_1)\geq p_1(C)$ or $p_1(C_2)\geq p_1(C)$.
\end{restatable}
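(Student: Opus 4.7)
The plan is to invoke Lemma \ref{lemma:atleasttwo} to get stability of both $C_1$ and $C_2$ for free, and then to establish the averaging identity $p_1(C_1) + p_1(C_2) = 2 p_1(C)$ by a bookkeeping argument on the consecutive pairs of $C$; from there the conclusion follows by a pigeonhole step. Stability of $C_1$ and $C_2$ is immediate: since $k \geq 4$ and $C$ is a stable even cycle, Lemma \ref{lemma:atleasttwo} tells us that replacing $C$ by either $C_1$ or $C_2$ in any stable partition containing $C$ yields another stable partition.

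For the counting identity, for each index $j \in \{1, \ldots, k\}$ (indices taken modulo $k$), write $\alpha_j$ for the rank of $a_{i_{j+1}}$ in the preference list of $a_{i_j}$ and $\beta_j$ for the rank of $a_{i_{j-1}}$ in the list of $a_{i_j}$. Using $[\cdot]$ for the Iverson bracket, the definition of the combined profile (successor plus predecessor counts) gives
\begin{equation*}
p_1(C) \;=\; \sum_{j=1}^{k}\bigl([\alpha_j = 1] + [\beta_j = 1]\bigr).
\end{equation*}
The $k$ consecutive pairs $(a_{i_j}, a_{i_{j+1}})$ along $C$ split naturally into two classes: those at odd positions $j \in \{1, 3, \ldots, k-1\}$, which are exactly the transpositions of $C_1$, and those at even positions $j \in \{2, 4, \ldots, k\}$, which are exactly the transpositions of $C_2$ (in particular, the pair at $j = k$ corresponds to $(a_{i_k}, a_{i_1})$, i.e.\ the transposition $(a_{i_1}\; a_{i_k})$). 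Within any single transposition $(a_{i_j}, a_{i_{j+1}})$, each partner counts twice towards the profile because successor equals predecessor, so the contribution to $p_1$ from this edge is $2[\alpha_j = 1] + 2[\beta_{j+1} = 1]$.

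Summing over all $k$ edges of $C$ and reindexing cyclically yields
\begin{equation*}
p_1(C_1) + p_1(C_2) \;=\; 2 \sum_{j=1}^{k}\bigl([\alpha_j = 1] + [\beta_{j+1} = 1]\bigr) \;=\; 2\,p_1(C),
\end{equation*}
so $\max(p_1(C_1), p_1(C_2)) \geq p_1(C)$. The main obstacle is purely notational: one has to separate successor from predecessor contributions carefully and verify that the edges of $C$ partition cleanly between $C_1$ and $C_2$ with the wrap-around edge $(a_{i_k}, a_{i_1})$ ending up in $C_2$; once this bookkeeping is in place, the identity is immediate and the claim follows by averaging.
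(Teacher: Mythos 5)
Your proof is correct, and it rests on the same two pillars as the paper's: Lemma \ref{lemma:atleasttwo} for stability of $C_1$ and $C_2$, and an accounting of how first choices redistribute when $C$ is split into transpositions. Where you differ is in how the final inequality is extracted. The paper first uses the strict form of T1 (every agent strictly prefers their successor in $C$ to their predecessor, since $C$ is not a transposition) to argue that every unit of $p_1(C)$ comes from a successor assignment, and then runs a parity-pairing argument on the first-choice agents --- two such agents at even distance along $C$ split between $C_1$ and $C_2$, two at odd distance land in the same one --- to conclude that one of the two collections inherits at least $p_1(C)$. Your exact averaging identity $p_1(C_1)+p_1(C_2)=2\,p_1(C)$ short-circuits all of that: it is a pure bookkeeping fact about how each agent's successor rank and predecessor rank each get doubled in exactly one of the two decompositions, it needs no appeal to the strictness of T1 or to which of the predecessor/successor is the first choice, and it yields the conclusion by pigeonhole. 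It is also strictly more informative than what the paper proves (an exact identity rather than a one-sided bound), and the same identity instantly gives the analogous Lemmas \ref{lemma:lastchoicereduced} and \ref{lemma:costreduced}; the paper in fact uses essentially your identity in the proof of Lemma \ref{lemma:costreduced} but not here. The one thing worth double-checking in your write-up --- and it does check out --- is that the wrap-around edge $(a_{i_k},a_{i_1})$ lands in $C_2$ and that the reindexing $\sum_j[\beta_{j+1}=1]=\sum_j[\beta_j=1]$ is legitimate cyclically.
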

\begin{proof}
    By Lemma \ref{lemma:atleasttwo}, both $C_1$ and $C_2$ are stable. Now for $a_{i_j}\in C$, we know that $C(a_{i_j}) \succ_{i_j} C^{-1}(a_{i_j})$ by stability condition T1 and the fact that $C$ is longer than a transposition, therefore proving strict inequality. Thus, any agent $a_{i_j}$ that contributes a first choice in $C$ must rank their successor as a first choice and no agent in $C$ can have their predecessor ranked as a first choice. Therefore, any such agent $a_{i_j}$ adds exactly 1 to $p_1(C)$. Now if we break $C$ into transpositions, an agent $a_{i_j}$ who is paired with their first choice successor now has their first choice as successor and predecessor simultaneously, thus adding 2 either to $p_1(C_1)$ or to $p_1(C_2)$. Now for any two agents $a_{i_r}, a_{i_s}$ that get their first choice in $C$ and have an even number of agents between them, one will get their first choice assigned in $C_1$ and one in $C_2$. Any two such agents that have an odd number of agents between them will both get their first choice assigned in $C_1$ or $C_2$. Hence, at least one of $C_1$ and $C_2$ will achieve at least as many first choices as in $C$.
\end{proof}

This lets us limit the search for first-choice maximal stable partitions on reduced stable partitions, as we can always break down longer even cycles into transpositions efficiently without loss of first-choice maximality.

\begin{restatable}{lemma}{firstchoice}
\label{lemma:firstchoice}
    {\sc FC-Dec-SR-SP} is NP-complete, even if the given instance is solvable. 
\end{restatable}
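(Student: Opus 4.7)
The plan is to reduce from the first-choice maximal stable matching problem on solvable {\sc sr} instances, which is NP-complete by the result of \textcite{CooperPhD} cited in Section \ref{section:relatedwork}. NP-membership is immediate: one can guess a stable partition $\Pi$, verify stability conditions T1 and T2 and check $p_1(\Pi)\geq k$ in $O(n^2)$ time.

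For the hardness reduction, given an instance $(I', k')$ of the first-choice maximal stable matching problem (which is defined on a solvable {\sc sr} instance $I'$), I would set $I = I'$ and $k = 2k'$ as the {\sc FC-Dec-SR-SP} instance. Since $I = I'$ is solvable, the ``even if the instance is solvable'' clause in the target statement is satisfied. The key observation is the factor-two bookkeeping between matching and partition profiles: for a stable matching $M$ viewed as a reduced stable partition (a disjoint union of transpositions), each agent matched to their first choice contributes $1$ to $p_1^s$ and $1$ to $p_1^p$, so $p_1(\Pi) = 2\,p_1(M)$. Moreover, because $I$ is solvable we have $\mathcal{O}_I = \varnothing$, hence by construction $I_T = I$, and Theorem \ref{theorem:correspondence} specialises to the statement that reduced stable partitions of $I$ are precisely stable matchings of $I$.

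The forward direction of the reduction is then immediate from the above identity. For the reverse direction, suppose $\Pi$ is any stable partition of $I$ with $p_1(\Pi) \geq 2k'$. I would apply Lemma \ref{lemma:firstchoicereduced} iteratively: for each even cycle $C$ of length at least $4$ in the current partition, replace $C$ by whichever of the two transposition collections $C_1, C_2$ satisfies $p_1(\cdot) \geq p_1(C)$. Lemma \ref{lemma:atleasttwo} guarantees that each such local replacement preserves the stability of the full partition, so after finitely many replacements we obtain a reduced stable partition $\Pi'$ with $p_1(\Pi') \geq p_1(\Pi) \geq 2k'$. Since $I$ is solvable, $\Pi'$ is (identified with) a stable matching $M$ of $I$ for which $p_1(M) = p_1(\Pi')/2 \geq k'$.

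The main potential obstacle is to make sure that Cooper's hardness result genuinely yields an NP-complete decision problem on \emph{solvable} {\sc sr} instances, which is necessary to inherit the ``even if solvable'' clause; this is how the result is stated in Section \ref{section:relatedwork}, since asking for a stable matching that satisfies at least $k$ first choices is only meaningful when stable matchings exist. Beyond this, the iterative flattening of even cycles via Lemmas \ref{lemma:firstchoicereduced} and \ref{lemma:atleasttwo} is routine once the factor-two correspondence between matching and partition profiles is recognised.
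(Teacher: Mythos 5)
Your proposal is correct and follows essentially the same route as the paper: NP-membership by direct verification, a reduction from Cooper's {\sc FC-Dec-SR-SM} on solvable instances with the factor-two profile identity $p_1(\Pi)=2p_1(M)$ for transposition collections, and the reverse direction via Lemma \ref{lemma:firstchoicereduced} to flatten even cycles followed by Theorem \ref{theorem:correspondence} to recover a stable matching. The only cosmetic difference is that you make the cycle-flattening explicitly iterative and note $I_T=I$ for solvable instances, both of which the paper leaves implicit.
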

\begin{proof}
    Note that {\sc FC-Dec-SR-SP} is in NP, as given instance $I$, integer $k$, and partition $\Pi$, we can check whether $\Pi$ is a stable partition of $I$ and verify that it has at least $k$ first-choices, all in linear time.

    \textcite{CooperPhD} proved that {\sc FC-Dec-SR-SM}, the problem of deciding whether a solvable {\sc sr} instance $I$ with complete preference lists admits a stable matching with at least $k$ ﬁrst choices, is NP-complete. To establish our result, we will show that for a solvable instance $I$ with complete preference lists, $I$ admits a stable matching $M$ where $p_1(M)\geq k$ if and only if $I$ admits a stable partition $\Pi$ where $p_1(\Pi)\geq 2k$.

    Suppose that $I$ admits a stable matching $M=\{\{a_{i_1},a_{i_2}\}\dots \{a_{i_{n-1}}, a_{i_{n}}\}\}$ with at least $k$ ﬁrst choices, i.e. at least $k$ agents are assigned a partner that is ranked first in their preference list. Then, immediately, $M$ has an associated stable partition $\Pi = (a_{i_1}\; a_{i_2})\dots(a_{i_{n-1}}\; a_{i_{n}})$, in which at least $k$ agents have a successor that is their first-choice. As all cycles are transpositions, their predecessors are also their successors, so in the associated profile $p(\Pi)$, we will have $p_1(\Pi)\geq 2k$ as desired.

    For the other direction, suppose that $I$ admits a stable partition $\Pi$ with at least $2k$ first choices in $p(\Pi)$. By Lemma \ref{lemma:firstchoicereduced} we can break down $\Pi$ into a reduced stable partition $\Pi'$ if $\Pi$ is not reduced, such that $p_1(\Pi')\geq p_1(\Pi)\geq 2k$. Because we assume that $I$ is solvable, by Theorem \ref{theorem:correspondence}, $\Pi'$ corresponds to a stable matching $M$ of $I$ and so $p_1(M)=\frac{p_1(\Pi')}{2}\geq k$ as desired.
\end{proof}

\begin{restatable}{theorem}{fchardness}
\label{theorem:fchardness}
    Let $I$ be an {\sc sr} instance and $k$ an integer. {\sc FC-Dec-SR-SP}, the problem of deciding whether $I$ admits a stable partition with at least $k$ first choices, is NP-complete, even if $I$ is solvable and even if $I$ is unsolvable.
\end{restatable}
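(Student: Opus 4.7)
Membership in NP is immediate, since any stable partition $\Pi$ is a polynomial-size certificate for which both stability and $p_1(\Pi)$ can be checked in polynomial time. The solvable case has already been established in Lemma \ref{lemma:firstchoice}, so the task reduces to proving NP-hardness for instances that are guaranteed to be unsolvable. The plan is a straightforward reduction from the solvable version to the unsolvable version via a small ``forced odd-cycle'' gadget that contributes a known, fixed number of first choices.

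Given a solvable instance $I = (A, \succ)$ with complete preferences and integer $k$, construct $I' = (A \cup \{b_1, b_2, b_3\}, \succ')$ where the new agents have the cyclic preferences $b_1: b_2 \succ b_3 \succ A$, $b_2: b_3 \succ b_1 \succ A$, $b_3: b_1 \succ b_2 \succ A$ (with the $A$-tails in any order), and each $a_i \in A$ keeps its original preferences on $A$ with $b_1, b_2, b_3$ appended at the bottom in any order. Set $k' = k + 3$. The key claim is that the stable partitions of $I'$ are exactly the sets $\Pi \cup \{(b_1\; b_2\; b_3)\}$ where $\Pi$ ranges over stable partitions of $I$.

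To prove the claim, first observe that if $\Pi$ is stable in $I$, then $\Pi' := \Pi \cup \{(b_1\; b_2\; b_3)\}$ satisfies T1 everywhere (each $b_i$'s successor in the 3-cycle is its first choice, and T1 on $A$ is inherited from $\Pi$) and T2 by direct inspection: within $\{b_1, b_2, b_3\}$, every candidate blocking pair has one agent tied with its predecessor (its second choice), so the required strict preference fails on one side; between $a_i$ and $b_j$, $a_i$ ranks its $\Pi$-partner above all $b_j$; within $A$, T2 transfers from $\Pi$. Hence $I'$ admits a stable partition containing the odd cycle $(b_1\; b_2\; b_3)$, and by invariance of odd cycles every stable partition of $I'$ contains this cycle, so $I'$ is unsolvable. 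Conversely, the restriction to $A$ of any stable partition of $I'$ is stable in $I$ because the relative preferences on $A$ are preserved.

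Since each $b_i$ contributes a first-choice successor and a second-choice predecessor, while no $a_i$ ranks any $b_j$ as first choice, we obtain $p_1(\Pi') = p_1(\Pi) + 3$. Thus $(I, k)$ is a yes-instance iff $(I', k+3)$ is a yes-instance with $I'$ unsolvable, completing the polynomial-time reduction from the solvable case to the unsolvable case. The main obstacle is the stability verification of the gadget; it hinges on the observation that in $(b_1\; b_2\; b_3)$ each $b_i$'s predecessor is its \emph{second} choice (not strictly worse than any other $b_j$), so the strict-preference condition in T2 fails for every internal candidate blocking pair, and simultaneously the $a$-agents are shielded from the $b$-agents by ranking them last.
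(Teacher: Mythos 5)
Your proof is correct and follows essentially the same strategy as the paper: reduce the (already established) solvable case to the unsolvable case by appending a gadget of new agents that rank each other at the top and everyone in $A$ at the bottom, thereby forcing an invariant odd cycle while leaving the stable partitions on $A$ untouched. The only substantive difference is the gadget itself: the paper adds \emph{two} 3-cycles on six agents whose first choices point across to the other cycle, so the gadget contributes zero first choices and $k$ is unchanged, whereas your single 3-cycle has each $b_i$'s first choice as its successor, contributing exactly $3$ to $p_1$ and requiring the shift $k' = k+3$; both bookkeeping choices work, and your verification of T1, T2, the invariance argument, and the profile accounting are all sound. One small wrinkle: adding three agents to an instance with an even number of agents produces an instance with an odd number of agents, which falls outside the paper's formal definition of an {\sc sr} instance ($n \in 2\mathbb{N}$). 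This is not a real obstacle --- the paper itself notes that the theory carries over to odd $n$ with one unmatched non-blocking agent --- but if you want to stay within the stated model, duplicate your gadget (add a second disjoint 3-cycle $(b_4\; b_5\; b_6)$ with the analogous internal preferences) and set $k' = k + 6$.
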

\begin{proof}
    If $I$ is solvable, the result follows immediately from Lemma \ref{lemma:firstchoice}. 
    
    Now suppose that we can, in polynomial time, decide {\sc FC-Dec-SR-SP} for an unsolvable instance $I$. Then we could also solve the problem for solvable instances, contradicting the above. Specifically, take any solvable instance $I_1$ with $n$ agents and transform it to an unsolvable instance $I_2$ by adding six agents $a_{n+1},\dots, a_{n+6}$ with preferences given in Example \ref{table:firstchoice3cycles}, where $a_{n+i}=b_{i}$, such that they form two invariant three-cycles in any stable partition without satisfying any first-choices. One can easily verify the construction by checking that each agent prefers their successor over their predecessor and no two agents prefer each other over their predecessors. As the three-cycles are of odd length, they must be invariant. Finally, $I_2$ is clearly unsolvable due to the new odd cycles, but any stable partition $\Pi_2$ of $I_2$ corresponds to a stable partition $\Pi_1 = \Pi_2 \setminus \{(a_{n+1} \; a_{n+2} \; a_{n+3})(a_{n+4} \; a_{n+5} \; a_{n+6})\}$ of $I_1$ and vice versa. Furthermore, $p_1(\Pi_1)=p_1(\Pi_2)$, so some stable partition of $I_1$ satisfies at least $k$ first-choices if and only if some stable partition of $I_2$ satisfies at least $k$ first-choices.
\end{proof}

\begin{table}[!htb]
    \centering
        \begin{tabular}{ c | c c c c }
        $b_{1}$ & $b_{4}$ & \circled{$b_{2}$} & \dottedcircled{$b_{3}$} & \dots \\
        $b_{2}$ & $b_{4}$ & \circled{$b_{3}$} & \dottedcircled{$b_{1}$} & \dots \\
        $b_{3}$ & $b_{5}$ & \circled{$b_{1}$} & \dottedcircled{$b_{2}$} & \dots \\
        $b_{4}$ & $b_{3}$ & \circled{$b_{5}$} & \dottedcircled{$b_{6}$} & \dots \\
        $b_{5}$ & $b_{1}$ & \circled{$b_{6}$} & \dottedcircled{$b_{4}$} & \dots \\
        $b_{6}$ & $b_{1}$ & \circled{$b_{4}$} & \dottedcircled{$b_{5}$} & \dots \\
        \end{tabular}
    \caption{Added 3-cycles for first-choice maximal reduction}
    \label{table:firstchoice3cycles}
\end{table}

The hardness of finding rank-maximal stable partitions follows directly from the previous result.

\begin{restatable}{theorem}{rmcompleteness}
\label{theorem:rmcompleteness}
    {\sc Rank-Dec-SR-SP} is NP-complete, even if the given instance is solvable and even if it is unsolvable.
\end{restatable}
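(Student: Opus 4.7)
The plan is to show {\sc Rank-Dec-SR-SP} is NP-complete by a direct polynomial-time reduction from {\sc FC-Dec-SR-SP}, which has just been established as NP-complete in Theorem \ref{theorem:fchardness} for both solvable and unsolvable instances. For membership in NP, I would note that given a candidate stable partition $\Pi$, we can verify stability in $O(n^2)$ time by checking conditions T1 and T2, compute the profile $p(\Pi)$ in $O(n^2)$ time, and then compare it lexicographically against $\sigma$ in $O(n)$ time.

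For NP-hardness, given an instance $(I, k)$ of {\sc FC-Dec-SR-SP}, I would construct an instance $(I, \sigma)$ of {\sc Rank-Dec-SR-SP} by setting $\sigma = (k, 0, 0, \ldots, 0)$. The core claim is that $I$ admits a stable partition $\Pi$ with $p_1(\Pi) \geq k$ if and only if $I$ admits a stable partition $\Pi$ with $p(\Pi) \succeq \sigma$. For the forward direction, if $p_1(\Pi) > k$, then $p(\Pi)$ and $\sigma$ first differ at position 1 with $p_1(\Pi) > \sigma_1$, hence $p(\Pi) \succ \sigma$; if $p_1(\Pi) = k$ and $p(\Pi) = \sigma$, then $p(\Pi) \succeq \sigma$ trivially; if $p_1(\Pi) = k$ and $p(\Pi) \neq \sigma$, then the first differing position $i > 1$ satisfies $p_i(\Pi) > 0 = \sigma_i$, giving $p(\Pi) \succ \sigma$. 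For the reverse direction, $p(\Pi) \succeq \sigma$ implies either equality (so $p_1(\Pi) = k$) or a first differing position where $p(\Pi)$ is strictly larger; in either case one easily deduces $p_1(\Pi) \geq k$ by considering whether that position is 1 or greater.

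Since the reduction is constant-time and preserves the solvable/unsolvable status of $I$ (the instance itself is unchanged), and since {\sc FC-Dec-SR-SP} is NP-complete in both the solvable and unsolvable cases by Theorem \ref{theorem:fchardness}, the same holds for {\sc Rank-Dec-SR-SP}. There is no real obstacle here; the only subtlety is being careful with the lexicographic order definition, specifically verifying that $p(\Pi) \succeq (k, 0, \ldots, 0)$ captures exactly the condition $p_1(\Pi) \geq k$ (which requires noting that profile entries are non-negative and that ties at position 1 combined with any non-zero entry later still give $p(\Pi) \succ \sigma$).
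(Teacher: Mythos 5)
Your proposal is correct and follows essentially the same route as the paper: membership in NP by direct verification, followed by the reduction from {\sc FC-Dec-SR-SP} that maps $(I,k)$ to $(I,(k\;0\dots 0))$, with the solvable/unsolvable cases inherited from Theorem \ref{theorem:fchardness}. The paper states the equivalence $p_1(\Pi)\geq k \iff p(\Pi)\succeq (k\;0\dots 0)$ without elaboration, whereas you spell out the case analysis on the first differing position, which is a harmless (and arguably welcome) addition.
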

\begin{proof}
    Note that {\sc Rank-Dec-SR-SP} is in NP, as given an instance $I$, vector $\sigma$, and partition $\Pi$, we can check whether $\Pi$ is a stable partition of $I$ and check whether $p(\Pi)\succeq \sigma$, all in linear time. 
    
    We showed above in Theorem \ref{theorem:fchardness} that {\sc FC-Dec-SR-SP} is NP-complete for solvable and for unsolvable instances. Now given a problem instance $(I, k)$ to {\sc FC-Dec-SR-SP}, $(I, k)$ is a yes-instance if and only if $(I,(k \;0\dots 0))$ is a yes-instance to {\sc Rank-Dec-SR-SP}. 
\end{proof}

A similar result holds for regret-minimal stable partitions. 

First, the following Lemma establishes that no non-reduced minimum-regret stable partition with regret $r$ satisfies less $r$th choices than any reduced minimum-regret stable partition with minimal $r$th choices satisfied.

\begin{restatable}{lemma}{lastchoicereduced}
\label{lemma:lastchoicereduced}
    Let $C=(a_{i_1} \; a_{i_2}\dots a_{i_k})$ be an even cycle with profile $p$ and regret $r$ of some instance $I$ with $n$ agents. Then both $C_1 = (a_{i_1}\; a_{i_2}) \dots (a_{i_{k-1}} \; a_{i_{k}})$ and $C_2 = (a_{i_1}\; a_{i_{k}}) \dots (a_{i_{k-2}} \; a_{i_{k-1}})$ are stable and either $p_r(C_1)\leq p_r(C)$ and $p_j(C_1)=0$ for all $r<j\leq n$ or $p_r(C_2)\leq p_r(C)$ and $p_j(C_2)=0$ for all $r<j\leq n$.
\end{restatable}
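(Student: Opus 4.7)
The plan is to first invoke Lemma \ref{lemma:atleasttwo} to dispose of stability of both $C_1$ and $C_2$, so only the profile statement requires work. The central observation is that because $C$ is an even cycle of length $k \geq 4$, condition T1 becomes strict: for every agent $a \in A(C)$ the successor $C(a)$ is ranked strictly better than the predecessor $C^{-1}(a)$, since in a cycle of length at least four the two are distinct. Consequently the regret $r = r(C)$ is attained only at predecessor entries in $C$; in particular, no agent's $C$-successor has rank $r$ or worse in its own preference list.

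Next I would carry out careful parity bookkeeping for the two decompositions. In $C_1$ each agent $a_{i_{2t-1}}$ at an odd index is matched with its $C$-successor $a_{i_{2t}}$, while $a_{i_{2t}}$ at an even index is matched with its $C$-predecessor $a_{i_{2t-1}}$; in $C_2$ the roles swap, with odd-indexed agents (including the wrap-around pair $(a_{i_1},\, a_{i_k})$) paired with their $C$-predecessor and even-indexed agents paired with their $C$-successor. In a transposition a partner serves simultaneously as new successor and new predecessor, so each agent contributes twice at the same rank to the profile. Letting $e$ and $o$ denote the number of even- and odd-indexed agents of $C$ whose $C$-predecessor sits at rank $r$ in their preference list, the first paragraph yields $p_r(C) = e + o$, whereas the bookkeeping yields $p_r(C_1) = 2e$ and $p_r(C_2) = 2o$. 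Since $\min(2e,\, 2o) \leq e + o$ trivially, at least one of $C_1, C_2$ satisfies $p_r(\cdot) \leq p_r(C)$, which is precisely the required inequality.

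For the second clause, $p_j(C_i) = 0$ for all $r < j \leq n$, the statement actually holds simultaneously for both $i \in \{1,2\}$: in either decomposition every agent is matched either to its $C$-predecessor, which is ranked at most $r$ by definition of regret, or to its $C$-successor, which is ranked strictly better than the predecessor and hence also at most $r$. I do not foresee a serious obstacle; the only subtlety is tracking parities consistently across $C_1$ and $C_2$, after which the strict version of T1 for cycles of length at least four isolates the rank-$r$ entries as a purely predecessor-driven count, and the elementary comparison between $2e$, $2o$, and $e+o$ finishes the argument.
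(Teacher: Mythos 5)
Your proposal is correct and follows essentially the same route as the paper: stability via Lemma \ref{lemma:atleasttwo}, strictness of T1 on cycles of length at least four to show that only predecessor entries can sit at rank $r$, the doubling effect when an agent is paired with its rank-$r$ predecessor in a transposition, and a parity split of those agents between $C_1$ and $C_2$. Your explicit count $p_r(C_1)=2e$, $p_r(C_2)=2o$ with $\min(2e,2o)\leq e+o=p_r(C)$ is a slightly tidier way to finish than the paper's pairwise even/odd-gap argument, but the underlying idea is identical.
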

\begin{proof}
    By Lemma \ref{lemma:atleasttwo}, both $C_1$ and $C_2$ are stable. Now for any $a_{i_j}\in C$, we know $C(a_{i_j}) \succ_{i_j} C^{-1}(a_{i_j})$ by stability condition T1 and the fact that $C$ is longer than a transposition, therefore proving strict inequality. Thus, any agent $a_{i_j}$ that gets their $r$th choice in $C$ must rank their successor better than $r$. Therefore, any such agent $a_{i_j}$ contributes exactly 1 to $p_r(C)$. Now if we break $C$ into transpositions, an agent $a_{i_j}$ who is paired with their $r$th choice predecessor now has them also as their successor, thus adding 2 to either $p_r(C_1)$ or $p_r(C_2)$. Clearly, $p_j(C_1)=p_j(C_2)=0$ for all $r<j\leq n$ is true, as everyone gets a partner in $C_1, C_2$ no worse than their predecessor in $C$. Now for any two agents $a_{i_l}, a_{i_m}$ that get their $r$th choice predecessor in $C$ and have an even number of agents between them, one will get their $r$th choice assigned in $C_1$ and one in $C_2$. For any two such agents that have an odd number of agents between them will both get their $r$th choice assigned in either $C_1$ or $C_2$. Hence, the result follows.
\end{proof}

This, again, lets us consider reduced partitions and use existing complexity results.

\begin{restatable}{lemma}{rthchoicecomplete}
\label{lemma:rthchoicecomplete}
    {\sc RM-Dec-SR-SP} is NP-complete, even if the given instance is solvable. 
\end{restatable}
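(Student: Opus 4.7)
The plan is to parallel the structure of Lemma \ref{lemma:firstchoice}, reducing from the regret-minimal stable matching problem on solvable {\sc sr} instances with complete preference lists, which is NP-complete by \textcite{CooperPhD}. Membership in NP is straightforward: given a candidate stable partition $\Pi$, we can verify the stability conditions T1 and T2, compute $r(\Pi)$, and count $p_r(\Pi)$ all in linear time; the minimum regret $r$ of $I$ can itself be computed in linear time by Theorem \ref{theorem:minregret}, so we can check that $\Pi$ achieves it.

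For the reduction, I would take an instance $(I,k')$ of the regret-minimal stable matching problem, where $I$ is a solvable {\sc sr} instance with complete preference lists and regret $r$, and map it to the instance $(I,2k')$ of {\sc RM-Dec-SR-SP} on the same $I$. The claim is that these are equivalent. The forward direction is easy: if $M$ is a minimum-regret stable matching with $p_r(M)\le k'$, then the induced reduced stable partition $\Pi_M$ consisting of the transpositions of $M$ has $p_i(\Pi_M)=2p_i(M)$ for every $i$ (because in a transposition the successor equals the predecessor, and solvability gives $\mathcal{O}_I=\varnothing$). In particular $r(\Pi_M)=r(M)=r$, which by Lemma \ref{lemma:reducedregret} is the minimum regret over all stable partitions of $I$, and $p_r(\Pi_M)\le 2k'$.

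The backward direction is where the lemmas from the previous subsection do the work. Suppose $\Pi$ is a minimum-regret stable partition of $I$ with $p_r(\Pi)\le 2k'$. I would apply Lemma \ref{lemma:lastchoicereduced} independently to each even cycle of $\Pi$, choosing for each one the breakdown $C_1$ or $C_2$ whose $r$th-choice count does not exceed that of the original cycle and whose profile is zero above rank $r$. Because the breakdown of each even cycle is independently stable by Lemma \ref{lemma:atleasttwo} and affects only the agents inside that cycle, the resulting reduced stable partition $\Pi'$ is stable and satisfies $r(\Pi')\le r$ with $p_r(\Pi')\le p_r(\Pi)\le 2k'$; by minimum-regret of $\Pi$ we have $r(\Pi')=r$. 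Since $I$ is solvable, $\mathcal{O}_I=\varnothing$ and $I_T=I$, so by Theorem \ref{theorem:correspondence}, $\Pi'$ corresponds to a stable matching $M$ of $I$ with $p_i(M)=p_i(\Pi')/2$ for all $i$. Hence $r(M)=r$, $M$ is minimum-regret, and $p_r(M)\le k'$.

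The main obstacle, as with the first-choice and rank-maximal cases, is ensuring that restricting attention to reduced stable partitions loses nothing: a hypothetical non-reduced minimum-regret stable partition with few $r$th choices might in principle outperform every reduced one. This is precisely the content of Lemma \ref{lemma:lastchoicereduced}, so the whole argument is really just assembling that lemma with Theorem \ref{theorem:correspondence} and the known hardness for stable matchings; the only point requiring care is verifying that breaking down several even cycles simultaneously yields a stable partition, which follows because the two canonical breakdowns of disjoint even cycles interact with the rest of $\Pi$ only through T1 and T2 restricted to the agents of each cycle, both of which are preserved by Lemma \ref{lemma:atleasttwo}.
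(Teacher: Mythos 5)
Your proposal is correct and follows essentially the same route as the paper: NP membership by direct verification, a reduction from Cooper's regret-minimal stable matching problem mapping $k'$ to $2k'$, the forward direction via the induced collection of transpositions, and the backward direction via Lemma \ref{lemma:lastchoicereduced} to pass to a reduced stable partition and then to a stable matching using the correspondence for solvable instances. Your extra care in noting that the even cycles can be broken down independently while preserving stability is a point the paper leaves implicit, but it is not a different argument.
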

\begin{proof}
    Note that {\sc RM-Dec-SR-SP} is in NP, as given instance $I$ with minimum regret $r$, integer $k$, and partition $\Pi$, we can check whether $\Pi$ is a minimum regret stable partition of $I$ and verify that it assigns at most $k$ $r$th choices, all in linear time.

    \textcite{CooperPhD} proved that {\sc RM-Dec-SR-SM}, the problem of deciding whether a solvable {\sc sr} instance $I$ admits a stable matching with at most $k$ $r$th-choices and 0 choices higher than $r$, is NP-complete. Note that this is equivalent to saying that the problem of deciding whether $I$ admits a minimum-regret stable matching with at most $k$ $r$th choices is NP-complete. To establish our result, we will show that for an instance $I$ with complete preference lists, $I$ admits a minimum-regret stable matching $M$ with at most $k$ $r$th choices in $p(M)$ if and only if $I$ admits a minimum-regret stable partition $\Pi$ with at most $2k$ $r$th choices in $p(\Pi)$.

    Suppose that $I$ admits a minimum-regret stable matching $M=\{\{a_{i_1},a_{i_2}\}\dots \{a_{i_{n-1}}, a_{i_{n}}\}\}$ with at most $k$ $r$th choices, i.e. at most $k$ agents are assigned a partner that is ranked in the $r$th position of their preference list. Then, immediately, $M$ has an associated minimum-regret stable partition $\Pi = (a_{i_1}\; a_{i_2})\dots(a_{i_{n-1}}\; a_{i_{n}})$, in which at most $k$ agents have a predecessor that is ranked in their $r$th position. As all cycles are transpositions, their predecessors are also their successors, so in the associated profile we will have $p_r(\Pi)\leq 2k$ as desired.

    For the other direction, suppose $I$ admits a minimum-regret stable partition $\Pi$ with $p_r(\Pi)\leq 2k$. By Lemma \ref{lemma:lastchoicereduced} we can furthermore suppose that $\Pi$ is reduced, because if $\Pi$ is not reduced, then we can break its even cycles into a stable collection of transpositions in a way that does not increase the number or index of worst-choice assignments. Because we assume that $I$ is solvable, by Lemma \ref{lemma:minregret}, $\Pi$ corresponds to a minimum regret stable matching $M$ of $I$, and because all cycles in $\Pi$ are transpositions, we know that $p_r(\Pi) = 2p_r(M)$, as desired.
\end{proof}

\begin{restatable}{theorem}{rthchoicehardness}
\label{theorem:rthchoicehardness}
    Let $I$ be an {\sc sr} instance with regret $r$ and let $k$ be an integer. {\sc RM-Dec-SR-SP}, the problem of deciding whether $I$ admits a minimum-regret stable partition with at most $k$ $r$th cohices, is NP-complete, even if $I$ is solvable and even if $I$ is unsolvable.
\end{restatable}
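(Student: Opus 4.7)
The plan is to establish membership in NP and then reduce from the solvable case (handled by Lemma \ref{lemma:rthchoicecomplete}) to the unsolvable case, mirroring the strategy used in Theorem \ref{theorem:fchardness}. Membership in NP is immediate: given a candidate stable partition $\Pi$ and instance $I$, one can verify stability in $O(n^2)$ time by checking T1 and T2, compute the minimum regret $r$ of $I$ in linear time via Theorem \ref{theorem:minregret}, verify that $r(\Pi) = r$, and count $p_r(\Pi)$ in linear time.

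For the unsolvable case, I would adapt the gadget from Theorem \ref{theorem:fchardness}. Given a solvable instance $(I_1, k)$ with minimum regret $r_1$, construct $I_2$ by appending six fresh agents $b_1, \ldots, b_6$ whose mutual preferences force two invariant 3-cycles $(b_1\;b_2\;b_3)(b_4\;b_5\;b_6)$, with the original agents of $I_1$ placed at the tails of each $b_i$'s preference list and each $b_i$ placed at the tails of the original agents' lists. By invariance of odd cycles, every stable partition $\Pi_2$ of $I_2$ decomposes uniquely as $\Pi_1 \cup \{(b_1\;b_2\;b_3), (b_4\;b_5\;b_6)\}$ where $\Pi_1$ is a stable partition of $I_1$, and vice versa; the tail placement rules out cross-instance blocking pairs, which I would verify directly.

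The key design choice is the internal preference structure of the gadget, which controls how the added agents contribute to the regret and profile. I would use the 3-cycle where each $b_i$ ranks its successor first and its predecessor second (readily verified to satisfy T1 and T2 both within each cycle and between the two cycles), so the worst predecessor rank contributed by the gadget is $2$. Provided $r_1 \geq 2$, the minimum regret of $I_2$ equals $r_1$; and if $r_1 \geq 3$ the gadget contributes $0$ to $p_{r_1}(\Pi_2)$, so $(I_1, k)$ is a yes-instance if and only if $(I_2, k)$ is. The edge case $r_1 = 2$ simply shifts $k$ by the constant $6$ contributed by the gadget to $p_2$, and $r_1 = 1$ is degenerate (every minimum-regret stable matching in $I_1$ fully realises $p_1 = n$, making the decision trivial). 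The main obstacle is careful bookkeeping: verifying the stability conditions in the combined instance and tracking the exact contribution of the gadget to the regret and profile so that the reduction is exact.
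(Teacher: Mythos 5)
Your proposal is correct and follows essentially the same route as the paper: NP membership plus a reduction from the solvable case (Lemma \ref{lemma:rthchoicecomplete}) by appending two invariant $3$-cycles, exactly as in Theorem \ref{theorem:fchardness}. In fact your version is more careful than the paper's one-line argument, since you explicitly track how the gadget's predecessor ranks interact with the regret $r$ and the entry $p_r$ (adjusting $k$ when $r=2$ and dismissing $r=1$ as degenerate), a bookkeeping step the paper's proof silently elides.
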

\begin{proof}
    If $I$ is solvable, the result follows immediately from Lemma \ref{lemma:rthchoicecomplete}. 
    
    If $I$ is unsolvable, we can use the same construction as in Theorem \ref{theorem:fchardness} to show that NP-completeness follows from the solvable case.
\end{proof}

This immediately lets us establish hardness results for generous stable partitions.

\begin{restatable}{theorem}{gencompleteness}
\label{theorem:gencompleteness}
    {\sc Gen-Dec-SR-SP} is NP-complete, even if the given instance is solvable and even if it is unsolvable.
\end{restatable}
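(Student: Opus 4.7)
The plan is to follow the same template as Theorems \ref{theorem:fchardness}, \ref{theorem:rmcompleteness}, and \ref{theorem:rthchoicehardness}: establish NP membership, reduce from a previously established hardness result for the solvable case, and then transfer hardness to the unsolvable case via the odd-cycle gadget of Example \ref{table:firstchoice3cycles}. Membership in NP is immediate, since given a candidate $\Pi$ and target vector $\sigma$ one verifies stability of $\Pi$ and checks $\sigma^{rev}\succeq p^{rev}(\Pi)$ in polynomial time.

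For the solvable case I would reduce from {\sc RM-Dec-SR-SP}, shown NP-complete for solvable instances in Theorem \ref{theorem:rthchoicehardness}. Given an instance $(I,k)$ where $I$ has $n$ agents and minimum regret $r$ (computable in linear time by Theorem \ref{theorem:minregret}), I construct $\sigma$ with $\sigma_j=2n$ for $j<r$, $\sigma_r=k$, and $\sigma_j=0$ for $j>r$, so that $\sigma^{rev}$ takes the form $(0,\dots,0,k,2n,\dots,2n)$ with $k$ in position $n-r+1$. Since the sentinel $2n$ strictly exceeds every achievable profile entry (the entries of $p(\Pi)$ sum to $2n$), the lex comparison $\sigma^{rev}\succeq p^{rev}(\Pi)$ collapses cleanly: it forces $\Pi$ to have no assignment worse than $r$, since otherwise some position of $p^{rev}(\Pi)$ before $n-r+1$ would be positive while $\sigma^{rev}$ is zero; combined with the minimality of $r$ this pins the regret of $\Pi$ to exactly $r$. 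At position $n-r+1$ the condition reduces to $p_r(\Pi)\leq k$, and all subsequent positions are automatically dominated by the sentinels. This matches the {\sc RM-Dec-SR-SP} condition exactly, giving the reduction.

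For the unsolvable case I would reuse the 3-cycle gadget from Theorem \ref{theorem:fchardness}: appending the six agents of Example \ref{table:firstchoice3cycles} to a solvable $I_1$ yields an unsolvable $I_2$ whose two new odd cycles are invariant across all stable partitions, giving a natural bijection between the stable partitions of $I_1$ and of $I_2$ under which profiles shift by a fixed vector $c$ (with $c_2=c_3=6$ and $c_j=0$ otherwise). I then set $\sigma'_j=\sigma_j+c_j$ for $j\leq n$ and $\sigma'_j=0$ for the six new rank positions. The trailing zeros of $\sigma'^{rev}$ and $p^{rev}(\Pi_2)$ align across the extension, and the additive shift $c$ applies equally on both sides of the comparison, so $(I_1,\sigma)$ is a yes-instance in the solvable case if and only if $(I_2,\sigma')$ is in the unsolvable case.

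The main obstacle is the careful design of $\sigma$ in the solvable reduction: I must verify that each branch of the lex comparison (regret greater than $r$; regret equal to $r$ with $p_r(\Pi)$ below, equal to, or above $k$) correctly corresponds to acceptance or rejection in {\sc RM-Dec-SR-SP}, and in particular that the sentinel value $2n$ is large enough to dominate every achievable profile entry at all positions past $n-r+1$. Once this is settled, the unsolvable extension via Example \ref{table:firstchoice3cycles} is routine bookkeeping, as the 3-cycle contribution is a constant absorbed into the target vector.
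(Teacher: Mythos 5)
Your proposal is correct and follows essentially the same route as the paper: NP membership is immediate, and the solvable case is handled by reducing from {\sc RM-Dec-SR-SP} with a threshold vector that is saturated below rank $r$, equals $k$ at rank $r$, and is zero above. Two minor differences are worth noting. First, your sentinel value $2n$ is the right choice and is actually more careful than the paper's, which sets $\sigma_i=n$ for $i<r$ and asserts $p_i(\Pi)<\sigma_i$; since the entries of $p(\Pi)$ sum to $2n$, an entry such as $p_1$ can exceed $n$, so the bound of $n$ does not dominate all achievable profile entries, whereas your argument that $p_i<2n$ for $i<r$ (because minimality of $r$ forces a positive entry at some rank $\geq r$) closes this gap cleanly — though note the paper's problem definition restricts $\sigma$ to $\{0,1,\dots,n\}^n$, so strictly speaking the target range should be widened to accommodate your construction. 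Second, for the unsolvable case the paper simply invokes Theorem \ref{theorem:rthchoicehardness}, which already establishes hardness of {\sc RM-Dec-SR-SP} on unsolvable instances, and applies the same $\sigma$-construction there; your re-run of the 3-cycle gadget with the shifted target $\sigma'=\sigma+c$ is a valid but slightly longer path to the same conclusion.
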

\begin{proof}
    Note that {\sc Gen-Dec-SR-SP} is in NP, as given an instance $I$, vector $\sigma$, and partition $\Pi$, we can check that $\Pi$ is a stable partition of $I$ and check that $\sigma^{rev}\succeq p^{rev}(\Pi)$, all in linear time.
    
    We showed above in Theorem \ref{theorem:rthchoicehardness} that {\sc RM-Dec-SR-SP} is NP-complete for solvable and for unsolvable instances. Now given a problem instance $(I, k)$ to {\sc RM-Dec-SR-SP}, $(I, k)$ is a yes-instance if and only if $(I,\sigma)$ is a yes-instance to {\sc Gen-Dec-SR-SP}, where $r$ is the minimum regret of any stable partition of $I$, $\sigma_r = k$, $\sigma_i=n$ for $i<r$ and $\sigma_i=0$ for $i>r$. 
    
    In this construction, both problems ask whether there exists a stable partition $\Pi$ of $I$ where at most $k$ agents are assigned their $r$th choice and no agent is assigned to anyone worse than their $r$th choice, as any stable partition satisfies $p_i(\Pi) < \sigma_i$ for $i<r$, so the result follows.
\end{proof}

Finally, we consider minimising the profile-based (egalitarian) cost of a stable partition. 

Again, the following lemma lets us reduce the search to reduced stable partitions.

\begin{restatable}{lemma}{costreduced}
\label{lemma:costreduced}
    Let $C=(a_{i_1} \; a_{i_2} \dots a_{i_k})$ be an even cycle with cost $c(C)$. Then both collections of transpositions $C_1 = (a_{i_1}\; a_{i_2}) \dots (a_{i_{k-1}} \; a_{i_{k}})$ and $C_2 = (a_{i_1}\; a_{i_{k}}) \dots (a_{i_{k-2}} \; a_{i_{k-1}})$ are stable and either $c(C_1)\leq c(C)$ or $c(C_2)\leq c(C)$.
\end{restatable}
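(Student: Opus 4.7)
The stability of $C_1$ and $C_2$ is immediate from Lemma \ref{lemma:atleasttwo}, so the only content is the cost comparison. My plan is to show the stronger averaging identity $c(C_1) + c(C_2) = 2c(C)$, from which the lemma follows by a pigeonhole argument: if $c(C_1) > c(C)$ then $c(C_2) < c(C)$, and vice versa.

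To set up the computation, for each $j\in\{1,\dots,k\}$ (indices mod $k$) let $s_j := \text{rank}_{i_j}(a_{i_{j+1}})$ be the rank of the successor of $a_{i_j}$ in $C$, and let $p_j := \text{rank}_{i_j}(a_{i_{j-1}})$ be the rank of the predecessor. By definition,
\begin{equation*}
c(C) \;=\; \tfrac{1}{2}\sum_{j=1}^{k}(s_j + p_j).
\end{equation*}
Now I unfold $c(C_1)$ and $c(C_2)$ transposition by transposition. In $C_1$, the pair $(a_{i_j}\; a_{i_{j+1}})$ with $j$ odd makes $a_{i_j}$ assigned to its successor in $C$ (rank $s_j$), and makes $a_{i_{j+1}}$ assigned to its predecessor in $C$ (rank $p_{j+1}$); since successor equals predecessor in a transposition, each of these two ranks is counted twice in the profile. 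After applying the $\tfrac12$ factor from the cost definition this yields
\begin{equation*}
c(C_1) \;=\; \sum_{j\text{ odd}}(s_j + p_{j+1}) \;=\; s_1 + p_2 + s_3 + p_4 + \cdots + s_{k-1} + p_k.
\end{equation*}
An analogous unfolding for $C_2$, whose transpositions are $(a_{i_1}\;a_{i_k})$ together with $(a_{i_j}\;a_{i_{j+1}})$ for $j$ even, gives
\begin{equation*}
c(C_2) \;=\; p_1 + s_k + \sum_{j\text{ even}, \, j\leq k-2}(s_j + p_{j+1}) \;=\; p_1 + s_2 + p_3 + s_4 + \cdots + p_{k-1} + s_k.
\end{equation*}

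The key observation is that the two sums defining $c(C_1)$ and $c(C_2)$ partition the multiset $\{s_1,p_1,s_2,p_2,\dots,s_k,p_k\}$: for every $j$, the term $s_j$ appears in exactly one of the two expressions, and similarly for $p_j$. Hence
\begin{equation*}
c(C_1) + c(C_2) \;=\; \sum_{j=1}^{k}(s_j + p_j) \;=\; 2\,c(C),
\end{equation*}
so $\min\{c(C_1), c(C_2)\} \leq c(C)$, proving the claim. The main obstacle is the bookkeeping in the second paragraph: carefully matching which transposition contributes which rank to the profile and correctly accounting for the factor $\tfrac12$ in the cost definition so that the telescoping identity $c(C_1)+c(C_2)=2c(C)$ falls out cleanly; once this is done the conclusion is immediate.
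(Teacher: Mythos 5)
Your proof is correct and follows essentially the same route as the paper's: both establish the averaging identity $c(C_1)+c(C_2)=2c(C)$ by observing that the successor/predecessor ranks of $C$ split exactly between the two decompositions, and then conclude by pigeonhole. Your version just makes the index bookkeeping slightly more explicit.
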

\begin{proof}
    By Lemma \ref{lemma:atleasttwo}, both $C_1$ and $C_2$ are stable. Now let $c_1,c_2$ be the costs of $C_1, C_2$, respectively. Let rank$_s(C)=(sr_1 \; sr_2 \dots sr_k)$ be the vector of successor ranks for agents $i_1\dots i_k$, i.e., $sr_j$ is the rank of $C(a_{i_j})$ on $a_{i_j}$'s preference list and define rank$_p(C)$ the same for predecessors. Then $c(C)=$ sum(rank$_s(C)$ + rank$_p(C)$), and similarly for $C_1, C_2$, by definition. However, by construction, we have that rank$_s(C_1)=(sr_1 \; pr_2 \; sr_3 \dots pr_k)=$ rank$_p(C_1)$ and rank$_s(C_2)=(pr_1 \; sr_2 \; pr_3 \dots sr_k)=$ rank$_p(C_2)$. Then $c_1+c_2=2c(C)$, so the result follows.
\end{proof}

This helps us in establishing the intractability of the problem.

\begin{restatable}{lemma}{egalcomplete}
\label{lemma:egalcomplete}
    {\sc Egal-Dec-SR-SP} is NP-complete, even if the given instance is solvable. 
\end{restatable}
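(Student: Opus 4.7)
The plan is to reduce from the NP-hardness of finding an egalitarian stable matching in {\sc sr}, established by \textcite{federegal}. Specifically, the decision version (does a solvable {\sc sr} instance $I$ admit a stable matching of cost at most $k$?) is NP-complete, and I would show that for any solvable $I$ and integer $k$, $I$ admits a stable matching of cost at most $k$ if and only if $I$ admits a stable partition of cost at most $k$. Membership in NP is immediate: given a candidate partition $\Pi$, stability can be checked in linear time and $c(\Pi)$ can be computed directly from its profile.

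For the forward direction, given a stable matching $M$ of $I$ with $c(M)\leq k$, the associated collection of transpositions $\Pi$ is a reduced stable partition of $I$. Because every agent in a transposition has the same predecessor and successor, each such agent with partner of rank $s$ contributes exactly $2s$ to $\sum_i p_i(\Pi)\cdot i$; the factor of $\frac{1}{2}$ in the definition of $c(\Pi)$ then yields $c(\Pi)=c(M)\leq k$.

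For the reverse direction, suppose $\Pi$ is a stable partition of $I$ with $c(\Pi)\leq k$. Since $I$ is solvable, $\mathcal{O}_I=\varnothing$, so $\Pi$ consists only of transpositions and even cycles of length $\geq 4$. Applying Lemma \ref{lemma:costreduced} iteratively, each such even cycle $C$ can be replaced by whichever of $C_1, C_2$ has cost at most $c(C)$, yielding a reduced stable partition $\Pi'$ with $c(\Pi')\leq c(\Pi)\leq k$. By Theorem \ref{theorem:correspondence} (with $\mathcal{O}_I=\varnothing$ and $I_T=I$, since no deletions occur without odd cycles), $\Pi'$ corresponds to a stable matching $M$ of $I$, and the same cost calculation as above gives $c(M)=c(\Pi')\leq k$.

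No real obstacle remains once Lemma \ref{lemma:costreduced} is in hand; the only subtlety is bookkeeping around the factor of $\frac{1}{2}$ in the cost definition, which is precisely why it was included: it makes the cost of a matching and the cost of its induced collection of transpositions agree. The construction also preserves the instance (no padding or auxiliary agents needed), so the reduction is polynomial and the problem is NP-hard in the solvable case.
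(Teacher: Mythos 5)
Your proposal is correct and follows essentially the same route as the paper: both reduce from the NP-completeness of the egalitarian stable matching decision problem due to \textcite{federegal}, use the correspondence of Theorem \ref{theorem:correspondence} between stable matchings and reduced stable partitions of a solvable instance (noting $c(\Pi)=c(M)$ for the induced collection of transpositions), and invoke Lemma \ref{lemma:costreduced} to reduce any non-reduced stable partition to a reduced one without increasing cost. Your version just spells out the cost bookkeeping and the fact that $\mathcal{O}_I=\varnothing$ (hence $I_T=I$) a little more explicitly than the paper does.
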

\begin{proof}
    Note that {\sc Egal-Dec-SR-SP} is in NP, as given instance $I$, integer $k$, and partition $\Pi$, we can check whether $\Pi$ is a stable partition of $I$ and verify that it has cost at most $k$, all in linear time.

    \textcite{federegal} proved that {\sc Egal-Dec-SR-SM}, the problem of deciding whether $I$ admits a stable matching with cost at most $k$, is NP-complete, even if $I$ is solvable. To establish our result, we will show that for an instance $I$ with complete preference lists, $I$ admits a stable matching $M$ with cost at most $k$ if and only if $I$ admits a stable partition $\Pi$ with cost at most $k$.

    By solvability of $I$ and Theorem \ref{theorem:correspondence}, any stable matching $M=\{\{a_{i_1},a_{i_2}\}\dots \{a_{i_{n-1}}, a_{i_{n}}\}\}$ corresponds to a reduced stable partition $\Pi = (a_{i_1}\; a_{i_2})\dots(a_{i_{n-1}}\; a_{i_{n}})$ and vice versa. Furthermore, by definition of cost, $c(\Pi)=c(M)$. Also, by Lemma \ref{lemma:costreduced}, there is no gap in minimum cost between reduced stable partitions and non-reduced stable partitions, and we know that we can turn any non-reduced stable partition into a reduced stable partition without an increase in cost. Thus, the result follows.
\end{proof}

\begin{restatable}{theorem}{egalhardness}
    \label{theorem:egalhardness}
    Let $I$ be an {\sc sr} instance and $k$ be an integer. {\sc Egal-Dec-SR-SP}, the problem of deciding whether $I$ admits a stable partition with egalitarian cost at most $k$, is NP-complete, even if $I$ is solvable and even if $I$ is unsolvable.
\end{restatable}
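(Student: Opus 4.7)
The plan is to handle the two claims separately, mirroring the structure of Theorems~\ref{theorem:fchardness} and~\ref{theorem:rthchoicehardness}. Membership of {\sc Egal-Dec-SR-SP} in NP is immediate: given a candidate $\Pi$, we can verify in linear time that it is a stable partition of $I$ and compute $c(\Pi)$ directly from the profile. The solvable case is then delivered by Lemma~\ref{lemma:egalcomplete}, so the only work lies in extending the hardness to unsolvable instances.

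For the unsolvable case I would reuse the padding gadget from Example~\ref{table:firstchoice3cycles}. Starting with an arbitrary solvable instance $I_1$ on $n$ agents together with a cost threshold $k$, form $I_2$ by adding six new agents $a_{n+1},\dots,a_{n+6}$ (identified with $b_1,\dots,b_6$) whose top three preference entries are exactly as in that table; all remaining ranks on both sides of the construction are filled in arbitrarily but with the convention that the original agents are appended at the end of each new agent's preference list and the new agents are appended at the end of each original agent's preference list. Exactly as argued in the proof of Theorem~\ref{theorem:fchardness}, the two triangles $(a_{n+1}\; a_{n+2}\; a_{n+3})$ and $(a_{n+4}\; a_{n+5}\; a_{n+6})$ are stable odd cycles; by Tan's invariance theorem they appear in every stable partition of $I_2$, which in particular makes $I_2$ unsolvable, and any original $\{a_r,a_s\}$ with $a_r\le n<a_s$ fails the blocking-pair condition because each agent prefers its assigned partner to every agent on the opposite side.

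The main quantitative step is to convert the cost threshold. Inspecting the two 3-cycles, each of the six new agents has its successor at rank~$2$ and its predecessor at rank~$3$ in its own preference list, so the agents contribute $6\cdot(2+3)=30$ to $\sum_i p_i\cdot i$ and hence exactly $15$ to the cost $c(\Pi)=\tfrac{1}{2}\sum_i p_i\cdot i$. Since the original agents receive the same partners and same ranks in every stable partition of $I_2$ that they did in the corresponding stable partition of $I_1$ (the additional tail entries are never used), we obtain a cost-preserving bijection $\Pi_1\mapsto\Pi_1\cup\{(a_{n+1}\; a_{n+2}\; a_{n+3})(a_{n+4}\; a_{n+5}\; a_{n+6})\}$ with $c(\Pi_2)=c(\Pi_1)+15$. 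Consequently, $(I_1,k)$ is a yes-instance of {\sc Egal-Dec-SR-SP} over solvable instances iff $(I_2,k+15)$ is a yes-instance over unsolvable instances, and the reduction is clearly polynomial-time.

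The one delicate point, and the step I would double-check carefully, is the precise argument that the cost contribution of the six new agents is truly constant independent of which stable partition of $I_2$ is chosen. This reduces to the invariance of the two odd cycles, combined with Lemma~\ref{lemma:costreduced} to rule out the possibility that the new agents participate in some longer even cycle with cheaper assignments; since every stable partition of $I_2$ contains exactly these two 3-cycles with the successor/predecessor structure read off from Example~\ref{table:firstchoice3cycles}, the additional cost is fixed at $15$ and the reduction goes through.
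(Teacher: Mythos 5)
Your proposal is correct and follows essentially the same route as the paper: membership in NP, the solvable case via Lemma~\ref{lemma:egalcomplete}, and the unsolvable case via the two invariant $3$-cycles of Theorem~\ref{theorem:fchardness} with their fixed cost folded into the threshold $k$. You merely make explicit what the paper leaves implicit (the constant contribution of $15$, computable from the gadget's ranks); the appeal to Lemma~\ref{lemma:costreduced} at the end is unnecessary since odd-cycle invariance already pins down the new agents' assignments, but this does no harm.
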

\begin{proof}
    If $I$ is solvable, the result follows immediately from Lemma \ref{lemma:egalcomplete}. 

    If $I$ is unsolvable, we can use the same construction as in Theorem \ref{theorem:fchardness} to show that NP-completeness follows from the solvable case, as the cost of the added 3-cycles in the reduction is fixed and can be added onto the parameter $k$.
\end{proof}

Note that all NP-hardness arguments above assume a solvable (sub)instance that is computationally difficult in the considered context. However, the smaller the sub-instance, the easier to compute an optimal solution, as there is no choice involved in the odd cycles. In the extreme case, all agents are part of odd cycles, which means that there exists a unique stable partition for the instance which is automatically optimal regardless of the objective.

\subsection{Approximability of the NP-hard Problems}
\label{section:approxofproblems}

After seeing how profile optimality transfers from matchings to partitions in terms of computational complexity, we can similarly consider whether approximation methods carry over. We have shown the NP-hardness of computing egalitarian, first-choice maximal, rank-maximal, regret-minimal, and generous stable partitions. However, there is no known measure of approximability for rank-maximal or generous profiles even for stable matchings, so it is unclear what an approximation could look like for stable partitions. We postpone this for future work and focus on the first-choice maximal, and regret-minimal, and egalitarian problems.

\begin{restatable}{theorem}{fcinapprox}
\label{theorem:fcinapprox}
    The problem of finding a first-choice maximal stable partition $\Pi$ of some {\sc sr} instance $I$ does not admit a polynomial-time approximation algorithm with any constant-factor performance guarantee, even if $I$ is solvable and even if $I$ is unsolvable, unless P$=$NP. 
\end{restatable}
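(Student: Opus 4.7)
The plan is to derive inapproximability for the partition variant from the corresponding (known or separately established) inapproximability of the stable matching variant {\sc FC-Max-SR-SM}, leveraging the profile correspondence between reduced stable partitions and stable matchings developed earlier in the paper. Fix an arbitrary constant $c > 1$ and suppose for contradiction that a polynomial-time $c$-approximation algorithm $\mathcal A$ exists for first-choice maximal stable partitions.

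The transfer argument would proceed as follows. Given a solvable {\sc sr} instance $I$, run $\mathcal A$ to obtain a stable partition $\Pi^*$ with $p_1(\Pi^*) \geq p_1^{SP}(I)/c$, where $p_1^{SP}(I)$ denotes the optimal first-choice count over stable partitions of $I$. Apply Lemma \ref{lemma:firstchoicereduced} in polynomial time to each even cycle of length at least four, converting $\Pi^*$ into a reduced stable partition $\Pi^{**}$ with $p_1(\Pi^{**}) \geq p_1(\Pi^*)$. Since $I$ is solvable, Theorem \ref{theorem:correspondence} implies that $\Pi^{**}$ consists solely of transpositions corresponding to a stable matching $M$ of $I$ with $p_1(M) = p_1(\Pi^{**})/2$. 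A direct counting argument yields $p_1^{SP}(I) = 2 p_1^{SM}(I)$ (every stable matching induces a stable partition with exactly twice the first-choice count, and the partition optimum is attained by a reduced partition by Lemma \ref{lemma:firstchoicereduced}), so $p_1(M) \geq p_1^{SM}(I)/c$. Therefore $\mathcal A$ would yield a polynomial-time $c$-approximation for {\sc FC-Max-SR-SM}, contradicting its inapproximability.

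For the ``even if $I$ is unsolvable'' clause, I would apply the 3-cycle padding construction of Theorem \ref{theorem:fchardness}: appending two invariant 3-cycles to any solvable hard instance produces an unsolvable instance whose first-choice optimum is unchanged (the added agents contribute no first choices), so a $c$-approximation algorithm for unsolvable instances would imply one for solvable instances, contradicting the preceding step.

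The main obstacle is establishing that {\sc FC-Max-SR-SM} admits no constant-factor polynomial-time approximation. Cooper's NP-completeness argument yields only a unit gap in first-choice counts between yes and no instances, which is insufficient on its own. If a suitable inapproximability result for {\sc FC-Max-SR-SM} is available in the literature (for instance, from the work of Simola or Cooper), citing it completes the argument. Otherwise, one would need to prove it via a gap-amplifying reduction, most naturally by engineering an instance family where yes cases admit $\Omega(n)$ first choices while no cases admit only $O(1)$, and then composing with disjoint copies to push the ratio past any fixed $c$. Engineering such an ``all-or-nothing'' gap while respecting stability constraints is the chief technical difficulty.
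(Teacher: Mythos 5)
Your proposal is correct and follows essentially the same route as the paper: the paper likewise transfers the known constant-factor inapproximability of first-choice maximal stable matchings (citing Simola et al.) to stable partitions via the factor-of-two profile correspondence for reduced partitions, and handles the unsolvable case with the same two-invariant-3-cycle padding. The ``obstacle'' you flag is resolved exactly as you anticipate --- the paper simply cites the Simola et al.\ inapproximability result for the matching version rather than proving a gap reduction from scratch.
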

\begin{proof}
    \textcite{simola2021profilebased} proved that there does not exist any constant factor approximation algorithm for the first-choice maximal stable matching problem, unless P$=$NP. Thus, for this proof, it suffices to show that if there was a polynomial-time approximation algorithm for regret-minimal stable partitions with any constant performance guarantee $c$, then we could also solve the respective stable matching problem.

    Indeed, let $I$ be any solvable {\sc sr} instance and let $opt$ be the maximum number of first choices achieved in any stable matching of $I$. We have previously shown in Theorem \ref{theorem:fchardness} that $2opt$ is the maximum number of first choices achieved in the profile of any stable partition of $I$ and that it can be achieved by some reduced stable partition. Thus, if we could approximate a first-choice maximal stable partition problem within $c$, then we could find a reduced stable partition $\Pi$ with at least $\frac{2opt}{c}$ first choices in the profile, corresponding to $\frac{opt}{c}$ first choices in the corresponding stable matching profile, which is a contradiction to the fact that there is no constant factor approximation algorithm for the stable matching version of the problem.

    Furthermore, if we had a $c$-approximation algorithm for unsolvable instances, then we could transform any solvable instance $I$ with $n$ agents into an unsolvable one by adding two forced odd cycles $(a_{n+1} \; a_{n+2} \; a_{n+3})(a_{n+4} \; a_{n+5} \; a_{n+6})$ similar to the proof of Theorem \ref{theorem:fchardness} using the preference list construction from Example \ref{table:firstchoice3cycles}, where $a_{n+i}=b_i$. Clearly, the preferences admit the desired 3-cycles as stability conditions T1 and T2 are satisfied. Furthermore, as the cycles are of odd length, we can immediately conclude that they are unique, and must thus be part of any stable partition. The cycles do not satisfy any additional first choices which could affect the correctness of the reduction to prove the inapproximability of first-choice maximal stable partitions of unsolvable instances within any constant factor.
\end{proof}

For the other problems, we can use some known results about approximating special stable matchings. 

\begin{definition}
    Let $I$ be an instance of {\sc sr}. Then a matching $M$ is a solution to the {\sc Optimal sr} problem if $M$ is a solution to the {\sc sr} problem and $\sum_{a_i\in A_M} c_{i}(M(a_i))$ is minimal, where $A_M$ is the set of agents matched in $M$ and $c_i(a_j)$ is the (real-valued) cost of being matched to $a_j$ from $a_i$'s point of view. 
\end{definition}

\textcite{teo97} showed that if the cost function furthermore satisfies that for each agent $a_{k}$ with complete ordinal preferences $a_{k_1}\succ_k \dots \succ_k a_{k_{n-1}}$, there exists an agent $a_{k_l}$ such that $c_k(a_{k_{m-1}})\geq c_k(a_{k_{m}})$ for all $1<m\leq l$ and $c_k(a_{k_{m'}})\leq c_k(a_{k_{m'+1}})$ for all $l\leq m'< n-1$, then {\sc Optimal sr} is approximable within a factor of 2 using an integer programming approach computable in polynomial time. 

Note that for first-choice maximal stable partitions, it is not possible to apply this approximation method, as there is no suitable cost function that satisfies the additional criteria. However, \textcite{simola2021profilebased} previously showed how the 2-approximation of {\sc Optimal sr} can be used to approximate regret-minimal stable matchings, and we showed that this is also true for stable partitions.

\begin{restatable}{theorem}{rmapprox}
\label{theorem:rmapprox}
    The problem of finding a regret-minimal stable partition $\Pi$ of some {\sc sr} instance $I$ admits a polynomial-time approximation algorithm with a performance guarantee of 2. 
\end{restatable}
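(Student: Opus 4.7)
The plan is to reduce the problem to approximating a regret-minimal stable matching on a solvable instance and then invoke the 2-approximation technique used by Simola and Cseh via Teo–Sethuraman. First, using Theorem \ref{theorem:minregret}, I compute the minimum regret $r$ of $I$ in linear time. Then, since Lemma \ref{lemma:lastchoicereduced} guarantees that any non-reduced minimum-regret stable partition can be broken into a reduced one without increasing the number of $r$-th choices achieved, it suffices to search among reduced minimum-regret stable partitions. By Lemma \ref{lemma:dummycorrespondence}, reduced stable partitions of $I$ are in bijection with stable matchings of the padded solvable instance $I_P$, and by construction the dummy agents are matched to their first choice, so they contribute nothing to the profile at rank $r$.

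Next, I truncate each preference list in $I_P$ after rank $r$, obtaining an instance $I_P^r$ whose stable matchings are exactly the stable matchings of $I_P$ with regret at most $r$; here I would appeal to the standard fact that truncating at a fixed regret preserves stability. I then define the Optimal-{\sc sr} cost function
\begin{equation*}
    c_i(a_j) = \begin{cases} 0 & \text{if }a_j\text{ is ranked }<r\text{ by }a_i,\\ 1 & \text{if }a_j\text{ is ranked exactly }r\text{ by }a_i.\end{cases}
\end{equation*}
This cost function is non-decreasing along each agent's preference list, so it is trivially unimodal (a ``valley" with the decreasing segment of length zero), meeting the hypothesis of Teo and Sethuraman's 2-approximation. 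Running their polynomial-time algorithm on $I_P^r$ with this cost function yields a stable matching $M$ with $\sum_{a_i} c_i(M(a_i)) \leq 2\cdot \mathrm{OPT}_{\mathrm{match}}$, where $\mathrm{OPT}_{\mathrm{match}}$ is the minimum number of agents assigned to their $r$-th choice over all stable matchings of $I_P^r$.

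Finally, I map $M$ back to a reduced stable partition $\Pi = (M\setminus D)\cup \mathcal{O}_I$ using Lemma \ref{lemma:dummycorrespondence}. Each transposition contributes to both a successor and a predecessor entry in the profile, so $p_r(\Pi) = 2\sum_{a_i} c_i(M(a_i))$, and similarly $p_r(\Pi^*) = 2\cdot \mathrm{OPT}_{\mathrm{match}}$ for any optimal reduced regret-minimal stable partition $\Pi^*$. Combining these, $p_r(\Pi) \leq 2\, p_r(\Pi^*)$, and because the reduction preserves optimality among all (not necessarily reduced) minimum-regret stable partitions, the approximation factor of $2$ transfers to the full problem on $I$.

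The main obstacle I anticipate is verifying that the padding plus truncation construction genuinely preserves the optimum: specifically, that every reduced minimum-regret stable partition of $I$ corresponds to a stable matching of the truncated padded instance $I_P^r$ with the same number of $r$-th choice assignments (counted with the factor of two), and vice versa. This requires checking that truncation does not destroy any relevant stable matchings in $I_P$ and that the dummy-agent pairs remain inert under the cost function, both of which follow from Lemma \ref{lemma:dummycorrespondence} and the fact that dummies are each other's first choices, so they have cost zero and cannot influence regret.
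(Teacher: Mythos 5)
Your proposal is correct and follows essentially the same route as the paper's own proof: pad to the solvable instance $I_P$, truncate preference lists at rank $r$, apply the Teo--Sethuraman 2-approximation for {\sc Optimal sr} with the 0/1 threshold cost function, and map the resulting matching back to a reduced stable partition. The only small bookkeeping point is that $p_r(\Pi)$ also includes the invariant contribution $p_r(\mathcal{O}_I)$ from the odd cycles (so $p_r(\Pi)=2\sum_i c_i(M(a_i))+p_r(\mathcal{O}_I)$ rather than $2\sum_i c_i(M(a_i))$), but since the same additive term appears in $p_r(\Pi^*)$ the factor-2 bound is unaffected.
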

\begin{proof}
    Again, we know that because all odd cycles are invariant and can be found in polynomial time, we show that we can 2-approximate the even cycles, which actually gives an overall performance guarantee significantly better than 2 if the number of agents in odd cycles is large compared to the number of agents in the instance. 
    
    Let $\Pi_p$ be a minimum regret stable partition of $I$ with profile $p$, which always exists and we have shown to be computable in linear time, and let $r$ be the minimum regret. If $r=1$ or $p_r=p_r\vert_{\mathcal{O}_I}$, i.e. all profile entries at position $r$ are caused by half-matchings in odd cycles, there is no variability and we found an optimal solution. Otherwise, we showed that $\Pi_p=M'\mathcal{O}_I$ for a minimum-regret stable matching $M$ of $I_P$ with dummy agents removed. Some match in $M$ causes an entry at position $r$ in the profile of $\Pi$, none of which can be with a dummy agent as they are all matched to each other and rank their match first in their preference list. We define the cost function 
    \begin{equation*}
    c_i(a_j) = \begin{cases}
        1 &\text{rank}_i(a_j)\geq r\\
        0 &\text{otherwise}.
    \end{cases}
    \end{equation*}
    Then, $c$ satisfies the condition by \textcite{teo97} as preference lists correspond to a cost sequence $0,\dots,0,1,\dots,1$ with $r-1$ zeros and $n-r$ ones. Furthermore, as we know that every agent in $S_P$ can be matched to some other agent of rank at least as good as $r$, we can truncate the preference lists of every agent at rank $r+1$. Then $\sum_{a_i\in A_M} c_i(M(a_i))$ is exactly the number of agents in $I_P$ that are matched to their $r$th choice. Now, let $\Pi_{opt}$ be an optimal stable partition, then we can apply the 2-approximation algorithm for {\sc Optimal SR} with cost function $c$ to $S_P$ to get a matching $M''$ with dummy agents removed such that $p_r(M''\mathcal{O}_I) = p_r(M'')+p_r(\mathcal{O}_I) \leq 2 (p_r(M_{opt})+p_r(\mathcal{O}_I))=2p_r(\Pi_{opt})$ as desired.
\end{proof}

Similarly, we showed that there also exists a suitable cost function to approximate egalitarian stable partitions using this method.

\begin{restatable}{theorem}{egalapprox}
\label{theorem:egalapprox}
    The problem of finding an egalitarian stable partition $\Pi$ of some {\sc sr} instance $I$ admits a polynomial-time approximation algorithm with a performance guarantee of 2. 
\end{restatable}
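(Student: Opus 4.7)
The plan is to mimic the strategy used in the proof of Theorem \ref{theorem:rmapprox}: construct the padded instance $I_P$ from $I$, define a cost function on $I_P$ satisfying the monotonicity hypothesis of \textcite{teo97}, apply their polynomial-time $2$-approximation algorithm for {\sc Optimal sr} to obtain a stable matching $M''$ of $I_P$, and return the stable partition $\Pi'' = (M'' \setminus D) \cup \mathcal{O}_I$ of $I$, where $D$ is the set of dummy fixed pairs. By Lemma \ref{lemma:dummycorrespondence} this is a reduced stable partition of $I$, and by Lemma \ref{lemma:costreduced} it is enough to compare its cost to that of an optimal reduced stable partition $\Pi_{opt} = M_{opt}\mathcal{O}_I$ of $I$.

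The crux is the choice of cost function. The natural choice $c_i(a_j) = \text{rank}_i(a_j)$ assigns each dummy agent (matched to its first choice in every stable matching of $I_P$) a cost of $1$, which inflates the Teo--Sethuraman objective by an additive term linear in the number of dummies that cannot in general be absorbed into $c(\mathcal{O}_I)$. I therefore propose the shifted cost $c_i(a_j) = \text{rank}_i(a_j) - 1$, so every dummy contributes $0$ while the monotonicity condition still holds trivially with $l = 1$: costs along every preference list in $I_P$ are strictly increasing, because the padding preserves the rank positions of non-dummy entries. Under this shifted cost, the objective value of any stable matching $M \cup D$ of $I_P$ equals the usual partition cost of the transposition part $M$ minus $n_1 = \vert A \setminus A(\mathcal{O}_I)\vert$, since the dummies contribute nothing and each of the $n_1$ non-dummy agents loses one unit to the shift.

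Feeding this identity into the $2$-approximation inequality and then adding the invariant odd-cycle contribution $c(\mathcal{O}_I)$ gives
\[
c(\Pi'') \leq 2\,c(M_{opt}) - n_1 + c(\mathcal{O}_I) \leq 2\,c(M_{opt}) + 2\,c(\mathcal{O}_I) = 2\,c(\Pi_{opt}),
\]
where the final inequality is immediate from $n_1, c(\mathcal{O}_I) \geq 0$. All preprocessing (Tan's algorithm, construction of $I_T$ and $I_P$), the $2$-approximation call on $I_P$, and the post-processing run in polynomial time.

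The main technical obstacle is the careful accounting around the dummy fixed pairs and the invariant odd cycles: the $-1$ shift in the cost function is essential, as it silently absorbs the dummies' additive contribution into a free $-n_1$ slack so that the approximation ratio survives the detour through $I_P$. Without this shift, the dummies inject an additive loss equal to their number, which can only be controlled when $c(\mathcal{O}_I)$ happens to be large enough, and this is not always the case. A secondary point to verify is that the shifted cost function still fits the monotonicity framework of \textcite{teo97}, but this follows immediately because costs along each preference list of $I_P$ remain non-decreasing.
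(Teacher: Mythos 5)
Your proposal follows essentially the same route as the paper's proof: pad $I$ to $I_P$, invoke the Teo--Sethuraman $2$-approximation for {\sc Optimal sr} with a unimodal cost function, and split the partition cost as $c(\Pi)=c(M)+c(\mathcal{O}_I)$, with Lemmas \ref{lemma:dummycorrespondence} and \ref{lemma:costreduced} licensing the restriction to reduced stable partitions. The one genuine difference is your cost function: the paper uses the unshifted $c_i(a_j)=\mathrm{rank}_i(a_j)$ and asserts directly that the returned matching ``with dummy agents removed'' is within a factor $2$ of $c(M')$, whereas the Teo--Sethuraman guarantee applies to the full objective on $I_P$, dummies included. Since every dummy is matched to its first choice in every stable matching of $I_P$, the unshifted analysis only yields $c(M)\leq 2c(M')+k$ with $k$ the number of dummies, an additive slack that need not be dominated by $c(\mathcal{O}_I)$ (one small odd cycle can force up to $\Theta(n)$ dummies). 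Your $-1$ shift makes the dummies cost-free while keeping the costs non-negative and monotone along every list (so the hypothesis of \textcite{teo97} still holds with $l=1$), and the resulting $-n_1$ slack has the right sign, so the chain $c(\Pi'')\leq 2c(M_{opt})-n_1+c(\mathcal{O}_I)\leq 2c(\Pi_{opt})$ goes through cleanly. In short, your argument is correct and, on this one point of accounting for the dummy pairs, is actually tighter than the proof in the paper.
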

\begin{proof}
    We will show that we can define a suitable cost function $c$ such that when applying the 2-approximation algorithm to the {\sc Optimal SR} problem to the solvable instance $I_P$ derived from $I$ by the padding method, we find a solution to the egalitarian stable partition problem with a performance guarantee of 2. Even more, we know that because all odd cycles are invariant and can be found in polynomial time, we show that we can 2-approximate the even cycles, which actually gives an overall performance guarantee significantly better than 2 if the number of agents in odd cycles is large compared to the number of agents in the instance. Indeed, the natural cost function $c_i(a_j)=$ rank$_i(a_j)$ works to approximate an egalitarian stable matching within 2, as $c_k(a_{k_i})$ is strictly increasing for $i$ from 1 to $n-1$. Also, the solution $M$ will satisfy $\sum_{a_i\in A_M} c_i(M(a_i)) =c(M)$ minimal as desired. 

    Now we want to show that $\Pi=M\mathcal{O}_I$, where $M$ is a 2-approximation of an egalitarian stable matching of $I_P$ with dummy agents removed and $\mathcal{O}_I$ are the invariant odd cycles of $I$, is a 2-approximation of an optimal egalitarian stable partition $\Pi_{opt} = M'\mathcal{O}_I$ of $I$, where $M'$ is an egalitarian stable matching of $I_P$ as previously shown.  Recall that $c(\Pi) = \frac{1}{2} \sum_{1\leq i< n} p_i*i =\frac{1}{2}\sum_{a_i}(\text{rank}_i(\Pi(a_i))+\text{rank}_i(\Pi^{-1}(a_i)))$. Thus, $c(\Pi) = c(M)+c(\mathcal{O}_I)$ and similarly for $c(\Pi_{opt})$, such that $c(\Pi)>2c(\Pi_{opt})$ implies $c(M)>2c(M')$, a contradiction. Therefore, the method gives a 2-approximation for the egalitarian stable partition problem and can be computed in polynomial time.
\end{proof}

These results are consistent with the approximability results of stable matchings, which fits the theme of compatibility and close correspondence throughout this research.

\section{Experimental Results}
\label{section:experiments}
\captionsetup[table]{name=Table}

To obtain an intuition for the structures studied in this paper, we implemented the new algorithms and ran experiments on random {\sc sr} instances, measuring various properties of the solutions. Specifically, we were interested in the number of reduced and non-reduced stable partitions and the cycles contained within them to gain an understanding for the structural properties of such partitions. As the odd cycles are clearly of particular importance for stable partitions and for the solvability of {\sc sr} instances altogether, we also aimed to gain a deeper understanding of the likelihood of existence and lengths of the odd cycles and the number of agents contained within them, extending empirical results provided by \textcite{mertens15small,mertens15random}.

For our experiments, we generated synthetic {\sc sr} instances with preferences chosen uniformly at random with the number of agents $n$ ranging from 2 to 500. For each $n$, we generated 10,000 seeded random instances over which the following results are averaged. For all instances, we computed a stable partition using a custom implementation of the Tan-Hsueh algorithm \cite{tanhsueh} and captured the numbers and lengths of cycles observed. Furthermore, for a subset of these instances with $n$ between 10 and 500, we enumerated their reduced and non-reduced stable cycles and stable partitions using the novel algorithms presented in Section \ref{section:cyclesandpartitions}. All implementations were written in Python and all computations were performed on the {\tt fatanode} cluster.\footnote{See \href{https://ciaranm.github.io/fatanodes.html}{https://ciaranm.github.io/fatanodes.html} for technical specifications.} 

Although measuring runtime of these algorithms was not an objective in these experiments, it might be interesting to know that all algorithms, even those with exponential running time bounds, took, on average, less than 1 second of CPU time for instances with $n\leq 100$, but up to around one hour on some outlier instances with $n=500$.

We denote the estimated value of $P_n$ by $\hat{P}_n$, that is, the proportion of generated instances with $n$ agents that were found to be solvable. Table \ref{table:numstructures} shows $\hat{P}_n$ and the average number of reduced stable partitions ($RP$) and all stable partitions ($P$), as well as the number of reduced stable cycles ($RSC$) and all stable cycles ($SC$) contained within them. With regards to $\hat{P}_n$, our results closely match previously known estimates for uniform random instances due to \textcite{pittelirving94,mertens05}, with $\hat{P}_n$ decreasing from 100\% for $n=2$ down to 64.50\% for $n=100$ and 45.00\% for $n=500$. With regards to the other properties, it is interesting to see that the expected number of stable partitions remains small, averaging fewer than 7 stable partitions and fewer than 3 reduced stable partitions even for $n=500$. Furthermore, the results suggest that the expected number of stable cycles grows slightly faster than that of reduced stable cycles, averaging a bit higher than the necessary minimum number of cycles, which is approximately $\frac{n}{2}$ (assuming most cycles are of length 2, as we will see later). However, the maximum number of stable partitions observed is 729 for an instance of size $n=500$, whereas the highest observed number of reduced stable partitions for any instance is 65, admitted by an instance with 300 agents. Interestingly, the maximum number of reduced stable cycles and stable cycles admitted by any instance are 315 and 323, respectively, which is a different $n=500$ instance than previously. Thus, outlier instances that admit many more such structures than average ones are present and observable.

\begin{table}[!htb]
    \centering
    \small
    \begin{tabular}{c c c c c c c c c c c c}
        \toprule
        & \textbf{10} & \textbf{20} & \textbf{30} & \textbf{40} & \textbf{60} & \textbf{80} & \textbf{100} & \textbf{200} & \textbf{300} & \textbf{400} & \textbf{500} \\
        \midrule
        $\hat{P_n}$ (\%) & 88.82 & 82.38 & 79.23 & 75.74 & 70.15 & 67.36 & 64.50 & 56.28 & 51.34 & 46.93 & 45.00 \\
        \midrule
        $\vert RP \vert$ & 1.38 & 1.63 & 1.76 & 1.87 & 1.97 & 2.06 & 2.13 & 2.43 & 2.63 & 2.70 & 2.83 \\
        $\vert P \vert$  & 1.76 & 2.34 & 2.67 & 2.94 & 3.26 & 3.54 & 3.83 & 4.96 & 5.80 & 6.10 & 6.72 \\
        \midrule
        $\vert RSC \vert$ & 5.79 & 11.38 & 16.64 & 21.90 & 32.03 & 42.22 & 52.24 & 102.61 & 152.83 & 202.59 & 252.80 \\
        $\vert SC \vert$ & 6.17 & 11.98 & 17.34 & 22.67 & 32.86 & 43.09 & 53.14 & 103.63 & 153.93 & 203.72 & 253.98 \\
        \bottomrule
    \end{tabular}
    \caption{Average Cardinality of $RP$, $P$, $RSC$ and $SC$ for different $n$.}
    \label{table:numstructures}
\end{table}

Figure \ref{fig:cyclesvn} shows the expected number of stable cycles of certain odd lengths as the number of agents grows. It is clear that for $n=4$, every unsolvable instance admits exactly one stable cycle of length 1 (henceforth referred to as a 1-cycle) and one stable cycle of length 3 (3-cycle). However, it is interesting to see that while the likelihood of 1-cycles decreases quickly and is essentially 0 for $n=100$, the 3-cycles remain the most common odd cycles throughout. Furthermore, the expected numbers of cycles of length 5 or longer remain relatively low compared to that of 3-cycles, although we can see a clear hierarchy in the sense that (for $n$ even between 4 and 100) we should expect more 5-cycles than 7-cycles than 9-cycles, etc. The numbers remain so low though that it motivates the question of how many odd cycles can be expected altogether.

\begin{figure}[!htb]
    \centering
    \begin{tikzpicture}
        \begin{axis}[
            width=.7\textwidth,
            height=10cm,
            xlabel={Number of Agents ($n$)},
            ylabel={Number of Cycles},
            legend cell align={left},
            legend style={
                at={(1.05,1)}, 
                anchor=north west,
                column sep=1ex, 
            },
            grid=both,  
            grid style={dashed, gray!30},  
            tick style={black},
            cycle list name=color list, 
            every axis plot/.append style={thick},  
        ]
        
        \addplot[acmDarkBlue, mark=*] table [x=n, y=1-cycles] {data/data.txt};
        \addlegendentry{Cycles of Length 1}

        \addplot[acmGreen, mark=square*] table [x=n, y=3-cycles] {data/data.txt};
        \addlegendentry{Cycles of Length 3}

        \addplot[acmPink, mark=triangle*] table [x=n, y=5-cycles] {data/data.txt};
        \addlegendentry{Cycles of Length 5}

        \addplot[acmOrange, mark=diamond*] table [x=n, y=7-cycles] {data/data.txt};
        \addlegendentry{Cycles of Length 7}

        \addplot[acmYellow, mark=o] table [x=n, y=9-cycles] {data/data.txt};
        \addlegendentry{Cycles of Length 9}

        \addplot[acmLightBlue, mark=star] table [x=n, y=11-cycles] {data/data.txt};
        \addlegendentry{Cycles of Length 11}

        \end{axis}
    \end{tikzpicture}
    \caption{Expected Number of Odd-Length Cycles in Unsolvable Instances}
    \label{fig:cyclesvn}
\end{figure}

Table \ref{table:oddvagents} gives an indication of the numbers and lengths of odd cycles to expect in unsolvable instances. We can see that, indeed and as expected based on the previous figure, in expectation, an instance only admits a very small number of odd cycles and this parameter grows very slowly. Notice that any even-sized unsolvable instance must admit at least two odd cycles, so the results indicate that most instances do not admit any more than those two. This intuition is supported when looking at the maximum number of odd cycles admitted by any instance in the dataset, which is 2, 4 and 6 for $n\in \{10,100,500\}$, respectively. Furthermore, the table shows that the average number of agents in odd cycles does grow (not steeply, but significantly faster than the expected number of odd cycles), and with it the average length of these cycles. Indeed, the range of values observed here is much wider, with as many as 92 agents contained in odd cycles observed for an instance with $n=100$.

\begin{table}[!htb]
    \centering
    \begin{tabular}{c c c c}
        \toprule
        & \textbf{10} & \textbf{100} & \textbf{500} \\
        \midrule
        \text{Average odd cycle length}   & 2.61 & 6.09 & 10.41 \\
        \text{Average number of odd cycles}       & 2.00 & 2.04 & 2.15 \\
        \text{Average number of agents in odd cycles} & 5.23 & 12.40 & 22.35 \\
        \bottomrule
    \end{tabular}
    \caption{Odd-Length Cycle Properties for different $n$}
    \label{table:oddvagents}
\end{table}

Based on these preliminary experimental results, it would appear that, for small and moderate instances sizes (of up to 500 agents), the enumeration of stable partitions (both reduced and non-reduced) is often feasible in practice, and so even though we proved that the problems of finding egalitarian, first-choice maximal and regret-minimal stable partitions are NP-hard, computing such stable partitions can be feasible in many cases by simply iterating over the (small) set of all reduced stable partitions (we established in Section \ref{section:complexityofproblems} that there is no difference between the objective values achieved by optimal reduced and non-reduced stable partitions for these problems). Furthermore, the results suggest that while the average number of odd cycles is near-constant at 2, the average odd cycle length does grow. However, this motivates maximum stable matchings, computable in $O(n^2)$ time \cite{tan91_2}, further as a suitable solution concept in some settings, as only one agent from each odd cycle remains unmatched and we expect this number to be small even for $n=500$. 

\section{Discussion and Open Problems}
\label{section:discussion}

In this paper, we have shown various new insights into the structure of stable partitions. Specifically, we showed that there is a close correspondence between reduced stable partitions of unsolvable instances and stable matchings of certain solvable sub-instances, how even cycles can be constructed from transpositions and partial cycles, with the highlight being a result proving that a predecessor-successor pair is sufficient to find the unique stable cycle of even length larger than 2 (if one exists). Using these insights, we also provided the first algorithms to efficiently enumerate all reduced and non-reduced stable cycles and partitions of a given problem instance, which could be useful for a variety of cases. Furthermore, we showed how the stable matching notions of profile- and cost-optimality carry over to stable partitions and which of the problems are efficiently computable. Although we assumed complete preference lists throughout this work, the same results and algorithms should apply to {\sc sri} instances ({\sc sr} instances with incomplete lists). This bridges the gap between stable matchings, which do not always exist, and fractional matchings, which always exist but may not be useful in certain practical applications. 

For a comparative overview of the complexity results previously known and newly established, see Table \ref{table:complexity}. There, we consider maximum welfare stable fractional matchings as egalitarian for consistency. With regards to approximation bounds for the optimisation versions of the NP-complete decision problem variants above, Table \ref{table:approx} compares the existing results for stable matchings to the new results for stable partitions presented here.

\renewcommand{\arraystretch}{1.2}

\begin{table}[!htb]
    \centering
    \small
    \begin{tabular}{c c c c}
        \toprule
        & \textbf{M} & \textbf{SP} & \textbf{FM} \\ 
        \midrule
        Always Exists & No & Yes & Yes \\ 
        Find Any (if exists) & $O(n^2)$ \cite{irving_sr} & $O(n^2)$ \cite{tan91_1} & $O(n^2)$ \cite{Caragiannis2019} \\ 
        \midrule\noalign{\vskip -2pt}
        Find $SP(I)$/$SC(I)$ & $O(n^3)$ \cite{federegalapprox} & \cellcolor[HTML]{b3ffb3}$O(n_1^2(n^2+n_1))$ (Thm \ref{theorem:findallsc}) & \cellcolor{lightgray} \\ 
        Find $FP(I)$/$FC(I)$ & $O(n^2)$ \cite{gusfield89} & \cellcolor[HTML]{b3ffb3}$O(n^2)$ (Thm \ref{theorem:findallfc}) & \cellcolor{lightgray} \\ 
        \midrule\noalign{\vskip -2pt}
        Find $S(I)$/$RP(I)$ & $O(\vert S(I)\vert n + n^2)$ \cite{federegalapprox} & \cellcolor[HTML]{b3ffb3}$O(\vert S(I_T)\vert n + n^2)$ (Thm \ref{theorem:findallrp}) & \cellcolor{lightgray} \\ 
        Find $P(I)$ & \cellcolor{lightgray} & \cellcolor[HTML]{b3ffb3}$O((\vert P(I)\vert + n_1)n_1n^2+n_1^3)$ (Thm \ref{theorem:enumpi}) & \cellcolor{lightgray} \\ 
        \midrule\noalign{\vskip -2pt}
        Min Regret & $O(n^2)$ \cite{gusfield89} & \cellcolor[HTML]{b3ffb3}$O(n^2)$ (Thm \ref{theorem:minregret}) & \cellcolor{lightgray} \\ 
        Egalitarian & NP-hard \cite{federegal} & \cellcolor[HTML]{ff9999}NP-hard (Thm \ref{theorem:egalhardness}) & NP-hard \cite{Caragiannis2019} \\ 
        First-Choice Max & NP-hard \cite{CooperPhD} & \cellcolor[HTML]{ff9999}NP-hard (Thm \ref{theorem:fchardness}) & \cellcolor{lightgray} \\ 
        Rank-Max & NP-hard \cite{CooperPhD} & \cellcolor[HTML]{ff9999}NP-hard (Thm \ref{theorem:rmcompleteness}) & \cellcolor{lightgray} \\ 
        Regret-Min & NP-hard \cite{CooperPhD} & \cellcolor[HTML]{ff9999}NP-hard (Thm \ref{theorem:rthchoicehardness}) & \cellcolor{lightgray} \\ 
        Generous & NP-hard \cite{CooperPhD} & \cellcolor[HTML]{ff9999}NP-hard (Thm \ref{theorem:gencompleteness}) & \cellcolor{lightgray} \\ 
        \bottomrule
    \end{tabular}
    \caption{Complexity classification of different {\sc sr} problems related to Stable Matchings (M), Stable Partitions (SP), and Fractional Matchings (FM). Contributions in colour. Not applicable or not a natural problem to study in gray. $n$ is the number of agents in the instance, $n_1$ of which are not in invariant odd cycles.}
    \label{table:complexity}
\end{table}

\begin{table}[!htb]
    \centering
    \small
    \begin{tabular}{c c c}
        \toprule
        & \textbf{Stable Matching} & \textbf{Stable Partition} \\ 
        \midrule\noalign{\vskip -2pt}
        Egalitarian & 2-approximable \cite{federegalapprox} & \cellcolor[HTML]{b3ffb3}2-approximable (Thm \ref{theorem:egalapprox}) \\ 
        Regret-Min & 2-approximable \cite{simola2021profilebased} & \cellcolor[HTML]{b3ffb3}2-approximable (Thm \ref{theorem:rmapprox}) \\ 
        \midrule\noalign{\vskip -2pt}
        First-Choice Max & Not approx within any $c\in\mathbb{R}$ \cite{simola2021profilebased} & \cellcolor[HTML]{ff9999}Not approx within any $c\in\mathbb{R}$ (Thm \ref{theorem:fcinapprox}) \\ 
        \bottomrule
    \end{tabular}
    \caption{Approximation Results for Different {\sc sr} Problems. Contributions in color.}
    \label{table:approx}
\end{table}

Overall, we have shown a close correspondence between stable matchings and stable partitions in every way, from instance and solution structure to the computational complexity and approximability of problem variants. However, there are still many open problems related to stable partitions and other topics of this paper. One direction for future work is to improve on the complexity results presented here, for example by investigating whether a technique similar to the {\sc 2-sat} reduction by \textcite{federegalapprox} in the enumeration of $SP(I)$ and $S(I)$ or the transformations between non-bipartite to bipartite instances presented by \textcite{deanmunshi10} can speed up the enumeration of $SC(I)$ and $P(I)$. Another algorithmic question is whether we can find a stable partition with the maximum number of agents in cycles of length longer than 2 efficiently. One could also explore stable fractional matchings more, looking into fast enumeration of such matchings and partnerships, or adapting optimality criteria similar to the ones formulated here. 

Currently, it is not clear what a suitable definition for an approximation ratio could look like for the rank-maximal- and generous-profile problems, neither for matchings nor for partitions. On the structural side, it would also be interesting to study whether stable half-matchings form a semi-lattice, as is the case for stable matchings \cite{gusfield89}. There might also be a closer relationship than currently discovered between the structure of stable partitions and approximate or exact solutions to computationally difficult problems such as the almost-stable matching problem, where some structural results presented here might help. 

Finally, it could be investigated how stable partitions can be useful in the presence of ties in the preference lists, for example in approximating large weakly stable matchings, or how stable partitions can be generalised to the many-to-many setting, such as the {\sc Stable Fixtures} problem \cite{Irving2007}.

\paragraph*{Acknowledgements} We would like to thank the anonymous MATCH-UP and SAGT reviewers for their helpful suggestions. 

\printbibliography

\end{document}